\newcommand{\ie}{i.e., }
\newcommand{\eg}{e.g., }
\newcommand{\wrt}{w.r.t.\ }
\let\oldnl\nl
\newcommand{\nonl}{\renewcommand{\nl}{\let\nl\oldnl}}
\newcommand{\RR}{\mathbf{R}\xspace}
\newcommand{\Q}{\mathbf{Q}\xspace}
\newcommand{\N}{\mathbf{N}\xspace}
\newcommand{\Z}{\mathbf{Z}\xspace}
\newcommand{\STN}{STN\xspace}
\newcommand{\MPG}{MPG\xspace}
\newcommand{\HTN}{HyTN\xspace}
\newcommand{\DTP}{DTP\xspace}
\newcommand{\HTNC}{HyTN-Consistency\xspace}
\newcommand{\CSTP}{CSTP\xspace}
\newcommand{\CSTN}{CSTN\xspace}
\newcommand{\eHyDCC}{CHyTN-$\epsilon$-DC\xspace}
\newcommand{\GHyDCC}{General-CHyTN-DC\xspace}
\newcommand{\HyDCC}{CHyTN-DC\xspace}
\newcommand{\DCC}{CSTN-DC\xspace}
\newcommand{\scHst}{\emph{scHst}\xspace}
\newcommand{\Con}{\emph{Con}\xspace}
\newcommand{\Sub}{\emph{Sub}\xspace}
\def\Ord{{\cal O}}
\def\C{{\cal C}}
\def\H{{\cal H}}
\def\A{{\cal A}}
\def\S{{\cal S}}
\newcommand{\figref}[1]{Fig.~\ref{#1}}
\newtheorem{Thm}{Theorem}%[section]
\newtheorem{Cor}[]{Corollary}
\newtheorem{Lem}[]{Lemma}
\theoremstyle{plain}
\newtheorem{Def}{Definition}
\newtheorem{Rem}{Remark}
\newtheorem{Ex}{Example}
\newcommand{\removelatexerror}{\let\@latex@error\@gobble}
\tikzstyle{node}=[circle,draw,inner sep=2pt,transform shape,minimum size=1.75em]
\tikzstyle{task}=[draw,rectangle,inner sep=1.5pt,transform shape
\tikzstyle{connector}=[draw,diamond,shape aspect=1,inner sep=1pt,transform shape
\tikzstyle{multiHead}=[dashed,transform shape]
\tikzstyle{multiTail}=[dotted,thick,transform shape]
\tikzstyle{StartCase}=[circle,draw,minimum size=.75cm,transform shape]
\tikzstyle{EndCase}=[circle,draw,ultra thick,minimum size=.75cm,transform shape]
\tikzstyle{smallLabel}=[font=\sffamily\scriptsize,inner sep=1pt,transform shape]
\tikzstyle{timeLabel}=[smallLabel,midway,transform shape]
\tikzstyle{minWidth}=[text width=.9cm]
\tikzstyle{info}=[rounded corners,fill=yellow,text width=1cm,text centered,inner sep=1pt]
\tikzstyle{infoLine}=[thin,decorate,decoration={snake,amplitude=.4mm, segment length=2mm, post length=1mm}]
\tikzstyle{edgeLabel}=[font=\tiny,sloped]
\tikzstyle{infoRow}=[rounded corners,fill=yellow,inner sep=1pt]
\tikzstyle{punto}=[circle,draw,fill=black,minimum size=2bp,inner sep=0pt,outer sep=0pt]
\tikzstyle{blackNode}=[fill=black!30]
\tikzstyle{crosses}=[decorate,decoration={name=crosses,segment length=2mm,post length=2mm}]
\tikzstyle{every picture}=[>=latex]
\tikzstyle{every label}=[inner sep=2pt]
\newcommand{\GTNC}[1][]{\textsc{General-\HTN-Consistency#1}\xspace}
\renewcommand{\H}{\ensuremath{\mathcal{H}}\xspace}
\newcommand{\TTNC}[1][]{\textsc{Tail-\HTN-Consistency#1}\xspace}
\begin{document}

\title{Checking Dynamic Consistency of Conditional Hyper Temporal Networks via 
Mean Payoff Games \\ \text{\large Hardness and (pseudo) Singly-Exponential Time Algorithm}}

\author{Carlo Comin\footnote{Department of Mathematics, University of Trento, Trento, Italy; 
LIGM, Universit{\'e} Paris-Est, Marne-la-Vall{\'e}e, Paris, France. E-mail: carlo.comin@unitn.it} 
\and Romeo Rizzi\footnote{Department of Computer Science, University of Verona, Verona, Italy. E-mail: romeo.rizzi@univr.it}}

\date{}

\maketitle

\begin{abstract}
Conditional Simple Temporal Network (\CSTN) is a constraint-based graph-formalism for conditional temporal planning.
It offers a more flexible formalism than the equivalent CSTP model of~\cite{TVP2003}, 
from which it was derived mainly as a sound formalization.
Three notions of consistency arise for \CSTN{s}: weak, strong, and dynamic.
Dynamic consistency is the most interesting notion, but it is also the most challenging and it was conjectured to be hard to assess.
\cite{TVP2003} gave a doubly-exponential time algorithm for checking dynamic consistency in \CSTN{s}
and to produce an exponentially sized dynamic execution strategy whenever the input \CSTN is dynamically-consistent.
\CSTN{s} may be viewed as an extension of Simple Temporal Networks (\STN{s})~\cite{DechterMP91},
directed weighted graphs where nodes represent events to be scheduled in time and 
arcs represent temporal distance constraints between pairs of events.
Recently, \STN{s} have been generalized into \emph{Hyper Temporal Networks} (\HTN{s}), by considering
weighted directed hypergraphs where each hyperarc models a \emph{disjunctive} temporal constraint named \emph{hyperconstraint};
being directed, the hyperarcs can be either \emph{multi-head} or \emph{multi-tail}.
The computational equivalence between checking consistency in \HTN{s} and 
determining winning regions in Mean Payoff Games (\MPG{s}) was also pointed out;
\MPG{s} are a family of 2-player infinite pebble games played on finite graphs,
which is well known for having applications in model-checking and formal verification.
In this work we introduce the \emph{Conditional Hyper Temporal Network (CHyTN)} model,
a natural extension and generalization of both the \CSTN and the \HTN model which is obtained by blending them together.
We show that deciding whether a given \CSTN or CHyTN is dynamically-consistent is \coNP-hard;
and that deciding whether a given CHyTN is dynamically-consistent is \PSPACE-hard,
provided that the input instances are allowed to include both multi-head and multi-tail hyperarcs.
In light of this, we continue our study by focusing on CHyTNs that allow only multi-head hyperarcs,
and we offer the first deterministic (pseudo) singly-exponential 
time algorithm for the problem of checking the dynamic consistency of such CHyTNs,
also producing a dynamic execution strategy whenever the input CHyTN is dynamically-consistent.
Since \CSTN{s} are a special case of CHyTNs, as a byproduct this provides the first sound-and-complete
(pseudo) singly-exponential time algorithm for checking dynamic consistency in CSTNs.
The proposed algorithm is based on a novel connection between CHyTN{s} and \MPG{s};
due to the existence of efficient pseudo-polynomial time algorithms for \MPG{s}, 
it is quite promising to be competitive in practice.
The presentation of such connection is mediated by the \HTN model.
In order to analyze the time complexity of the algorithm, we introduce a refined notion of dynamic consistency, 
named $\epsilon$-dynamic consistency, and present a sharp lower bounding analysis 
on the critical value of the reaction time $\hat{\varepsilon}$ where a CHyTN transits from being,
to not being, dynamically-consistent. The proof technique introduced in this analysis of $\hat{\varepsilon}$ is applicable
more generally when dealing with linear difference constraints which include strict inequalities.
\end{abstract}

\text{\bf{Keywords:}}
Dynamic Consistency, Mean Payoff Games, \STN{s}, Hyper Temporal Networks, Singly-Exponential Time, Reaction Time.

\section{Introduction and Motivation}\label{sect:introduction}
In many areas of Artificial Intelligence (AI), including temporal planning and scheduling,
  the representation and management of quantitative temporal aspects is
  of crucial importance~(see \eg \cite{Pani:2001tb,SmithFJ00,EderPR99,BettiniWJ02,CombiGPP12,CombiGMP14}).
Examples of possible quantitative temporal aspects include constraints on the earliest start time and latest end time of activities and
  constraints over the minimum and maximum temporal distance between activities.
In many cases these constraints can be represented by \emph{Simple Temporal Networks} (\STN{s})~\cite{DechterMP91},
\ie directed weighted graphs where nodes represent events to be scheduled in time
and arcs represent temporal distance constraints between pairs of events.
Recently, \STN{s} have been generalized into \emph{Hyper Temporal Networks} (\HTN{s}) \cite{CPR2014, CPR2015},
a strict generalization of \STN{s} introduced to overcome the limitation of considering only conjunctions
of constraints, but maintaining a practical efficiency in the consistency checking of the instances.
In a \HTN a single temporal hyperarc constraint is defined as a set of two or more maximum delay constraints
which is satisfied when at least one of these delay constraints is satisfied.
\HTN{s} are meant as a light generalization of \STN{s} offering an interesting compromise.
On one side, there exist practical pseudo-polynomial time algorithms for checking the consistency of \HTN{s} and computing feasible schedules for them.
On the other side, \HTN{s} offer a more powerful model accommodating natural disjunctive constraints that cannot be expressed by \STN{s}.
In particular, \HTN{s} are weighted directed hypergraphs where each hyperarc models a disjunctive temporal constraint called \emph{hyperconstraint}.
The computational equivalence between checking consistency in {\HTN}s and determining winning regions in
\emph{Mean Payoff Games}~({\MPG}s)~\cite{EhrenfeuchtMycielski:1979, ZwickPaterson:1996, brim2011faster}
was also pointed out in~\cite{CPR2014, CPR2015}, where the approach was shown to be robust thanks to experimental evaluations (also see~\cite{BC12}).
\MPG{s} are a family of 2-player infinite pebble games played on finite graphs which is well known for having theoretical interest in computational complexity,
being one of the few natural problems lying in $\NP\cap\coNP$, as well as various applications in model checking and formal verification~\cite{Gradel2002}.

However, in the representation of quantitative temporal aspects of systems,
\emph{conditional} temporal constraints pose a serious challenge for conditional temporal planning,
where a planning agent has to determine whether a candidate plan will satisfy the specified conditional temporal constraints.
This can be difficult, because the temporal assignments that satisfy the constraints associated
with one conditional branch may fail to satisfy the constraints along a different branch (see, \eg\cite{TVP2003}).
The present work unveils that \HTN{s} and \MPG{s} are a natural underlying combinatorial model for checking the consistency of certain conditional
temporal problems that are known in the literature and that are useful in some practical applications of temporal planning,
especially, for managing the temporal aspects of Workflow Management Systems (WfMSs)~\cite{BettiniWJ02,CombiGPP12}
  and for modeling Healthcare's Clinical Pathways~\cite{CombiGMP14}. Thus we focus on \emph{Conditional Simple Temporal Networks~(\CSTN{s})} \cite{TVP2003, HPC12},
a constraint-based model for conditional temporal planning. The \CSTN formalism extends {\STN}s in that:
(1) some of the nodes are \emph{observation events}, to each of them is associated a boolean variable whose value is disclosed only at execution time;
(2) \emph{labels} (i.e. conjunctions over the literals)
are attached to all nodes \emph{and} constraints, to indicate the situations in which each of them is required.
The planning agent (or \emph{Planner}) must schedule all the required nodes, meanwhile respecting all the required temporal constraints among them.
This extended framework allows for the off-line construction of conditional plans that are guaranteed to satisfy complex networks of
temporal constraints. Importantly, this can be achieved even while allowing for the decisions about the precise timing of actions
to be postponed until execution time, in a least-commitment manner, thereby adding flexibility and making it possible to
adapt the plan dynamically, in response to the observations that are made during execution.
See~\cite{TVP2003} for further details and examples.

Three notions of consistency arise for \CSTN{s}: weak, strong, and \emph{dynamic}.
Dynamic consistency (DC) is the most interesting one; it requires the existence of conditional plans where
decisions about the precise timing of actions are postponed until execution time,
but it nonetheless guarantees that all the relevant constraints will be ultimately satisfied.
Still, it is the most challenging and it was conjectured to be hard to assess by~\cite{TVP2003}.
Indeed, to the best of our knowledge, the tightest currently known upper bound on the time complexity of
deciding whether a given \CSTN is dynamically-consistent is doubly-exponential time~\cite{TVP2003}.
It first builds an equivalent Disjunctive Temporal Problem (\DTP) of size exponential in the input \CSTN,
and then applies to it an exponential-time \DTP solver to check its consistency.
However, this approach turns out to be quite limited in practice: experimental studies have already
shown that the resolution procedures, as well as the currently known heuristics, for solving general
\DTP{s} become quite burdensome with $30$ to $35$ \DTP variables (see \eg\cite{TsamardinosP03, MoffittP05, Oddi14}),
thus dampening the practical applicability of the approach.

\subsection{Contribution}
In this work we introduce and study the \emph{Conditional Hyper Temporal Network (CHyTN)} model,
a natural extension and generalization of both the \CSTN and the \HTN model which is obtained by blending them together.
One motivation for studying it is to transpose benefits and opportunities for application,
that have arisen from the introduction of \HTN{s} (see~\cite{CPR2014, CPR2015}), to the context of \emph{conditional} temporal planning.
In so doing, the main and perhaps most important contribution is that to offer the first sound-and-complete
deterministic (pseudo) singly-exponential time algorithm for checking the dynamic consistency of \CSTN{s}.
After having formally introduced the CHyTN model, we start by showing that deciding whether a given \CSTN or CHyTN is dynamically-consistent is \coNP-hard.
Then, we offer a proof that deciding whether a given CHyTN is dynamically-consistent
is \PSPACE-hard, provided that the input CHyTN instances are allowed to include both multi-head and multi-tail hyperarcs.
In light of this, we focus on CHyTNs that allow only multi-head hyperarcs.
Concerning multi-head CHyTN{s}, perhaps most importantly, we unveil a connection between the problem of checking their dynamic consistency
and that of determining winning regions in \MPG{s} (of a singly-exponential size in the number of propositional variables of the input CHyTN),
thus providing the first sound-and-complete (pseudo) singly-exponential time algorithm
for this same task of deciding the dynamic consistency and yielding a dynamic execution strategy for multi-head CHyTNs.
The resulting worst-case time complexity of the DC-Checking procedure is actually
  	$O\big(2^{3|P|}|V||\A|m_{\A} + 2^{4|P|}|V|^2|\A||P| + 2^{4|P|}|V|^2m_{\A} + 2^{5|P|}|V|^3|P|\big)W$,
      where $|P|$ is the number of propositional variables, $|V|$ is the number of event nodes,
        $|\A|$ is the number of hyperarcs, $m_\A$ is the size (\ie roughly, the encoding length of $\A$),
          and $W$ is the maximum absolute integer value of the weights of the input CHyTN.
The algorithm is still based on representing a given CHyTN instance on an exponentially sized network, as first suggested in \cite{TVP2003}.
The difference, however, is that we propose to map \CSTN{s} and CHyTNs on (exponentially sized) \HTN{s}/\MPG{s} rather than on \DTP{s}.
This makes an important difference, because the consistency check for \HTN{s}
can be reduced to determining winning regions in \MPG{s}, as shown in~\cite{CPR2014, CPR2015},
which admits practical and effective pseudo-polynomial time algorithms (in some cases the algorithms for determining winning regions in \MPG{s} exhibit even
a strongly polynomial time behaviour, see \eg\cite{CPR2015, BC12, AllamigeonBG14, ChatterjeeHKN14}).
To summarize, we obtain an improved upper bound on the theoretical time complexity of the DC-checking of \CSTN{s}
(\ie from $\text{2-EXP}$ to $\text{pseudo-}E\cap\text{NE}\cap \text{coNE}$) together with a faster DC-checking procedure,
which can be used on CHyTN{s} with a larger number of propositional variables and event nodes than before.
At the heart of the algorithm a suitable reduction to \MPG{s} is mediated by the \HTN model,
\ie the algorithm decides whether a CHyTN{s} is dynamically-consistent by solving a carefully constructed \MPG.
In order to analyze the algorithm, we introduce a novel and refined notion of dynamic consistency,
named $\epsilon$-dynamic consistency (where $\epsilon\in\RR_+$), and present a sharp lower bounding analysis on the critical value of the
\emph{reaction time} $\hat{\varepsilon}$ where a CHyTN{s} transits from being, to not being, dynamically-consistent.
We believe that this contributes to clarifying (\wrt some previous literature, \eg\cite{TVP2003, HPC12})
the role played by the reaction time $\hat{\varepsilon}$ in checking the dynamic consistency of \CSTN{s}.
Moreover, the proof technique introduced in this analysis of $\hat{\varepsilon}$ is applicable more generally
when dealing with linear difference constraints which include strict inequalities;
  thus it may be useful in the analysis of other models of temporal constraints.

A preliminary version of this article appeared in the proceedings of the TIME symposium~\cite{CR2015}.
Here, the presentation is extended as follows: (1) the definition of \CSTN has been extended and generalized
to that of CHyTN in order to allow the presence of hyperarcs as labeled temporal
constraints already in the input instances;
(2) some further facts and pertinent properties about \CSTN{s} and CHyTN{s} have been established;
(3) for instance, the following hardness result: deciding whether a given CHyTN is dynamically-consistent
is {\PSPACE}-hard (the reduction goes from 3-CNF-TQBF),
provided that the input instances are allowed to include both multi-head and multi-tail hyperarcs;
(4) the proposed (pseudo) singly-exponential time algorithm  is presented here in its full generality, \ie \wrt the CHyTN model;
(5) several proofs have been polished, expanded and clarified (\eg those concerning the reaction time analysis of $\hat{\epsilon}$).

\subsection{Organization} The rest of the article is organized as follows.
Section~\ref{sect:backgroundandnotation} recalls the basic formalism, terminology and known results on \STN{s} and \HTN{s}.
Particularly, Subsection~\ref{subsect:stn} deals with \STN{s}; Subsection~\ref{subsect:HTN} deals with \HTN{s},
its computational equivalence with \MPG{s} and the related algorithmic results.
Section~\ref{sect:CSTN} surveys \CSTN{s} and, then, it introduces CHyTN{s}, also presenting some basic properties of the model.
Section~\ref{sect:Algo} tackles on the algorithmics of dynamic consistency: firstly, we provide a $\coNP$-hardness lower bound,
then we offer a $\PSPACE$-hardness lower bound. Next, it is described the connection with \HTN{s}/\MPG{s} and it is
devised a (pseudo) singly-exponential time DC-checking algorithm. Section~\ref{sect:epsilon} offers a
sharp lower bounding analysis on the critical value of the \emph{reaction time} $\hat{\varepsilon}$ where the \CSTN
transits from being, to not being, dynamically-consistent.
In Section~\ref{sect:relatedworks}, related works are discussed.
The article concludes in Section~\ref{sect:conclusions}.

\section{Background}\label{sect:backgroundandnotation}
%This section provides background notions concerning \STN{s} and \HTN{{s}, necessary to follow the rest of the treatise.
\subsection{Simple Temporal Networks}\label{subsect:stn}
Some definitions, notation and well know results about graphs and conservative graphs are introduced below;
also, we recall the relation between the consistency property of \STN{s} and the conservative property of weighted graphs.
Our graphs are directed and weighted on the arcs. Thus, if $G=(V,A)$ is a graph, 
then every arc $a\in A$ is a triplet $(t_a,h_a,w_a)$, where:
$t_a \in V$ is the \textit{tail} of $a$, $h_a \in V$ is the \textit{head} of $a$, and $w_a\in\RR$ is the \textit{weight} of $a$.
Moreover, since we use graphs to represent distance constraints, they do not need to have either loops
(unary constraints are meaningless) or parallel arcs (two parallel constraints represent two different distance
constraints between the same pair of nodes: only the most restrictive one is meaningful).
We also use the notations $h(a)$ for $h_a$, $t(a)$ for $t_a$, and $w(a)$ or $w(t_a,h_a)$ for $w_a$, when it helps.

The \textit{order} and \textit{size} of a graph $G = (V,A)$ are denoted by $n \triangleq |V|$ and $m \triangleq  |A|$, respectively;
the size is actually a measure for the encoding length of $G$.
Let $\N_+$ and $\RR_+$ be the set of positive natural and positive real numbers, respectively.
Let $[n]\triangleq \{1, 2, \ldots, n\}$, for every $n\in \N_+$.
A \emph{cycle} of $G$ is a set of arcs $C\subseteq A$ cyclically sequenced 
as $a_0, \ldots, a_{\ell-1}$ so that $h(a_i) = t(a_j)$ if and only if
$j=(i+1)\mod \ell$; this is called a \emph{negative cycle} if $w(C) < 0$, where $w(C)\triangleq \sum_{e\in C} w_e$.
A graph is called \textit{conservative} when it contains no negative cycle. A \textit{potential} is a map $p: V \mapsto \RR$.
The \textit{reduced weight} of an arc $a = (u,v,w_a)$ \wrt a potential $p$ is defined as $w^{p}_a \triangleq w_a - p_v + p_u$.
A potential $p$ of $G = (V,A)$ is called \textit{feasible} if $w^{p}_a\geq 0$ for every $a\in A$.
Notice that, for any cycle $C$, $w^p(C) = w(C)$. Therefore, the existence of a feasible potential implies that
the graph is conservative as $w(C) = w^p(C) \geq 0$ for every cycle $C$.
The Bellman-Ford algorithm~\cite{Cormen01} can be used to produce in $O(mn)$ time:

-- either a proof that $G$ is conservative in the form of a feasible potential function;

-- or a proof that $G$ is not conservative in the form of a negative cycle $C$ in $G$.

When the graph is conservative, the \emph{shortest path} between the nodes is well defined,
and for a fixed root node $r$ in $G$ the potentials returned by the Bellman-Ford algorithm are, 
for each node $v$, the shortest path from $r$ to $v$.
Moreover, if all the arc weights are integers, then these potentials are integers as well.
Therefore, the Bellman-Ford algorithm provides a proof of the following theorem.
% The above fact provides an algorithmic proof of the following main fact.
\begin{Thm}[\hspace{-1sp}\cite{Bellman58,Ford,Cormen01}]\label{teo:charConservativeGraphs}
A graph admits a feasible potential if and only if it is conservative.
When all the arc weights are integer valued,
 \ie $w_a\in\Z$ for every $a\in A$, then the feasible potential is integer valued as well.
\end{Thm}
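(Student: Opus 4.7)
The forward direction ($\Rightarrow$) has effectively been recorded just above the theorem statement: if $p$ is feasible, then for any cycle $C$ the reduced weights telescope to give $w^p(C)=w(C)$, and each term of $w^p(C)$ is nonnegative, so $w(C)\ge 0$; hence $G$ is conservative. My write-up would simply recall this one-liner.

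For the nontrivial direction ($\Leftarrow$), the plan is to exhibit a feasible potential via single-source shortest paths from a super-source. First I would augment $G=(V,A)$ to $G'=(V\cup\{s\},A\cup\{(s,v,0):v\in V\})$, where $s\notin V$ is a fresh vertex. Since $s$ has no incoming arcs in $G'$, every cycle of $G'$ is already a cycle of $G$, so $G'$ remains conservative. Consequently the shortest-path distance $d(v)\triangleq \text{dist}_{G'}(s,v)$ is well defined (and finite, since the arc $(s,v,0)$ gives $d(v)\le 0$). I would then \emph{define} the potential $p:V\to\RR$ by $p(v)\triangleq d(v)$.

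To verify feasibility, pick any arc $a=(u,v,w_a)\in A$. The shortest-path triangle inequality in $G'$ yields $d(v)\le d(u)+w_a$, i.e.\ $w_a-p_v+p_u\ge 0$, which is exactly $w^p_a\ge 0$. Thus $p$ is a feasible potential on $G$. Existence-wise this already closes the biconditional; algorithmically, running the Bellman-Ford algorithm on $G'$ from $s$ either produces these distances in $O(mn)$ time or detects a negative cycle (which cannot happen by conservativity), thereby delivering both the constructive proof and the stated $O(mn)$ procedure.

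For the integrality addendum, I would note that Bellman-Ford performs only additions, subtractions and $\min$-comparisons on the arc weights, starting from the initial value $d(s)=0$ (and $d(v)\le 0$ via the zero-weight arcs from $s$). Hence if every $w_a\in\Z$, all intermediate and final values of $d(\cdot)$ remain in $\Z$, so the returned potential $p=d|_V$ is integer valued. The only subtle point — and the step I would flag as the main thing to get right — is handling the possibility that $G$ is disconnected or that $s$ cannot otherwise reach every node; the super-source construction above is precisely what removes this obstacle while preserving conservativity, and I would emphasize explicitly that $s$ has no incoming arcs so no new cycles are created.
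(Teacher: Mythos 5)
Your proposal is correct and follows essentially the same route the paper takes: the forward direction via the telescoping identity $w^p(C)=w(C)$, and the converse by taking single-source shortest-path distances (computed by Bellman--Ford) as the feasible potential, with integrality inherited from integer arc weights. Your only addition is the explicit super-source $s$ with zero-weight arcs to guarantee every node is reachable --- a point the paper glosses over by speaking of ``a fixed root node $r$'' --- and this refinement is a genuine (if minor) improvement in rigor rather than a different approach.
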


An \STN can be viewed as a weighted directed simple graph whose nodes are events 
that must be placed on the real line and whose arcs express
mutual constraints on the allocations of their end-points.
An \STN $G = (V,A)$ is called \textit{consistent} if it admits 
a \emph{feasible schedule}, \ie a schedule $s: V\mapsto \RR$ such that:
\[
  s(v) \leq s(u) + w(u,v), \quad \forall \text{ arc $(u,v)$ of $G$.}
\]

\begin{Cor}[\hspace{-1sp}\cite{Bellman58,DechterMP91,Cormen01}]\label{cor.STNcharCons}
   An \STN $G$ is consistent if and only if $G$ is conservative.
\end{Cor}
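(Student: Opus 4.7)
The plan is to observe that the schedule condition for an STN coincides, up to a trivial rewriting, with the feasible-potential condition from Theorem~\ref{teo:charConservativeGraphs}, and then to invoke that theorem directly.

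More concretely, I would first take a feasible schedule $s : V \to \RR$ of $G$ and rewrite the defining inequality $s(v) \leq s(u) + w(u,v)$ for each arc $(u,v) \in A$ as $w(u,v) - s(v) + s(u) \geq 0$, which is exactly $w^{s}_{(u,v)} \geq 0$ according to the definition of reduced weight given earlier. Hence a feasible schedule is the very same object as a feasible potential. The converse rewriting is identical, so the set of feasible schedules of the STN coincides with the set of feasible potentials of the underlying weighted directed graph.

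From this identification, the corollary is immediate: if $G$ is consistent then $G$ admits a feasible potential, so by Theorem~\ref{teo:charConservativeGraphs} it is conservative; conversely, if $G$ is conservative then Theorem~\ref{teo:charConservativeGraphs} furnishes a feasible potential, which is a feasible schedule, so $G$ is consistent.

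There is essentially no obstacle here, since the corollary is a direct specialization of Theorem~\ref{teo:charConservativeGraphs}; the only thing worth remarking is the integrality bonus inherited from that theorem: if all arc weights of the STN lie in $\Z$, then the schedule produced is integer-valued, which one may wish to state explicitly even though it is not part of the corollary as printed.
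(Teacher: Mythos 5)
Your proof is correct and is exactly the intended argument: the paper treats this corollary as an immediate consequence of Theorem~\ref{teo:charConservativeGraphs} via the observation that the feasible-schedule inequality $s(v)\leq s(u)+w(u,v)$ is literally the feasible-potential condition $w^{s}_{(u,v)}\geq 0$ under the stated sign conventions. Your closing remark about integrality is also consistent with how the paper later uses this identification.
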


In this paper, we also deal with directed weighted \emph{hypergraphs}.
% Usually, hypergraphs offer generalizations for graphs.
% We are no exception to this rule here:
\begin{Def}[General Hypergraph]
A \emph{general hypergraph} $\H$ is a pair $(V,\A)$, where $V$ is the set of nodes, and $\A$ is the set of \emph{hyperarcs}.
Each hyperarc $A\in\A$ is either a \emph{multi-head} or a \emph{multi-tail} hyperarc.

-- A \emph{multi-head} hyperarc $A=(t_A, H_A, w_A)$ has a distinguished node $t_A$, called the \emph{tail} of $A$, and a non-empty set
$H_A\subseteq V\setminus\{t_A\}$ containing the \emph{heads} of $A$;
to each head $v\in H_A$ is associated a \emph{weight} $w_A(v)\in\RR$, which is a real number (unless otherwise specified).
\figref{fig.multi-head} depicts a possible representation of a multi-head hyperarc:
the tail is connected to each head by a dashed arc labeled by the name of the hyperarc and the weight associated to the considered head.

-- A \textit{multi-tail} hyperarc $A=(T_A, h_A, w_A)$ has a distinguished node $h_A$, called the \emph{head} of $A$, and a non-empty set
$T_A\subseteq V\setminus\{h_A\}$ containing the \emph{tails} of $A$;
to each tail $v\in T_A$ is associated a \emph{weight} $w_A(v)\in\RR$, which is a real number (unless otherwise specified).
\figref{fig.multi-tail} depicts a possible representation of a multi-tail hyperarc:
the head is connected to each tail by a dotted arc labeled by the name of the hyperarc and the weight associated to the tail.
\end{Def}
\begin{figure}[!h]
\centering
\subfloat[Multi-Head Hyperarc $A=(t_A, H_A, w_A)$\label{fig.multi-head}.]{
\begin{tikzpicture}[arrows=->,scale=.8,node distance=.5 and 2]
    \node[node,xshift=1ex,label={above, yshift=.5ex:$H_A$}] (v1) {$v_1$};
    \node[node,below=of v1] (v2) {$v_2$};
    \node[node,below=of v2] (v3) {$v_3$};
	\node[node,left=of v2] (u) {$t_A$};
% 	\node[smallLabel,anchor=north east] at (v3.south -| u) {a) Hyperarc $a$.};
   	\draw[>=o, multiHead] (u) to node[timeLabel,above,sloped] {$A, w_A(v_1)$} (v1);%
   	\draw[>=o, multiHead] (u) to node[timeLabel,above,sloped] {$A, w_A(v_2)$} (v2);%
   	\draw[>=o, multiHead] (u) to node[timeLabel,above,sloped] {$A, w_A(v_3)$} (v3);%
 %%%%%%%%%%%%%
  \draw[dashed, ultra thin, rounded corners=15pt] (-.55,1) rectangle (.75,-2.75);
\end{tikzpicture}
}\qquad\qquad
\subfloat[Multi-Tail Hyperarc $A=(T_A, h_A, w_A)$\label{fig.multi-tail}.]{
\begin{tikzpicture}[arrows=<-,scale=.8,node distance=.5 and 2]
    \node[node,label={above, yshift=.5ex:$T_A$}] (v1) {$v_1$};
    \node[node,below=of v1] (v2) {$v_2$};
    \node[node,below=of v2] (v3) {$v_3$};
	\node[node,right=of v2] (u) {$h_A$};
% 	\node[smallLabel,anchor=north east] at (v3.south -| u) {a) Hyperarc $a$.};
   	\draw[>=o, multiTail] (u) to node[timeLabel,above,sloped] {$A, w_A(v_1)$} (v1);%
   	\draw[>=o, multiTail] (u) to node[timeLabel,above,sloped] {$A, w_A(v_2)$} (v2);%
   	\draw[>=o, multiTail] (u) to node[timeLabel,above,sloped] {$A, w_A(v_3)$} (v3);%
    %%%%%%%%%
    \draw[dashed, ultra thin, rounded corners=15pt] (-.65,1) rectangle (.6,-2.75);
\end{tikzpicture}
}
\caption{Multi-Head and Multi-Tail Hyperarcs.}
\end{figure}
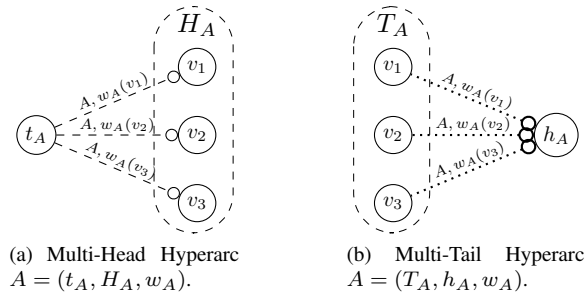

The \emph{cardinality} of a hyperarc $A\in \A$ is given by  $|A| \triangleq  |H_A\cup \{t_A\}|$ if $A$ is multi-head,
and $|A| \triangleq |T_A\cup\{h_A\}|$ if $A$ is multi-tail; if $|A|=2$, then $A=(u, v, w)$ is a standard arc.
The \textit{order} and \textit{size} of a general hypergraph $(V,\A)$
are denoted by $n \triangleq |V|$ and $m_\A \triangleq \sum_{A\in \A} |A|$, respectively.

\subsection{Hyper Temporal Networks}\label{subsect:HTN}

This subsection surveys the \textit{Hyper Temporal Network} (\HTN) model,
which is a strict generalization of \STN{s}, 
introduced to partially overcome the limitation of allowing only conjunctions of constraints.
\HTN{s} have been introduced in~\cite{CPR2014, CPR2015}, the reader is referred there for an in-depth treatment of the subject.
Compared to \STN distance graphs, which they naturally extend, 
\HTN{s} allow for a greater flexibility in the definition of the temporal constraints.

A general \HTN is a directed weighted general hypergraph $\H=(V,\A)$ where a node represents a time-point variable (or event node),
and where a multi-head\slash multi-tail hyperarc stands for a set of temporal 
distance constraints between the tail\slash heads and the head\slash tails (respectively).
Also, we shall consider two special cases of the general \HTN model, 
one in which all hyperarcs are only multi-head, and one where they're only multi-tail.
%###
In general, we say that a hyperarc is \textit{satisfied} when at least one of its distance constraints is satisfied.
Then, we say that a \HTN{} is \textit{consistent}
when it is possible to assign a value to each time-point variable so that all of its hyperarcs are satisfied.

More formally, in the \HTN framework the consistency problem is defined as the following decision problem.
\newcommand{\savefootnote}[2]{\footnote{\label{#1}#2}}
\newcommand{\repeatfootnote}[1]{\textsuperscript{\ref{#1}}}
\begin{Def}[\GTNC]
Given a general \HTN \mbox{$\H=(V,\A)$}, decide whether there exists a schedule \mbox{$s:V \rightarrow \RR$}
such that, for every hyperarc $A\in\A$, the following hold:
\begin{itemize}
\item if $A=(t,h,w)$ is a standard arc, then:
\[
   s(h)-s(t)\leq w;
\]
\item if $A=(t_A, H_A, w_A)$ is a multi-head hyperarc, then:
\[
   s(t_A) \geq \min_{v\in H_A} \{s(v) - w_A(v) \};
\]
\item if $A=(T_A, h_A, w_A)$ is a multi-tail hyperarc, then:
\[
   s(h_A) \leq \max_{v\in T_A} \{s(v) + w_A(v) \}.
\]
\end{itemize}
\end{Def}

Any such schedule \mbox{$s:V \rightarrow \RR$} is called \textit{feasible}.
  A \HTN that admits at least one feasible schedule is called \textit{consistent}.

Comparing the consistency of \HTN{s} with the consistency of \STN{s},
the most important aspect of novelty is that, while in a distance graph of a \STN each arc
represents a distance constraint and all such constraints have to be satisfied by any feasible schedule,
in a \HTN each hyperarc represents a disjunction of one or more distance constraints
and a feasible schedule has to satisfy at least one of such distance constraints for each hyperarc.

Let us survey some interesting properties about the consistency problem for \HTN{s}.
The first one is that any integer weighted \HTN admits an integer feasible schedule when it is consistent, as stated in the following lemma.
\begin{Lem}[\cite{CPR2015}]\label{Lem:int_sched}
Let $\H=(V,\A)$ be an integer weighted and consistent general $\HTN$.

Then $\H$ admits an integer feasible schedule $s:V \rightarrow \{-T,-T+1, \ldots, T-1, T\}$,
where $T = \sum_{A\in\A, v\in V} |w_A(v)|$.
\end{Lem}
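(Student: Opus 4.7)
The plan is to reduce the HTN feasibility problem to the STN feasibility problem by using the given consistency to pick, for each hyperarc, one representative standard-arc constraint, and then invoke Theorem~\ref{teo:charConservativeGraphs} / Corollary~\ref{cor.STNcharCons} to get an \emph{integer} feasible schedule, finally bounding its range by enlarging the STN with an auxiliary source.

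First, since $\H$ is consistent, fix any feasible real schedule $s^*:V\to\RR$. For each hyperarc $A\in\A$ I would select a single standard-arc constraint as follows. If $A=(t,h,w)$ is already a standard arc, keep it. If $A=(t_A,H_A,w_A)$ is multi-head, then $s^*(t_A)\ge\min_{v\in H_A}\{s^*(v)-w_A(v)\}$, so some head $v^{*}\in H_A$ realizes the minimum and witnesses the standard-arc constraint $(v^{*},t_A,w_A(v^{*}))$, i.e.\ $s^*(v^{*})-s^*(t_A)\le w_A(v^{*})$. Symmetrically, if $A=(T_A,h_A,w_A)$ is multi-tail, some tail $v^{*}\in T_A$ realizes the maximum and witnesses $(h_A,v^{*},w_A(v^{*}))$. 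Collecting one such standard arc per hyperarc yields an \emph{ordinary} integer-weighted STN $G^{*}=(V,A^{*})$ for which $s^{*}$ is feasible by construction; note moreover that the sum of absolute weights satisfies $\sum_{a\in A^{*}}|w_a|\le T$.

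Next, I would append to $G^{*}$ an auxiliary source $r\notin V$ together with zero-weight arcs $(r,v,0)$ for every $v\in V$, obtaining $G^{+}=(V\cup\{r\},A^{*}\cup\{(r,v,0)\mid v\in V\})$. Since $s^{*}$ (extended by $s^{*}(r)=\max_{v\in V}s^{*}(v)$) is feasible for $G^{+}$, the graph $G^{+}$ is consistent, hence conservative by Corollary~\ref{cor.STNcharCons}. By Theorem~\ref{teo:charConservativeGraphs} applied to $G^{+}$ with integer weights, the Bellman--Ford algorithm returns an integer feasible potential $p$; setting $p(r)=0$ and defining $s(v)\triangleq p(v)$ for $v\in V$ gives an integer feasible schedule for $G^{*}$, and therefore for $\H$ as well, because each hyperarc constraint is implied by its selected representative.

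It remains to check the range $s(V)\subseteq\{-T,\dots,T\}$. The value $s(v)$ equals the length of a shortest path in $G^{+}$ from $r$ to $v$, which may be taken simple and hence decomposes as the arc $(r,u,0)$ followed by a simple path in $G^{*}$ from some $u$ to $v$; the latter has absolute weight bounded by $\sum_{a\in A^{*}}|w_a|\le T$, so $s(v)\ge-T$. On the other hand, the direct arc $(r,v,0)$ provides $s(v)\le 0\le T$. Hence $s(V)\subseteq[-T,T]\cap\Z$, as required. The only delicate point is the range bound, which is why the auxiliary source is introduced; the rest is a straightforward reduction from disjunctive to conjunctive constraints driven by the existing feasible schedule.
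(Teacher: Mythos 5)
The paper does not actually prove this lemma: it is imported verbatim from \cite{CPR2015} as a black box, so there is no in-text proof to compare against. Your argument is correct and self-contained, and it is the natural one: using an arbitrary feasible real schedule to select, for each hyperarc, the disjunct it witnesses reduces the problem to an integer-weighted \STN $G^{*}$ whose feasible schedules are feasible for $\H$ and whose total absolute weight is at most $T$; appending a source $r$ with zero-weight arcs and taking integer shortest-path distances (Theorem~\ref{teo:charConservativeGraphs}) then gives an integer feasible schedule, and the range bound $s(V)\subseteq[-T,0]$ follows because shortest walks in a conservative graph can be taken simple. The only blemish is notational: with the paper's convention that an arc $(t_a,h_a,w_a)$ encodes $s(h_a)-s(t_a)\le w_a$, the representative arc for a multi-head hyperarc should be written $(t_A,v^{*},w_A(v^{*}))$ and that for a multi-tail one $(v^{*},h_A,w_A(v^{*}))$; since the inequalities you state alongside are the correct ones, this does not affect the validity of the proof.
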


%\delCar{The second interesting property is that deciding \GTNC is \NP-complete, as proved in the following theorem.}
\noindent The following theorem states that \GTNC is \NP-complete, in a strong sense.
\begin{Thm}[\cite{CPR2015}]\label{Teo:npcompleteness}
\GTNC is an \NP-complete problem even if input instances $\H=(V, \A)$
are restricted to satisfy $w_A(\cdot) \in\{-1, 0, 1\}$ and $|H_A|, |T_A|\leq 2$ for every $A\in\A$.
\end{Thm}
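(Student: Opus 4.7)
The plan is to establish two facts separately: membership in \NP{} and \NP-hardness under the stated restrictions.

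\textbf{Membership in \NP.} The key tool is Lemma~\ref{Lem:int_sched}. Under the restriction $w_A(\cdot)\in\{-1,0,1\}$ the bound there simplifies to $T=\sum_{A\in\A,\,v\in V}|w_A(v)|\le m_{\A}$, which is polynomial in the input size. Hence any consistent instance admits an integer feasible schedule $s:V\to\{-T,\ldots,T\}$ whose encoding length is polynomial. A verifier nondeterministically guesses such an $s$ and checks, for each hyperarc $A$, that the appropriate constraint (the inequality $s(h)-s(t)\le w$ for a standard arc, the min-inequality for a multi-head hyperarc, or the max-inequality for a multi-tail hyperarc) is satisfied; each check runs in time linear in $|A|$, giving overall polynomial-time verification.

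\textbf{\NP-hardness.} The plan is a polynomial-time reduction from 3-SAT. Given a 3-CNF formula $\phi$ on variables $x_1,\ldots,x_n$ with clauses $C_1,\ldots,C_m$, I build a general \HTN{} $\mathcal{H}_\phi$ as follows. First, introduce a reference node $z$ and, for each variable $x_i$, a pair of nodes $T_i,F_i$. A small truth-value gadget, built from standard arcs of weight $\pm 1$ together with a binary multi-head hyperarc, forces the pair $(s(T_i),s(F_i))$ to be either $(s(z),s(z)+1)$ or $(s(z)+1,s(z))$, thus encoding the two truth values of $x_i$. Second, for each clause $C_j=\ell_{j,1}\vee\ell_{j,2}\vee\ell_{j,3}$, construct a clause gadget with one auxiliary node $a_j$, itself pinned by a truth-value gadget to two possible schedule values, and a pair of binary hyperarcs such that each possible value of $a_j$ selects a binary disjunction containing two of the three literal-difference-constraints. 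Because every hyperarc in the construction has $|H_A|,|T_A|\le 2$ and weights in $\{-1,0,1\}$, the stated restriction is respected.

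Correctness is then verified in both directions. A satisfying assignment of $\phi$ yields a feasible schedule by setting the variable nodes to the chosen truth values and each $a_j$ so that the selected binary disjunction is witnessed by a literal evaluating to true. Conversely, a feasible schedule induces a truth assignment by reading off the pair $(s(T_i),s(F_i))$ for each variable, and the clause gadget guarantees at least one satisfied literal per clause.

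\textbf{Main obstacle.} The nontrivial step is encoding a \emph{ternary} clause-disjunction using only \emph{binary} hyperarcs with weights in $\{-1,0,1\}$. The auxiliary-node trick above handles it, but soundness has to be argued carefully in order to exclude spurious satisfaction — for instance, a binary disjunction being met by slack elsewhere in the network rather than by a true literal. One could sidestep this subtlety by reducing instead from a naturally-binary \NP-complete problem, such as a two-literal rewriting of NAE-3-SAT or 3-colorability, in which case the auxiliary clause-node construction becomes unnecessary; this is the variant I would pursue first in practice.
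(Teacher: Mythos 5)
The paper does not actually prove this theorem: it is imported verbatim from~\cite{CPR2015}, so there is no in-paper argument to compare your attempt against. That said, your outline follows the same route as the cited source: membership via the polynomially bounded integer schedule of Lemma~\ref{Lem:int_sched} (note the weight restriction is not even needed here, since each schedule value requires only $O(\log(m_{\A}W))$ bits in general), and hardness via a reduction from 3-SAT with a variable gadget pinning a literal pair to the two values $\{s(z), s(z)+1\}$ and a ternary multi-tail ``clause'' hyperarc demanding that at least one literal sit at the true position. The one obstacle you flag --- replacing the ternary clause hyperarc by binary ones without admitting spurious feasible schedules --- is not actually delicate, and your fallback to NAE-3-SAT or 3-colorability is unnecessary: a multi-tail constraint $s(h)\leq\max\{s(a)+w_a,\, s(b)+w_b,\, s(c)+w_c\}$ is equivalent, with respect to consistency, to the pair $s(h)\leq\max\{s(d)+0,\, s(c)+w_c\}$ and $s(d)\leq\max\{s(a)+w_a,\, s(b)+w_b\}$ for a fresh node $d$ occurring in no other constraint (one direction is transitivity of the two bounds; for the other, set $s(d)\triangleq\max\{s(a)+w_a, s(b)+w_b\}$), and symmetrically for multi-head hyperarcs. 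This keeps all weights in $\{-1,0,1\}$ and all cardinalities at most $2$, and it is exactly the device the present paper invokes at the end of the proof of Theorem~\ref{Teo:pspacehardness}. With that substitution your reduction is sound as sketched; the only remaining bookkeeping is to confirm that the variable gadget excludes the configuration in which both literal nodes occupy the ``true'' position, which your binary multi-head hyperarc out of $z$ already does.
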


Theorem~\ref{Teo:npcompleteness} motivates the study of consistency problems on \HTN{s} having either
only multi-head or only multi-tail hyperarcs. In the former case, the consistency problem is called \HTNC,
while in the latter it is called \TTNC; as stated in Theorem~\ref{Teo:MainAlgorithms},
the complexity of checking these two problems turns out to be lower than that for \DTP{s}, \ie $\NP\cap\coNP$ instead of $\NP$-complete.

In the following theorem we observe that the two problems are inter-reducible,
\ie we can check consistency for any one of the two models in $f(m,n,W)$-time whenever we
have a $f(m,n,W)$-time procedure for checking consistency for the other one.

\begin{Thm}[\cite{CPR2015}]
\HTNC and \TTNC are inter-reducible by means of $\log$-space, linear-time, local-replacement reductions.
\end{Thm}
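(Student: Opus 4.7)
The plan is to give a pair of symmetric local transformations based on time-reversal, i.e. on the schedule substitution $s \mapsto -s$. Concretely, given a multi-head $\HTN$ instance $\H=(V,\A)$, I would produce a multi-tail $\HTN$ instance $\H'=(V,\A')$ on the same node set by replacing each hyperarc $A\in\A$ by a hyperarc $A'\in\A'$ locally, as follows. A standard arc $A=(t,h,w)$ is replaced by its reverse $A'=(h,t,w)$. A multi-head hyperarc $A=(t_A,H_A,w_A)$ is replaced by the multi-tail hyperarc $A'=(T_{A'},h_{A'},w_{A'})$ with $h_{A'} \triangleq t_A$, $T_{A'}\triangleq H_A$, and $w_{A'}(v)\triangleq w_A(v)$ for every $v\in H_A$. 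The reverse reduction, from \TTNC to \HTNC, is entirely analogous: reverse every standard arc, and turn each multi-tail hyperarc $A=(T_A,h_A,w_A)$ into the multi-head hyperarc with tail $h_A$, heads $T_A$, and unchanged weights.

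The key computational remark is that the transformation is \emph{local}: each hyperarc is rewritten independently in time proportional to its own cardinality, so the whole reduction runs in time $O(m_\A)$; the output can be emitted by scanning the input once, keeping only a constant number of pointers and an $O(\log n)$ counter for the current node identifiers, which places it comfortably in logarithmic space.

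For correctness I would prove, by a direct constraint-by-constraint rewriting, that $s:V\to\RR$ is a feasible schedule for $\H$ if and only if $s':V\to\RR$ defined by $s'(v)\triangleq -s(v)$ is a feasible schedule for $\H'$. For a standard arc, $s(h)-s(t)\leq w$ is literally the same inequality as $s'(t)-s'(h)\leq w$, which is the standard-arc constraint on the reversed arc $(h,t,w)$. For a multi-head hyperarc, the constraint
\[
  s(t_A) \geq \min_{v\in H_A}\{s(v)-w_A(v)\}
\]
is equivalent to the existence of $v\in H_A$ with $s(v)-s(t_A)\leq w_A(v)$, and substituting $s=-s'$ this reads $s'(t_A)-s'(v)\leq w_A(v)$, i.e. $s'(h_{A'})-s'(v)\leq w_{A'}(v)$ for some $v\in T_{A'}$, which is precisely the multi-tail constraint
\[
  s'(h_{A'}) \leq \max_{v\in T_{A'}}\{s'(v)+w_{A'}(v)\}.
\]
The symmetric rewriting handles the reverse direction.

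I do not expect a real obstacle here: the claim reduces to the observation that multi-head and multi-tail constraints are duals of one another under time reversal, and the locality, linearity, and log-space bounds follow by inspection. The only point that deserves care is making sure the rewriting preserves the well-formedness assumptions from the definition (no self-loops and, in particular, $T_{A'}\subseteq V\setminus\{h_{A'}\}$), which is immediate from the corresponding condition $H_A\subseteq V\setminus\{t_A\}$ for the input multi-head hyperarc.
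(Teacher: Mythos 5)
Your reduction is correct: the time-reversal substitution $s\mapsto -s$ together with the local swap of tails and heads (keeping the weights) exactly interchanges the multi-head constraint $s(t_A)\geq\min_{v\in H_A}\{s(v)-w_A(v)\}$ with the multi-tail constraint $s(h_A)\leq\max_{v\in T_A}\{s(v)+w_A(v)\}$, and the locality, linear-time, and log-space bounds follow by inspection. The paper states this theorem without proof, importing it from the cited source, and your argument is the standard duality one would expect there; no gap.
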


In the rest of this work we shall adopt the multi-head hypergraph as our reference model;
but we will consider general hypergraphs again in the forthcoming sections, when proving \PSPACE-hardness.
Let's say that, when considering hypergraphs and \HTN{s}, we will be implicitly referring
to the multi-head variant unless it is explicitly specified otherwise.
So, let us consider the following specialized notion of consistency for \HTN{s}.
\begin{Def}[\HTNC]
Given a (multi-head) \HTN \mbox{$\H=(V,\A)$}, decide whether there exists a schedule \mbox{$s:V \rightarrow \RR$} such that:
\[
   s(t_A) \geq \min_{v\in H_A} \{s(v) - w_A(v) \},\quad \forall A\in\A.
\]
%Any such a schedule \mbox{$s:V \rightarrow \R$} is called \textit{feasible}.
%A \HTN that admits at least one feasible schedule is called \textit{consistent}.
\end{Def}

\begin{Rem}\label{Rem:PLconvex}
Notice that this notion of consistency for \HTN{s} is a strict generalization of consistency for \STN{s}.
Generally, the feasible schedules of an \STN are the solutions of a linear system and, therefore, they form a convex polytope.
Since an \STN may be viewed as a \HTN, the space of feasible schedules of an \STN can always be described as the space of feasible schedules of a \HTN.
The converse is not true because feasible schedules for a \HTN need not form a convex polytope.
Let us consider, for example, a \HTN of just three nodes $x_1$, $x_2$, $x_3$ and a single hyperarc with heads $\{x_1,x_2\}$ and tail $x_3$ expressing the
constraint $x_3\geq \min\{x_1,x_2\}$.
Observe that $(0,2,2)$ and $(-2,0,2)$ are both admissible schedules, but $(1,1,0) = \frac{1}{2}(0,2,2) - \frac{1}{2}(-2,0,2)$ is not an admissible schedule.
In conclusion, the \STN model is a special case of the Linear Programming paradigm, whereas the \HTN model is not.
\end{Rem}

Next, we extend the characterization of \STN consistency (recalled in Subsection~\ref{subsect:stn}) to \HTN{s}.
% A mapping from $V$ to $\R$ may be called a \textit{schedule} or a \textit{potential} according to the current use.

\begin{Def}[Reduced Slack Value $w^{p}_A(v)$]
With reference to a potential $p:V\rightarrow \RR$, we define, for every arc $A\in \A$ and every $v\in H_A$,
the \textit{reduced slack value} $w^{p}_A(v)$ as $w_A(v) + p(t_A) - p(v)$
and the \textit{reduced slack} $w^{p}_A$ as follows:
\[ w^{p}_A \triangleq \max\{w^{p}_A(v) \mid v\in H_A\}. \]
A potential $p$ is said to be \emph{feasible} if and only if $ w^{p}_A \geq 0$ for every $A\in \A$.
\end{Def}
Notice that $w^{p}_A$ has been defined with $\max$ (instead of $\min$) because if every multi-head hyperarc $A$
  has at least one arc $(t_A,v)$ with positive $w^{p}_A(v)$ value, then the corresponding multi-head \HTN is consistent;
    also notice the similarity \wrt the potentials that are computed by the Bellman-Ford algorithm on \STN{s}.

Again, as it was the case for \STN{s}, a mapping $\phi:V \rightarrow \RR$ is a feasible potential if and only if it is a feasible schedule.
In order to better characterize feasible schedules, a notion of \emph{negative cycle} is introduced next.
\begin{Def}\label{def:negative_cycle}
Given a multi-head \HTN $\H=(V, \A)$, a \textit{cycle} is a pair $(S,\C)$ with $S\subseteq V$ and $\C \subseteq \A$ such that:
\begin{enumerate}
  \item $S = \bigcup_{A\in \C} \big(H_A\cup \{t_A\}\big)$ and $S\neq\emptyset$;
  \item $\forall v\in S$ there exists an unique $A\in \C$ such that $t_A = v$.
\end{enumerate}
Moreover, we let $a(v)$ denote the
unique arc $A\in \C$ with $t_A = v$ , as required in item~2 above.
% Cycles have a nice closure property:
% assume a pebble is initially placed in a node $v_1\in S$ and then, at time $t$, for $t=1,2,\ldots $, gets moved from $v_t$ to $v_{t+1}$, where  $v_{t+1}$ is any node in $T_{a(v_t)}$.
% By~1) and~2) this process will dictate an infinite sequence.
Every infinite path in a cycle $(S,\C)$ contains, at least, one \textit{finite cyclic sequence} $v_i, v_{i+1}, \ldots, v_{i+p}$,
where $v_{i+p} = v_i$ is the only repeated node in the sequence.
A cycle $(S,\C)$ is \textit{negative} if and only if the following holds:
\[
   \sum_{t=1}^{p-1} w_{a(v_t)}(v_{t+1}) < 0,\, \text{ for \emph{any} finite cyclic sequence } v_1, v_{2}, \ldots, v_{p}.
\]
\end{Def}

\begin{figure}[!h]
\centering
\begin{tikzpicture}[arrows=->,scale=.8,node distance=.5 and 2]
    \node[node,xshift=1ex] (v1) {$v_1$};
    \node[node,below=of v1, yshift=-5ex] (v2) {$v_2$};
    \node[node,below=of v2, yshift=-5ex] (v3) {$v_3$};
	  \node[node,left=of v2] (v0) {$v_0$};
    \node[node,xshift=3ex,yshift=2ex,above left=of v1] (v4) {$v_4$};
    \node[node,xshift=2ex,yshift=14ex,below right=of v1] (v5) {$v_5$};
    \node[node,xshift=2ex,yshift=-3ex,below right=of v2] (v6) {$v_6$};
   	\draw[>=o, multiHead] (v0) to node[timeLabel,above,sloped] {${A_0}, w_{A_0}(v_1)$} (v1);%
   	\draw[>=o, multiHead] (v0) to node[timeLabel,above,sloped] {${A_0}, w_{A_0}(v_2)$} (v2);%
   	\draw[>=o, multiHead] (v0) to node[timeLabel,above,sloped] {${A_0}, w_{A_0}(v_3)$} (v3);%
    \draw[>=o, multiHead] (v1) to node[timeLabel,above,sloped] {${A_1}, w_{A_1}(v_4)$} (v4);%
   	\draw[>=o, multiHead] (v1) to node[timeLabel,above,sloped] {${A_1}, w_{A_1}(v_5)$} (v5);%
    \draw[>=o, multiHead] (v2) to node[timeLabel,above,sloped] {${A_2}, w_{A_2}(v_5)$} (v5);%
    \draw[>=o, multiHead] (v2) to node[timeLabel,above,sloped] {${A_2}, w_{A_2}(v_6)$} (v6);%
    \draw[>=o, multiHead] (v4) to node[timeLabel,above,sloped] {${A_4}, w_{A_4}(v_0)$} (v0);%
    \draw[>=o, multiHead, bend left=20] (v4) to node[timeLabel,above,sloped] {${A_4}, w_{A_4}(v_5)$} (v5);%
    \draw[>=o, multiHead] (v6) to node[timeLabel,above,sloped] {${A_6}, w_{A_6}(v_5)$} (v5);%
    \draw[>=o, multiHead] (v6) to node[timeLabel,above,sloped] {${A_6}, w_{A_6}(v_3)$} (v3);%
    \draw[>=o, multiHead, bend right=30] (v3) to node[timeLabel,below,sloped] {${A_3}, w_{A_3}(v_6)$} (v6);%
    \draw[>=o, multiHead, bend left=30] (v3) to node[timeLabel,below,sloped] {${A_3}, w_{A_3}(v_0)$} (v0);%
    \draw[>=o, multiHead, bend left=20] (v5) to node[timeLabel,below,sloped] {${A_5}, w_{A_5}(v_2)$} (v2);%
    \draw[>=o, multiHead, bend left=40] (v5) to node[timeLabel,above,sloped] {${A_5}, w_{A_5}(v_6)$} (v6);%
\end{tikzpicture}
\caption{A Cycle $(S,\C)$, where $S=\{v_0, \ldots, v_6\}$ and $\C=\{A_0, \ldots, A_6\}$.}\label{fig:cycle}
\end{figure}
%, for $S=\{v_0, \ldots, v_6\}$ and $\C=\{\}$
\begin{Ex}
An example of a cycle $(S,\C)$ is shown in \figref{fig:cycle}; here,
$S=\{v_0, \ldots, v_6\}$ and $\C=\{A_0, \ldots, A_6\}$, where $t_{A_i}=v_i$ for every $i\in\{0,\ldots,6\}$;
moreover, $H_{A_0}=\{v_1,v_2,v_3\}$, $H_{A_1}=\{v_4,v_5\}$, $H_{A_2}=\{v_5,v_6\}$,
  $H_{A_3}=\{v_0,v_6\}$, $H_{A_4}=\{v_0,v_5\}$, $H_{A_5}=\{v_2,v_6\}$, $H_{A_6}=\{v_3,v_5\}$.
\end{Ex}
\begin{Lem}[\cite{CPR2015}]\label{lem:nc}
   A \HTN with a negative cycle $(S,\C)$ admits no feasible schedule.
\end{Lem}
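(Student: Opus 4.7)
The plan is to proceed by contradiction: assume a feasible schedule $s:V\to\RR$ exists, and extract from it a finite cyclic sequence along which the sum of weights is non-negative, contradicting the defining inequality of negative cycle.

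First, I would use the hyperarc consistency constraints to build a "successor function" on $S$. For each $v\in S$, property~(2) of Definition~\ref{def:negative_cycle} guarantees a unique hyperarc $a(v)\in\C$ with $t_{a(v)}=v$; moreover, by property~(1), all heads of $a(v)$ lie in $S$. Feasibility of $s$ at the hyperarc $a(v)$ gives
\[
s(v)\;\geq\;\min_{u\in H_{a(v)}}\{\, s(u)-w_{a(v)}(u)\,\},
\]
so there exists at least one head $u\in H_{a(v)}\subseteq S$ attaining (or below) this minimum. Choose one such head and call it $f(v)$. This defines a function $f:S\to S$ with the key inequality
\[
s(v)\;\geq\;s(f(v))-w_{a(v)}(f(v))\qquad\text{for every }v\in S.
\]

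Second, starting from any $v_0\in S$ I would iterate $f$ to produce an infinite sequence $v_0,v_1=f(v_0),v_2=f(v_1),\ldots$; this is precisely an infinite path in the cycle $(S,\C)$ in the sense of Definition~\ref{def:negative_cycle}, because each consecutive step goes from a tail to one of the heads of its associated hyperarc. Since $S$ is finite, some node must repeat, and by the definition this infinite path contains a finite cyclic sequence $v_1,v_2,\ldots,v_p$ with $v_p=v_1$ and no other repetitions. Telescoping the inequality above along this cyclic sequence yields
\[
\sum_{t=1}^{p-1} s(v_t)\;\geq\;\sum_{t=1}^{p-1} s(v_{t+1})\;-\;\sum_{t=1}^{p-1} w_{a(v_t)}(v_{t+1}),
\]
and using $v_p=v_1$ the two schedule sums cancel, leaving $\sum_{t=1}^{p-1}w_{a(v_t)}(v_{t+1})\geq 0$.

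Third, this directly contradicts the negativity hypothesis, which requires the weight sum to be strictly negative along \emph{every} finite cyclic sequence of $(S,\C)$. Hence no feasible schedule can exist, proving the lemma.

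The only delicate point is the step in which I invoke the existence of a finite cyclic subsequence inside the $f$-orbit; the rest is a standard potential/telescoping argument in the style of the \STN proof via Bellman-Ford. The subtlety is that the choice of $f(v)$ is not canonical — several heads could attain the minimum — but any consistent choice produces a valid infinite path in $(S,\C)$, and the definition of negative cycle is strong enough to rule out \emph{any} such choice, which is exactly why the proof goes through.
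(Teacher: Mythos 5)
Your proof is correct: the witness-head selection $f$, the extraction of a finite cyclic sequence from the resulting infinite path in $(S,\C)$, and the telescoping of $s(v_t)\geq s(v_{t+1})-w_{a(v_t)}(v_{t+1})$ are exactly the standard argument for this lemma (the paper itself defers the proof to \cite{CPR2015}, where it is carried out in the same way). The subtlety you flag at the end is handled correctly, since the definition of negative cycle quantifies over \emph{every} finite cyclic sequence, so any choice of $f$ suffices.
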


At first sight, it may appear that checking whether $(S,\C)$ is a negative cycle might take exponential time since one should check a possibly exponential
number of cyclic sequences.
The next lemma asserts instead that it is possible to check the presence of a negative cycle in polynomial time.
% this is not the case.
% and, then, Theorem~\ref{teo:charCons} offers a good characterization of consistency for \HTN{s} generalizing the characterization of consistency for \STN{s} in
% Theorem~\ref{teo:charConservativeGraphs}.

\begin{Lem}[\cite{CPR2015}]
Let $(S, \C)$ be a cycle in a \HTN. Then,
  checking whether $(S,\C)$ is a negative cycle can be done in polynomial time.
\end{Lem}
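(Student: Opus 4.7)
My plan is to associate to the cycle $(S,\C)$ an ordinary (non-hyper) weighted digraph on which the negativity condition translates into a standard question about directed cycles, which can then be resolved by well-known polynomial-time algorithms.

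Concretely, I would define the auxiliary digraph $G_{\C} = (S, E_{\C})$ by setting, for each $v \in S$ and each $u \in H_{a(v)}$, an arc from $v$ to $u$ of weight $w_{a(v)}(u)$. This is well defined because condition~2 of Definition~\ref{def:negative_cycle} guarantees that $a(v)$ is unique for every $v \in S$, and condition~1 ensures every head lies in $S$. The key observation is that a finite cyclic sequence $v_1, v_2, \ldots, v_p$ in $(S,\C)$ (with $v_p = v_1$ the only repetition) is exactly a simple directed cycle in $G_{\C}$, and its weight $\sum_{t=1}^{p-1} w_{a(v_t)}(v_{t+1})$ coincides with the total arc weight of that cycle in $G_{\C}$. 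So $(S,\C)$ is a negative cycle if and only if every simple directed cycle of $G_{\C}$ has strictly negative total weight.

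Next I would upgrade ``every simple cycle'' to ``every cycle'': any closed walk in $G_{\C}$ decomposes into simple directed cycles, so the maximum total weight over closed walks equals the maximum total weight over simple cycles (when the latter is finite), and in particular \emph{some} cycle has nonnegative weight iff \emph{some} simple cycle has nonnegative weight. Therefore $(S,\C)$ is negative if and only if the \emph{maximum mean weight} cycle of $G_{\C}$ has strictly negative mean. This is because a cycle of mean weight $\mu$ and length $\ell \geq 1$ has total weight $\mu \ell$, so a strictly negative maximum mean weight forces every cycle to have negative total weight, while a nonnegative maximum mean weight produces a cycle of nonnegative total weight.

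Finally, the maximum mean weight of a directed cycle in a weighted digraph is computable in polynomial time, e.g.\ by Karp's algorithm in $O(|S|\cdot|E_{\C}|)$ time, and $|S|$, $|E_{\C}|$ are both polynomially bounded in the size of $(S,\C)$. The main conceptual step, and therefore the only point requiring care, is the translation between cyclic sequences of $(S,\C)$ and simple directed cycles of $G_{\C}$, together with the reduction from ``all cycles strictly negative'' to ``maximum mean cycle strictly negative''; once this is in place, polynomial-time decidability follows immediately.
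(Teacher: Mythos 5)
The paper states this lemma only by citation to \cite{CPR2015} and gives no proof of its own, so there is no in-paper argument to compare yours against; I can only judge it on its own terms, and on those terms it is correct. The reduction to the successor digraph $G_{\C}$ is well defined (condition~2 of Definition~\ref{def:negative_cycle} gives the uniqueness of $a(v)$, condition~1 puts every head in $S$, and every hyperarc of $\C$ arises as $a(v)$ for its own tail), the finite cyclic sequences of Definition~\ref{def:negative_cycle} are exactly the simple directed cycles of $G_{\C}$ with matching weights, and the passage from ``all simple cycles strictly negative'' to ``maximum mean cycle strictly negative'' is sound: decomposing any closed walk into simple cycles shows the maximum cycle mean is attained on a simple cycle, and since there are only finitely many of these the maximum mean is strictly negative exactly when every simple cycle has strictly negative total weight. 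Karp's algorithm then settles the question in $O(|S|\cdot|E_{\C}|)$ time, which is polynomial in the size of $(S,\C)$. One phrase is imprecise rather than wrong: the ``maximum total weight over closed walks'' is $+\infty$ whenever $G_{\C}$ contains a positive-weight cycle, so it does not in general equal the maximum over simple cycles; but the implication you actually rely on --- some closed walk has nonnegative weight if and only if some simple cycle does --- is correct and is all the argument needs. It is also worth noting explicitly that $G_{\C}$ always contains at least one directed cycle (every node has positive out-degree and $S$ is finite and non-empty), so the characterization is never vacuous.
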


A hypergraph $\H$ is called \emph{conservative} when it contains no negative cycle $(S,\C)$.

In the next paragraphs we will recall the existence of pseudo-polynomial time algorithms
that always return either a feasible schedule or a negative cycle certificate, thus extending the
validity of the classical good-characterization of \STN consistency to general \HTN consistency.
Here, we anticipate the statement of the main result in order to complete this brief introduction to \HTN{s}.
% is section about thesince it indicates a main laitmotif of the investigations to follow.
\begin{Thm}[\cite{CPR2015}]\label{teo:charCons}
   A \HTN $\H$ is \textit{consistent} if and only if it is \textit{conservative}.
   Moreover, when all weights are integers, then $\H$ admits an integer schedule if and only if it is conservative.
\end{Thm}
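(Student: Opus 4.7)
The forward direction is immediate: if $\H$ admits a feasible schedule, then Lemma~\ref{lem:nc} rules out every negative cycle $(S,\C)$, so $\H$ is conservative. The reverse direction, from conservativeness to consistency, is the interesting one and I expect it to be the main obstacle. My plan is to route it through the two-player viewpoint already foreshadowed in Subsection~\ref{subsect:HTN}: build a finite mean payoff game $\Gamma_\H$ with position set $V\cup\A$, where at a vertex-position $v\in V$ Player Min chooses an outgoing hyperarc $A$ with $t_A=v$ along a zero-weight edge, and at a hyperarc-position $A\in\A$ Player Max chooses a head $v'\in H_A$ along an edge of weight $w_A(v')$. Sink vertices receive a zero-weight self-loop so that plays are infinite, and Max's objective is a non-negative mean payoff.

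\textbf{From Max-winning to a feasible schedule.} Assume for the moment that Max wins $\Gamma_\H$ from every vertex. By positional determinacy of \MPG{s}, Max owns a memoryless optimal strategy $\sigma:\A\to V$ with $\sigma(A)\in H_A$. Let $G_\sigma\triangleq(V,\{(t_A,\sigma(A),w_A(\sigma(A)))\mid A\in\A\})$ be the STN induced by $\sigma$; any negative cycle of $G_\sigma$ could be replayed by Min against $\sigma$ and would grant Min a negative mean payoff, contradicting Max's victory. Hence $G_\sigma$ is conservative, and Corollary~\ref{cor.STNcharCons} produces a feasible potential $\phi:V\to\RR$ for it. Then, for every $A\in\A$,
\[
\phi(t_A)\;\geq\;\phi(\sigma(A))-w_A(\sigma(A))\;\geq\;\min_{v\in H_A}\{\phi(v)-w_A(v)\},
\]
which is exactly the definition of a feasible schedule for $\H$.

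\textbf{From conservativeness to Max-winning.} The delicate step, and the real obstacle, is to prove that conservativeness forces Max to win from every vertex. I would argue the contrapositive: suppose Min wins from some vertex, let $\tau$ be a memoryless Min-winning strategy, and let $S_0$ be the Min-winning region. Since a hyperarc-position is Min-winning exactly when \emph{every} head is Min-winning, $S_0$ is head-closed: for each $v\in S_0$, every head of $\tau(v)$ again lies in $S_0$. Setting $\C\triangleq\{\tau(v)\mid v\in S_0\}$ and $S\triangleq\bigcup_{A\in\C}(H_A\cup\{t_A\})$, one verifies that $S=S_0$ and that each $v\in S$ is the tail of the unique hyperarc $\tau(v)\in\C$, so $(S,\C)$ satisfies both clauses of Definition~\ref{def:negative_cycle}. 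Any finite cyclic sequence $v_1,\dots,v_p$ of $(S,\C)$ is realizable as a genuine play of $\Gamma_\H$ (Min plays $\tau$, Max iterates the cycle); since Min wins, the mean of the edge weights $w_{\tau(v_t)}(v_{t+1})$ along the cycle is strictly negative, and so is their sum. Thus $(S,\C)$ is a negative cycle, contradicting conservativeness.

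\textbf{Integer case.} The integer statement follows at once: once the equivalence is established, Lemma~\ref{Lem:int_sched} upgrades any feasible schedule of an integer-weighted conservative $\H$ to an integer one bounded by $\sum_{A\in\A,v\in V}|w_A(v)|$. Alternatively, applying Bellman--Ford to the integer STN $G_\sigma$ already yields an integer feasible potential $\phi$ via Theorem~\ref{teo:charConservativeGraphs}.
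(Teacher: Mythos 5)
Your proposal is correct, and it follows essentially the route that the paper intends: Theorem~\ref{teo:charCons} is imported from~\cite{CPR2015} without proof here, and the argument in that source is precisely the reduction you construct — a mean payoff game with Min owning $V$ and Max owning $\A$, positional determinacy giving a conservative induced \STN (hence a feasible potential via Corollary~\ref{cor.STNcharCons}) when Max wins everywhere, and a Min-winning positional strategy yielding a negative cycle $(S,\C)$ in the sense of Definition~\ref{def:negative_cycle} otherwise, with Lemma~\ref{Lem:int_sched} (or Bellman--Ford on the induced \STN) handling the integer claim. The only cosmetic imprecision is that your $S$ equals $S_0\cap V$ rather than the full winning region $S_0$, which also contains hyperarc-positions.
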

%%%%%%%%%%%%%%%%%%

From now on we shall focus on integer weighted multi-head hypergraphs and \HTN{s}.
\begin{Def}[Hypergraph]
A \emph{hypergraph} $\H$ is a pair $\langle V,\A\rangle$, where $V$ is the set of nodes, and $\A$ is the set of \emph{hyperarcs}.
Each hyperarc $A=\langle t_A, H_A, w_A\rangle\in \A$ has a distinguished node $t_A$, called the \emph{tail} of $A$, and a non-empty set
$H_A\subseteq V\setminus\{t_A\}$ containing the \emph{heads} of $A$;
to each head $v\in H_A$ is associated a \emph{weight} $w_A(v)\in\Z$.
\end{Def}

Again, provided that $|A| \triangleq |H_A\cup \{t_A\}|$, the \emph{size} of a hypergraph
$\H = \langle V,\A\rangle$ is defined as $m_{\A}\triangleq \sum_{A\in\A}|A|$,
and it is used as a measure for the encoding length of $\H$;
if $|A|=2$, then $A=\langle u, v, w \rangle$ is a \emph{standard arc}.
In this way, hypergraphs generalize graphs.

At this point, a (multi-head) \HTN is thus a weighted hypergraph $\H=\langle V,\A\rangle$ where a node represents an \emph{event} to be scheduled in time,
and a hyperarc represents a set of temporal distance \emph{constraints} between the \emph{tail} and the \emph{heads}.

The computational equivalence between checking the consistency of (integer weighted multi-head) {\HTN}s
and determining winning sets in {\MPG}s was pointed out in~\cite{CPR2014, CPR2015}.
The tightest currently known worst-case time complexity upper bound for solving \HTNC is expressed in the following theorem,
which was proved in \cite{CPR2015} by relying on the Value-Iteration Algorithm for \MPG{s}~\cite{brim2011faster};
the approach was shown to be robust thanks to experimental evaluations (also see \cite{BC12}).

\begin{Thm}{\cite{CPR2014, CPR2015}}\label{Teo:MainAlgorithms}
The following propositions hold on (integer weighted multi-head) \HTN{s}.
\begin{enumerate}
\item There exists an $O\big((|V|+|\A|) m_{\A} W\big)$ pseudo-polynomial time algorithm for checking \HTNC;
\item \label{Cor:PseudoPolyschedule}
  There exists an $O\big((|V|+|\A|) m_{\A} W\big)$ pseudo-polynomial time algorithm such that,
  given as input any consistent \HTN $\H=(V, \A)$, it returns as output a feasible schedule $\phi:V\rightarrow \RR$ of $\H$;
\item \label{Cor:PseudoPolyNegCycle}
There exists an $O\big((|V|+|\A|) m_{\A} W\big)$ pseudo-polynomial time algorithm such that,
  given as input any inconsistent \HTN $\H=(V, \A)$, it returns as output a negative cycle $(S,\C)$ of $\H$;
\end{enumerate}
Here, $W\triangleq \max_{A\in\A, v\in H_A} |w_A(v)|$.
\end{Thm}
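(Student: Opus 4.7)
The plan is to reduce \HTNC to solving an energy game formulation of a \MPG on a carefully constructed two-player arena and then invoke the pseudo-polynomial Value-Iteration algorithm of~\cite{brim2011faster}. From the input \HTN $\H=(V,\A)$ I would build a two-player game graph $G_{\H}$ on $O(|V|+|\A|)$ vertices and $O(m_\A)$ weighted edges as follows: each node $v\in V$ becomes a vertex owned by Min and each hyperarc $A\in\A$ a vertex owned by Max; from $v\in V$ Min may move to any $A$ with $t_A=v$ via an edge of weight $0$, and from $A$ Max picks a head $u\in H_A$ and moves to $u$ along an edge of weight $w_A(u)$. Inspecting Definition~\ref{def:negative_cycle}, one verifies that simple cycles of $G_{\H}$ correspond to finite cyclic sequences of $\H$ with the same sum of weights; thus Max wins uniformly against threshold $0$ if and only if $\H$ is conservative, if and only if (by Theorem~\ref{teo:charCons}) $\H$ is consistent.

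I would then run the Value-Iteration of~\cite{brim2011faster} on the energy-game reformulation of $G_{\H}$. Since $G_{\H}$ has $O(|V|+|\A|)$ vertices, $O(m_\A)$ edges, and largest absolute weight $W$, the procedure computes the winning regions, integer-valued energy levels, and positional optimal strategies for both players in time $O\big((|V|+|\A|)\,m_\A\,W\big)$, which settles item~1.

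For items~2 and~3, the two positional strategies emitted by the Value-Iteration supply the witnesses. If every vertex of $G_{\H}$ lies in Max's winning region, then the energy values restricted to $V$ form an integer feasible potential $p:V\to\Z$: Max's positional optimal choice at each hyperarc $A$ exhibits some head $v\in H_A$ with $w^{p}_A(v)\geq 0$, whence $w^{p}_A\geq 0$; since a feasible potential is a feasible schedule, this yields item~2. Otherwise, starting from any losing vertex, Min's positional optimal strategy together with the finite set of Max's possible responses traces simple cycles in $G_{\H}$ of strictly negative total weight, and the set of HTN nodes and hyperarcs visited by Min unrolls back into a cycle $(S,\C)$ of $\H$ that is negative in the sense of Definition~\ref{def:negative_cycle}, giving item~3. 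Both extractions run in time linear in $|G_{\H}|$ once the strategies are known, so the overall complexity remains $O\big((|V|+|\A|)\,m_\A\,W\big)$.

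The main obstacle is the combinatorial correspondence between $G_{\H}$ and $\H$: while it is nearly immediate that simple cycles of $G_{\H}$ induce objects of the form $(S,\C)$, one must verify that the universal quantifier ``for any finite cyclic sequence'' in the definition of negative cycle is faithfully captured by Min's ability to enforce every possible head-selection against a positional Max strategy. The key observation is that positional strategies suffice in \MPG{s}, so ranging over all positional Max responses precisely enumerates all cyclic sequences compatible with Min's fixed hyperarc choices; symmetrically, Max's positional winning strategies yield well-defined feasible potentials rather than merely winning plays. Once this correspondence---established in~\cite{CPR2014, CPR2015}---is settled, the claimed complexity bound follows by plugging the sizes of $G_{\H}$ into the Brim et al.\ analysis.
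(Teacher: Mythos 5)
Your proposal is correct and follows exactly the route that the paper attributes to the cited works: the paper itself gives no proof of this theorem but imports it from~\cite{CPR2014, CPR2015}, explicitly noting that it ``was proved in \cite{CPR2015} by relying on the Value-Iteration Algorithm for \MPG{s}~\cite{brim2011faster}.'' Your bipartite node/hyperarc game graph, the correspondence between Min's positional strategies and cycles $(S,\C)$ (respectively, between Max's energy values and integer feasible potentials), and the resulting $O\big((|V|+|\A|)\,m_{\A}\,W\big)$ bound are precisely that construction.
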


In the forthcoming section we shall turn our attention to \emph{conditional} temporal planning,
where we generalize Conditional Simple Temporal Networks (\CSTN{s}) by introducing Conditional Hyper Temporal Networks (CHyTNs).

\section{Conditional Simple / Hyper Temporal Networks}\label{sect:CSTN}
In order to provide a formal support to the present work, this section recalls the basic formalism, terminology and known results on \CSTP{s} and \CSTN{s}.
Since the forthcoming definitions concerning \CSTN{s} are mostly inherited from the literature,
the reader is referred to \cite{TVP2003} and \cite{HPC12} for an intuitive semantic discussion and for some clarifying examples of the very same \CSTN model.
\cite{TVP2003} introduced the \emph{Conditional Simple Temporal Problem (\CSTP)}
as an extension of standard temporal constraint-satisfaction models used in non-conditional temporal planning.
\CSTP{s} augment \STN{s} by including \emph{observation} events, each one having a \emph{boolean variable} (or \emph{proposition})
associated with it. When an observation event is executed, the truth-value of its associated proposition becomes known.
In addition, each event node and each constraint has a \emph{label} that restricts the scenarios in which it plays a role.
Although not included in the formal definition, \cite{TVP2003} discussed some supplementary assumptions that any well-defined
\CSTP must satisfy. Subsequently, those conditions have been further analyzed and formalized by~\cite{HPC12},
leading to the definition of \emph{Conditional Simple Temporal Network} (\CSTN), which is now recalled.

Let $P$ be a set of boolean variables, a \emph{label} is any (possibly empty) conjunction of variables, or negations of variables, drawn from $P$.
The \emph{empty label} is denoted by $\lambda$.
The \emph{label universe} $P^*$ is the set of all (possibly empty) labels whose (positive or negative) literals are drawn from $P$.
Two labels, $\ell_1$ and $\ell_2$, are called \emph{consistent},
denoted\footnote{The notation $\Con(\cdot, \cdot)$ and $\Sub(\cdot, \cdot)$ is inherited from~\cite{TVP2003, HPC12}.} by $\Con(\ell_1, \ell_2)$,
when $\ell_1\wedge\ell_2$ is satisfiable. A label $\ell_1$ \emph{subsumes} a label $\ell_2$, denoted$^\text{1}$ by $\Sub(\ell_1, \ell_2)$,
when the implication $\ell_1\Rightarrow\ell_2$ holds.
Let us recall the formal definition of {\CSTN}s from~\cite{TVP2003, HPC12}.
\begin{Def}[{\CSTN}s]\label{def:cstn}
A \emph{Conditional Simple Temporal Network (\CSTN)} is a tuple $\langle V, A, L, \Ord, \Ord{V}, P \rangle$ where:
\begin{itemize}
\item $V$ is a finite set of \emph{events}; $P=\{p_1, \ldots, p_q\}$ (some $q\in\N$) is a finite set of \emph{boolean variables} (or \emph{propositions});
\item $A$ is a set of \emph{labeled temporal constraints (LTCs)} each having the form $\langle v-u\leq w(u,v), \ell\rangle$,
where $u,v\in V$, $w(u,v) \in \RR$, and $\ell\in P^*$;
\item $L:V\rightarrow P^*$ is a map that assigns a label to each event node in $V$;
$\Ord{V}\subseteq V$ is a finite set of \emph{observation events};
$\Ord:P\rightarrow \Ord{V}$ is a bijection mapping a unique observation event $\Ord(p)=\Ord_p$ to each $p\in P$;
\item The following \emph{well definedness assumptions} must hold:

(\emph{WD1})\; for any labeled constraint $\langle v-u\leq w, \ell\rangle\in A$ the label $\ell$ is satisfiable
and subsumes both $L(u)$ and $L(v)$; \ie whenever a constraint $v-u\leq w$ is required to be satisfied,
both of its endpoints $u$ and $v$ must be scheduled (sooner or later) by the Planner;

(\emph{WD2})\; for each $p\in P$ and each $u\in V$ such that either $p$ or $\neg p$ appears in $L(u)$, we require:
$\Sub(L(u), L(\Ord_p))$, and $\langle \Ord_p-u\leq-\epsilon, L(u)\rangle \in A$ for some (small) real $\epsilon > 0$;
\ie whenever a label $L(u)$ of an event node $u$ contains a proposition $p$, and $u$ gets eventually scheduled,
the observation event $\Ord_p$ must have been scheduled strictly before $u$ by the Planner.

(\emph{WD3})\; for each labeled constraint $\langle v-u\leq w, \ell\rangle$ and $p\in P$,
for which either $p$ or $\neg p$ appears in $\ell$, it holds that $\Sub(\ell, L(\Ord_p))$;
\ie assuming a required constraint contains proposition $p$,
the observation event $\Ord_p$ must be scheduled (sooner or later) by the Planner.

\end{itemize}
\end{Def}
We are now in the position to introduce the \emph{Conditional Hyper Temporal Network (CHyTN)},
a natural extension and generalization of both the \CSTN and the \HTN model obtained by blending them together.
Even though the original \STN and \CSTN models allow for real weights,
hereafter we shall restrict ourselves to the integers in order to rely on Theorem~\ref{Teo:MainAlgorithms}.
All of our \CSTN{s} and CHyTNs will be integer weighted from now on.
\begin{Def}[{CHyTN}s]
A general \emph{Conditional Hyper Temporal Network (CHyTN)} is a tuple $\langle V, \A, L, \Ord, \Ord{V}, P \rangle$, where
$V, P, L, \Ord$ and $\Ord{V}$ are defined as in CSTNs (see Definition~\ref{def:cstn}), and where
$\A$ is a set of \emph{labeled temporal hyper constraints (LTHCs)}, each having one of the following forms:
\begin{itemize}
\item $A=(t,h,w,\ell)$, where $(t,h,w)$ is a standard arc and $\ell\in P^*$; in this case, $A$ is called a \emph{standard} LTHC.
\item $A=(t_A, H_A, w_A, L_{H_A})$, where $(t_A, H_A, w_A)$  is a \emph{multi-head} hyperarc and $L_{H_A}:H_A\rightarrow P^*$ is
    a map sending each head $h\in H_A$ to a label $\ell_h$ in $P^*$; in this case, $A$ is called a \emph{multi-head} LTHC.
\item $A=(T_A, h_A, w_A, L_{T_A})$, where $A=(T_A, h_A, w_A)$ is a \emph{multi-tail} hyperarc and $L_{T_A}:T_A\rightarrow P^*$ is
    a map sending each tail $t\in T_A$ to a label $\ell_t$ in $P^*$; in this case, $A$ is called a \emph{multi-tail} LTHC.
\end{itemize}

\begin{itemize}

\item The following \emph{well definedness assumptions} must hold:

(\emph{WD1'})\; for any labeled constraint $A$:
\begin{itemize}
\item if $A=(t,h,w,\ell)$ is a standard LTHC, the label $\ell$ is satisfiable and subsumes both $L(t)$ and $L(h)$;
\item if $A=(t_A, H_A, w_A, L_{H_A})$ is a multi-head LTHC, for each $h\in H_A$ the label $L_{H_A}(h)$ is satisfiable and subsumes both $L(t_A)$ and $L(h)$;
\item if $A=(T_A, h_A, w_A, L_{T_A})$ is a multi-tail LTHC, for each $t\in T_A$ the label $L_{T_A}(t)$ is satisfiable and subsumes both $L(h_A)$ and $L(t)$;
\end{itemize}

(\emph{WD2})\; for each $p\in P$ and each $u\in V$ such that either $p$ or $\neg p$ appears in $L(u)$, we require:
$\Sub(L(u), L(\Ord_p))$, and $\langle \Ord_p-u\leq-\epsilon, L(u)\rangle \in \A$ for some (small) real $\epsilon > 0$;
  this is the same WD2 as defined for \CSTN{s}.

(\emph{WD3'})\; for each labeled constraint $A\in\A$ and boolean variable $p\in P$:
\begin{itemize}
\item if $A=(t,h,w,\ell)$ is a standard LTHC and $p$ or $\neg p$ appears in $\ell$, then $\Sub(\ell, L(\Ord_p))$;
\item if $A=(t_A, H_A, w_A, L_{H_A})$ is a multi-head LTHC
and either $p$ or $\neg p$ appears in $L_{H_A}(h)$ for some $h\in H_A$, then $\Sub(L_{H_A}(h), L(\Ord_p))$;
\item if $A=(T_A, h_A, w_A, L_{T_A})$ is a multi-tail LTHC
and either $p$ or $\neg p$ appears in $L_{T_A}(t)$ for some $t\in T_A$, then $\Sub(L_{T_A}(t), L(\Ord_p))$;
\end{itemize}
\end{itemize}
\end{Def}

Of course every CSTN is a CHyTN (\ie one having only standard LTHCs).
We shall adopt the notation $x\overset{[a,b], \ell}{\longrightarrow} y$, where $x,y\in V$, $a,b\in \N, a<b$ and $\ell\in P^*$,
to compactly represent the pair $\langle y-x\leq b, \ell \rangle, \langle x-y\leq -a, \ell \rangle\in A$;
also, whenever $\ell=\lambda$, we shall omit $\ell$ from the graphics, see~\eg~\figref{FIG:cstn1} and~\figref{FIG:chytn1} here below.

\begin{Ex}\label{example1}
Fig.~\ref{FIG:cstn1} depicts an example \CSTN $\Gamma_0=\langle V, A, L, \Ord, \Ord{V}, P\rangle$ having three event nodes
$A$, $B$ and $C$ as well as two observation events $\Ord_p$ and $\Ord_q$.
Formally, $V=\{A,B,C,\Ord_p, \Ord_q\}$, $P=\{p,q\}$, $\Ord{V}=\{\Ord_p, \Ord_q\}$,
$L(v)=\lambda$ for every $v\in V\setminus\{\Ord_q\}$ and $L(\Ord_q)=p$, $\Ord(p)=\Ord_p, \Ord(q)=\Ord_q$.
Next, the set of LTCs is: $A=\{ \langle C-A\leq 10, \lambda \rangle, \langle A-C\leq -10,
\lambda \rangle, \langle B-A\leq 3, p\wedge\neg q \rangle,
\langle A-B\leq 0, \lambda \rangle,
\langle \Ord_p-A\leq 5, \lambda \rangle,
\langle A-\Ord_p\leq 0, \lambda \rangle,
\langle \Ord_q-A\leq 9, p \rangle,
\langle A-\Ord_q\leq 0, p \rangle,
\langle C-B\leq 2, q \rangle,
\langle C-\Ord_p\leq 10, \lambda\rangle$.

Fig.~\ref{FIG:chytn1} depicts an example of a multi-head CHyTN $\Gamma_1=\langle V, \A, L, \Ord, \Ord{V}, P\rangle$.
Notice that $V,L,\Ord,\Ord{V}$ and $P$ are the same as in the CSTN $\Gamma_0$, whereas $\A$ is defined as follows:
$\A=A\cup\{\alpha\triangleq
  (B, \{C, \Ord_q\}, \langle w_\alpha(C), w_\alpha(\Ord_q)\rangle =
    \langle 2,-1\rangle, \langle L_\alpha(C), L_\alpha(\Ord_q)\rangle=\langle\lambda, p \rangle)\}$,
where $A$ is the set of LTCs of the CSTN $\Gamma_0$ and the additional constraint $\alpha$ is
a multi-head LTHC with tail $t_\alpha=B$ and heads $H_\alpha=\{C,\Ord_q\}$.
\end{Ex}

Sometimes we will show the scheduling time of a node with a label in boldface on the sidelines of the node itself,
  as for $A$ in \figref{fig:example1}.

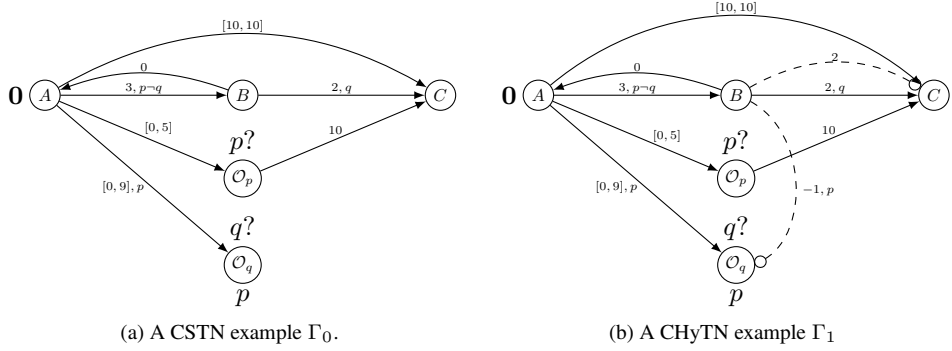
\begin{figure}[!h]
\centering
\subfloat[A CSTN example $\Gamma_0$.\label{FIG:cstn1}]{
\begin{tikzpicture}[arrows=->,scale=.65,node distance=1 and 1]
 	\node[node, label={left:$\bf 0$}] (A') {$A$};
	\node[node, xshift=15ex,right=of A'] (B') {$B$};
	\node[node, xshift=15ex,right=of B'] (C') {$C$};
	\node[node,below=of B', label={above,yshift=0ex:$p?$}] (P') {$\Ord_p$};
	\node[node,below=of P', label={above,yshift=0ex:$q?$}, label={below:$p$}] (Q') {$\Ord_q$};
	%arcs
	\draw[] (A') to [bend left=30] node[timeLabel,above] {$[10,10]$} (C');
	\draw[] (A') to [] node[timeLabel,above] {$3, p \neg q$} (B');
  \draw[] (B') to [bend right=20] node[timeLabel,above] {$0$} (A');
	\draw[] (B') to [] node[timeLabel,above] {$2, q$} (C');
	\draw[] (A') to [] node[xshift=2.1ex, yshift=0ex, timeLabel,above] {$[0,5]$} (P');
	\draw[] (A') to [] node[xshift=-2.5ex,yshift=0ex, timeLabel,below] {$[0,9],p$} (Q');
	\draw[] (P') to [] node[xshift=-1ex, timeLabel,above] {$10$} (C');
	\end{tikzpicture}
}
\quad
\subfloat[A CHyTN example $\Gamma_1$\label{FIG:chytn1}]{
\begin{tikzpicture}[arrows=->,scale=.65,node distance=1 and 1]
 	\node[node, label={left:$\bf 0$}] (A') {$A$};
	\node[node, xshift=15ex,right=of A'] (B') {$B$};
	\node[node, xshift=15ex,right=of B'] (C') {$C$};
	\node[node,below=of B', label={above,yshift=0ex:$p?$}] (P') {$\Ord_p$};
	\node[node,below=of P', label={above,yshift=0ex:$q?$}, label={below:$p$}] (Q') {$\Ord_q$};
	%arcs
	\draw[] (A') to [bend left=40] node[timeLabel,above] {$[10,10]$} (C');
  \draw[] (A') to [] node[timeLabel,above] {$3, p \neg q$} (B');
  \draw[] (B') to [bend right=20] node[timeLabel,above] {$0$} (A');
	\draw[] (B') to [] node[timeLabel,above] {$2, q$} (C');
	\draw[] (A') to [] node[xshift=4ex, yshift=-1ex, timeLabel,above] {$[0,5]$} (P');
	\draw[] (A') to [] node[xshift=-2.5ex,yshift=0ex, timeLabel,below] {$[0,9],p$} (Q');
	\draw[] (P') to [] node[xshift=-1ex, timeLabel,above] {$10$} (C');
		%hyper arc
	\draw[multiHead,>=o] (B') to [bend left=30] node[timeLabel,above] {$2$} (C');
	\draw[multiHead,>=o] (B') to [bend left=60] node[timeLabel,above,xshift=3ex, yshift=-2ex] {$-1, p$} (Q'.east);
\end{tikzpicture}
}
\caption{An example CSTN~(a), and an example CHyTN~(b).}\label{fig:example1}
\end{figure}

In the following definitions we will implicitly
  refer to some CHyTN which is denoted by
    $\Gamma=\langle V, \A, L, \Ord, \Ord{V}, P \rangle$.
\begin{Def}[Scenario]
A \emph{scenario} over a set $P$ of boolean variables is a truth assignment $s:P\rightarrow \{\top, \bot\}$, \ie
$s$ is a map that assigns a truth value to each proposition $p\in P$.
The set of all scenarios over $P$ is denoted by $\Sigma_P$.

If $s\in\Sigma_P$ and $\ell\in P^*$, then $s(\ell)\in\{\top, \bot\}$ denotes the truth value of $\ell$ induced by $s$ in the natural way.
\end{Def}
Notice that any scenario $s\in\Sigma_P$ can be described by means of the label
$\ell_s\triangleq l_1\wedge\cdots\wedge l_{|P|}$ such that,
for every $1\leq i\leq |P|$, the literal $l_i\in\{p_i, \neg p_i\}$ satisfies $s(l_i)=\top$.
\begin{Ex}
Let $P=\{p,q\}$. The scenario $s:P\rightarrow\{\top, \bot\}$ defined as $s(p)=\top$ and
$s(q)=\bot$ can be compactly described by the label $\ell_s=p\wedge \neg q$.
\end{Ex}

\begin{Def}[Schedule]
A \emph{schedule} for a subset of events $U\subseteq V$ is a map $\phi:U\rightarrow\RR$ that assigns a real number to each
event node in $U$. The set of all schedules over $U$ is denoted by~$\Phi_U$.
\end{Def}

\begin{Def}[Scenario Restriction]
Let $s\in\Sigma_{P}$ be a scenario.
The \emph{restriction} of $V$ and $\A$ \wrt $s$ are defined as:
\[ V^+_s\triangleq \Big\{v\in V\mid s(L(v))=\top\Big\}; \hspace*{4in} \]
\[\begin{array}{l}
	\A^+_s\triangleq \Big\{(u,v,w) \mid
			\exists ({\ell}\in P^*)\,\text{ s.t. } (u,v,w,\ell)\in \A \text{ and } s(\ell)=\top\Big\} \cup \\
\hspace{38pt}		\cup \Big\{(t, H'_A, w'_A) \mid
			\exists (H_A\supseteq H'_A; L_{H_A}:H_A\rightarrow P^*; w_A:H_A\rightarrow\Z)\, \text{ s.t. }
				(t, H_A, w_A, L_{H_A} ) \in \A,  \\
			\hfill w'_A={w_A}_{|_{H'_A}}, \forall (h\in H_A)\, s(L_{H_A}(h))=\top\iff h\in H'_A\Big\} \cup \\
\hspace{38pt}		\cup \Big\{(T'_A, h, w'_A) \mid
			\exists (T_A\supseteq T'_A; L_{T_A}:T_A\rightarrow P^*; w_A:T_A\rightarrow\Z)\, \text{ s.t. }
				(T_A, h, w_A, L_{T_A} ) \in \A,  \\
			\hfill w'_A={w_A}_{|_{T'_A}}, \forall (t\in T_A)\, s(L_{T_A}(t))=\top\iff t\in T'_A\Big\}. \\
\end{array}
\]

The restriction of $\Gamma$ \wrt $s$ is defined as $\Gamma^+_s\triangleq \langle V^+_s, \A^+_s\rangle$.

Finally, it is worthwhile to introduce the notation $V^+_{s_1, s_2} \triangleq V^+_{s_1}\cap V^+_{s_2}$.
\end{Def}
Note that if $\Gamma$ is a CHyTN, then $\Gamma^+_s$ is a \HTN; and if $\Gamma$ is a CSTN, then $\Gamma^+_s$ is an \STN.

%Moreover, two scenarios $s_1$ and $s_2$ are \emph{equivalent} \wrt $\Gamma$ when they induce the same restriction network,
%\ie when $\Gamma^+_{s_1}=\Gamma^+_{s_2}$, this defines an equivalence relation on $\Sigma_P$.
\begin{Ex}
\figref{FIG:restriction_cstn2} depicts the restriction \STN ${\Gamma_0}^+_s$ of the \CSTN $\Gamma_0$,
  and the restriction \HTN ${\Gamma_1}^+_s$ of the CHyTN $\Gamma_1$ (see Example~\ref{example1} and \figref{fig:example1}),
    \wrt the scenario $s(p)=s(q)=\bot$.
\end{Ex}
\begin{figure}[!htb]
\centering
\subfloat[The restriction STN ${\Gamma_0}^+_s$ of the CSTN $\Gamma_0$ \wrt $s(p)=s(q)=\bot$]{
\begin{tikzpicture}[arrows=->,scale=.65,node distance=1 and 1]
 	\node[node, label={left:$\bf 0$}] (A') {$A$};
	\node[node, xshift=15ex,right=of A'] (B') {$B$};
	\node[node, xshift=15ex,right=of B'] (C') {$C$};
	\node[node,below=of B'] (P') {$\Ord_p$};
	 %arcs
	\draw[] (A') to [bend left=30] node[timeLabel,above] {$[10,10]$} (C');
  \draw[] (B') to [] node[timeLabel,above] {$0$} (A');
	\draw[] (A') to [] node[xshift=2ex, yshift=0ex, timeLabel,above] {$[0,5]$} (P');
	\draw[] (P') to [] node[xshift=-1ex, timeLabel,above] {$10$} (C');
\end{tikzpicture}
}\quad
\subfloat[The restriction HyTN ${\Gamma_1}^+_s$ of the CHyTN $\Gamma_1$ \wrt $s(p)=s(q)=\bot$]{
\begin{tikzpicture}[arrows=->,scale=.65,node distance=1 and 1]
 	\node[node, label={left:$\bf 0$}] (A') {$A$};
	\node[node, xshift=15ex,right=of A'] (B') {$B$};
	\node[node, xshift=15ex,right=of B'] (C') {$C$};
	\node[node,below=of B'] (P') {$\Ord_p$};
       %arcs
	\draw[] (A') to [bend left=30] node[timeLabel,above] {$[10,10]$} (C');
  \draw[] (B') to [] node[timeLabel,above] {$0$} (A');
	\draw[] (A') to [] node[xshift=2ex, yshift=0ex, timeLabel,above] {$[0,5]$} (P');
	\draw[] (P') to [] node[xshift=-1ex, timeLabel,above] {$10$} (C');
	%hyper arc
	\draw[multiHead, >=o] (B') to [] node[timeLabel,above] {$2$} (C');
\end{tikzpicture}
}
\caption{The restriction ${\Gamma_0}^+_s$ (a), and the restriction ${\Gamma_1}^+_s$ (b),
\wrt the scenario $s(p)=s(q)=\bot$}\label{FIG:restriction_cstn2}
\end{figure}
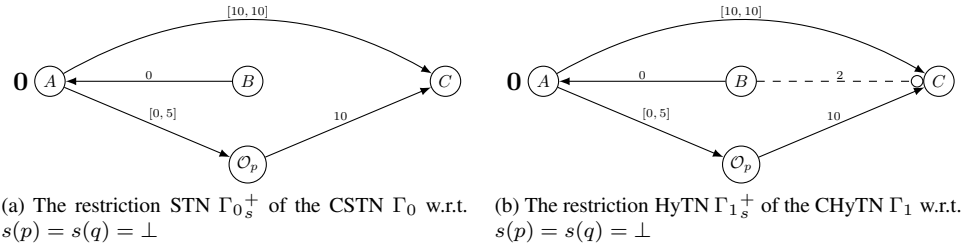
\begin{Def}[Execution Strategy~\cite{HPC12}]\label{def:executionstrategy}
An \emph{execution strategy} for $\Gamma$ is a mapping $\sigma:\Sigma_P\rightarrow \Phi_V$ such that,
for any scenario $s\in\Sigma_P$, the domain of the schedule $\sigma(s)$ is $V^+_{s}$.
The set of execution strategies of $\Gamma$ is denoted by $\S_{\Gamma}$.
The \emph{execution time} of an event node $v\in V^+_{s}$ in the schedule $\sigma(s)\in\Phi_{V^+_s}$ is denoted~by~$[\sigma(s)]_v$.
\end{Def}
\begin{Def}[Scenario History~\cite{HPC12}]\label{def:scenario_history}
Let $\sigma\in\S_{\Gamma}$ be an execution strategy, let $s\in\Sigma_P$ be a scenario and let $v\in V^+_{s}$ be an event node.
The \emph{scenario history} $\text{scHst}(v,s,\sigma)$ of $v$ in the scenario $s$ for the strategy $\sigma$ is defined as follows:
\[\text{scHst}(v,s,\sigma)\triangleq \Big\{\big(p, s(p)\big)\mid  p\in P,\, \Ord_p \in V^+_{s}\cap{\Ord}V,\, [\sigma(s)]_{\Ord_p} < [\sigma(s)]_v \Big\}.\]

\end{Def}
The scenario history can be compactly expressed by the conjunction
of the literals corresponding to the observations comprising it.
Thus, we may treat a scenario history as though it were a label.
\begin{Def}[Viable Execution Strategy~\cite{HPC12}]
We say that $\sigma\in\S_{\Gamma}$ is a \emph{viable} execution strategy whenever, for each scenario $s\in\Sigma_P$,
the schedule $\sigma(s)\in\Phi_V$ is feasible for the restriction \HTN (or \STN) $\Gamma^+_s$.
\end{Def}
\begin{Def}[Dynamic Consistency~\cite{HPC12}]\label{def:consistency}
An execution strategy $\sigma\in \S_{\Gamma}$ is called \emph{dynamic} if,
for any $s_1, s_2\in \Sigma_P$ and any event node $v\in V^+_{s_1, s_2}$,
the following implication holds:
\[\Con(\scHst(v, s_1, \sigma), s_2) \Rightarrow [\sigma(s_1)]_v = [\sigma(s_2)]_v.\tag{DC}\]
We say that $\Gamma$ is \emph{dynamically-consistent}
if it admits $\sigma\in\S_{\Gamma}$ which is both viable and dynamic.
\end{Def}
\begin{Def}[DC-Checking~\cite{HPC12}]\label{def:consistency_problems}
The problem of checking whether a given CHyTN (which allows \emph{both} multi-head and multi-tail LTHCs)
is dynamically-consistent is named \emph{\GHyDCC}.

That of checking whether a given CHyTN, allowing \emph{only} multi-head \emph{or only} multi-tail LTHCs,
is dynamically-consistent is named \emph{\HyDCC}. Checking whether a given CSTN is dynamically-consistent is named \emph{\DCC}.
\end{Def}

\begin{Ex}
Consider the CHyTN $\Gamma_1$ of \figref{FIG:chytn1}, and let the scenarios $s_1, s_2, s_3, s_4$ be defined as:
$s_1(p)=\top$, $s_1(q)=\top$; $s_2(p)=\top$, $s_2(q)=\bot$; $s_3(p)=\bot$, $s_3(q)=\top$; $s_4(p)=\bot$, $s_4(q)=\bot$.
The following defines an execution strategy $\sigma\in\S_\Gamma$:
$[\sigma(s_i)]_A=0$ for every $i\in\{1,2,3,4\}$;
$[\sigma(s_i)]_B=8$ for every $i\in\{1,3,4\}$ and $[\sigma(s_2)]_B=3$;
$[\sigma(s_i)]_C=10$ for every $i\in\{1,2,3,4\}$;
$[\sigma(s_i)]_{\Ord_p}=1$ for every $i\in\{1,2,3,4\}$.
The reader can check that $\sigma$ is viable and dynamic. Indeed,
  $\sigma$ admits the tree-like representation depicted in Fig~\ref{FIG:cstn2-strategy}.
\end{Ex}

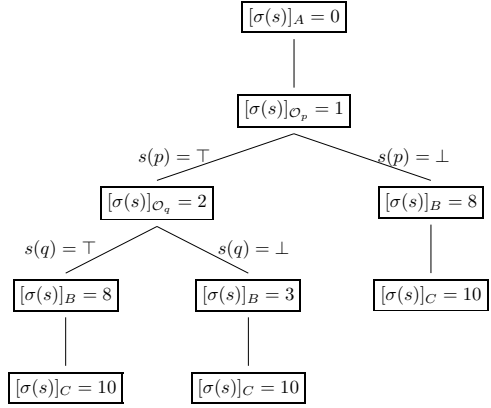
\begin{figure}[!htb]
\centering
\begin{tikzpicture}[scale=0.7, level distance=50pt,sibling distance=30pt]
\Tree [. \framebox{$[\sigma(s)]_{A}=0$}
\edge node[]{};
[. \framebox{$[\sigma(s)]_{\Ord_p}=1$}
\edge node[left,xshift=-1ex]{$s(p)=\top$};
[. \framebox{$[\sigma(s)]_{\Ord_q}=2$}
\edge node[left,xshift=-1ex]{$s(q)=\top$};
[. \framebox{$[\sigma(s)]_B=8$}
\edge node[]{}; \framebox{$[\sigma(s)]_C=10$}
]
\edge node[right,xshift=1ex]{$s(q)=\bot$};
[. \framebox{$[\sigma(s)]_B=3$}
\edge node[]{}; \framebox{$[\sigma(s)]_C=10$} ]
]
\edge node[right,xshift=1ex]{$s(p)=\bot$};
[. \framebox{$[\sigma(s)]_B=8$}
\edge node[]{}; [.
 \framebox{$[\sigma(s)]_C=10$} ]
 ]
]]
\end{tikzpicture}
\caption{A tree-like representation of a dynamic execution strategy $\sigma$ for the CHyTN $\Gamma_1$ of \figref{FIG:chytn1},
where $s$ denotes scenarios and $[\sigma(s)]_X$ is the corresponding schedule.}
\label{FIG:cstn2-strategy}
\end{figure}

Next, we recall a crucial notion for studying the dynamic consistency of CHyTNs:
  the \emph{difference set} $\Delta(s_1; s_2)$.
\begin{Def}[Difference Set~\cite{TVP2003}]
Let $s_1, s_2\in\Sigma_P$ be two scenarios.
The set of observation events in $V^+_{s_1}\cap{\Ord}V$ at which $s_1$ and $s_2$ differ is denoted by $\Delta(s_1;s_2)$.
Formally,
\[\Delta(s_1; s_2)\triangleq\big\{\Ord_p \in V^+_{s_1}\cap{\Ord}V \mid s_1(p)\neq s_2(p)\big\}.\]
\end{Def}
  Notice that commutativity may not hold (\ie generally it may be the case that $\Delta(s_1; s_2)\neq \Delta(s_2; s_1)$).
\begin{Ex}
Consider the \CSTN $\Gamma_0$ of \figref{FIG:cstn1} and the scenarios $s_1, s_2$ defined as follows:
$s_1\triangleq p\wedge q$; $s_2\triangleq \neg p\wedge \neg q$.

Then, $\Delta(s_1;s_2)=\{\Ord_p,\Ord_q\}$ and $\Delta(s_2;s_1)=\{\Ord_p\}$.
\end{Ex}
The next lemma will be useful later on in Section~\ref{sect:Algo}.
\begin{Lem}[\cite{TVP2003}]\label{lem:dynamicimplequality}
Let $s_1, s_2\in \Sigma_P$ and  $v\in V^+_{s_1, s_2}$.
Let $\sigma\in\S_{\Gamma}$ be an execution strategy.

Then, $\sigma$ is dynamic if and only if the following implication holds for every $s_1, s_2\in\Sigma_P$ and for every $u\in V^+_{s_1, s_2}$:
\[\Big(\bigwedge_{v\in\Delta(s_1; s_2)} [\sigma(s_1)]_u \leq [\sigma(s_1)]_v\Big)
\Rightarrow [\sigma(s_1)]_u = [\sigma(s_2)]_u \tag{L\ref{lem:dynamicimplequality}}\]
\end{Lem}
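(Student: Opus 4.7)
The plan is to reduce the lemma to a single, purely syntactic rewriting: namely, to show that the consistency test $\Con(\scHst(u,s_1,\sigma), s_2)$ appearing in Definition~\ref{def:consistency} is equivalent, for every $u \in V^+_{s_1,s_2}$ and every pair $s_1, s_2 \in \Sigma_P$, to the inequality-conjunction
\[
\bigwedge_{v \in \Delta(s_1;s_2)} [\sigma(s_1)]_u \leq [\sigma(s_1)]_v.
\]
Once this identity is in place, substituting it inside the (DC) implication immediately yields the characterization in the statement of the lemma.

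First, I would unfold the relevant definitions. By Definition~\ref{def:scenario_history}, a pair $(p, s_1(p))$ belongs to $\scHst(u,s_1,\sigma)$ exactly when $\Ord_p \in V^+_{s_1}\cap\Ord{V}$ and $[\sigma(s_1)]_{\Ord_p} < [\sigma(s_1)]_u$. Since the history is a conjunction of such literals, $\Con(\scHst(u,s_1,\sigma), s_2)$ fails iff there exists some $p\in P$ for which $\Ord_p \in V^+_{s_1}\cap\Ord{V}$, $[\sigma(s_1)]_{\Ord_p} < [\sigma(s_1)]_u$, and $s_1(p)\neq s_2(p)$. By the very definition of $\Delta(s_1;s_2)$, this last event is precisely the existence of some $\Ord_p \in \Delta(s_1;s_2)$ with $[\sigma(s_1)]_{\Ord_p} < [\sigma(s_1)]_u$. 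Contrapositively, $\Con(\scHst(u,s_1,\sigma),s_2)$ holds iff every $v\in\Delta(s_1;s_2)$ satisfies $[\sigma(s_1)]_v \geq [\sigma(s_1)]_u$, which is exactly the desired conjunction.

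Second, I would plug this equivalence into the (DC) condition of Definition~\ref{def:consistency}. For the forward direction, if $\sigma$ is dynamic and the conjunction holds for some $u \in V^+_{s_1,s_2}$, then $\Con(\scHst(u,s_1,\sigma),s_2)$ also holds by the equivalence, hence (DC) applied at $u$ gives $[\sigma(s_1)]_u = [\sigma(s_2)]_u$. For the backward direction, suppose the implication in (L\ref{lem:dynamicimplequality}) holds; given any $u \in V^+_{s_1,s_2}$ with $\Con(\scHst(u,s_1,\sigma),s_2)$, the equivalence yields the conjunction $\bigwedge_{v\in\Delta(s_1;s_2)} [\sigma(s_1)]_u \leq [\sigma(s_1)]_v$, whence $[\sigma(s_1)]_u = [\sigma(s_2)]_u$, proving that $\sigma$ is dynamic.

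The only place that requires care is the direction of the strict vs.~non-strict inequalities in handling the scenario history — namely, checking that the negation of "$[\sigma(s_1)]_{\Ord_p} < [\sigma(s_1)]_u$" gives "$[\sigma(s_1)]_u \leq [\sigma(s_1)]_v$" rather than a strict inequality — but this is a routine bookkeeping step. No nontrivial obstacle is expected: the lemma is essentially a reformulation of (DC) in which the semantic condition on label consistency is replaced by a purely numerical condition on execution times.
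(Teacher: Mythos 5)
Your proposal is correct and follows essentially the same route as the paper's own proof: both reduce the lemma to the observation that $\Con(\scHst(u,s_1,\sigma),s_2)$ holds iff no $v\in\Delta(s_1;s_2)$ is scheduled strictly before $u$ under $\sigma(s_1)$, and then substitute this equivalence into the (DC) condition. Your version merely spells out the unfolding of the definitions in slightly more detail.
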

\begin{proof}
Notice that, by definition of $\Con(\cdot, \cdot)$ and $\scHst(\cdot, \cdot, \cdot)$, $\Con(\scHst(u, s_1, \sigma), s_2)$ holds
if and only if there is no observation event $v\in\Delta(s_1; s_2)$ which is scheduled by $\sigma(s_1)$ strictly before $u$.
Therefore, $\Con(\scHst(u, s_1, \sigma), s_2)$ holds if and only if $\bigwedge_{v\in\Delta(s_1; s_2)} [\sigma(s_1)]_u \leq [\sigma(s_1)]_v$.
At this point, substituting the $\Con(\scHst(u, s_1, \sigma), s_2)$ expression with the equivalent
formula $\bigwedge_{v\in\Delta(s_1; s_2)} [\sigma(s_1)]_u \leq [\sigma(s_1)]_v$ inside the definition of dynamic execution strategy
(see Definition~\ref{def:consistency}), the thesis follows.
\end{proof}

\section{Algorithmics of Dynamic Consistency}\label{sect:Algo}

Firstly, let us offer the following $\coNP$-hardness result for \DCC; notice that,
	since any CSTN is also a CHyTN, the same hardness result holds for CHyTN{s}.
\begin{Thm}\label{thm:dcc-conp-hard}
\DCC is $\coNP$-hard even if the input instances $\Gamma=\langle V, A, L, \Ord, \Ord{V}, P \rangle$
are restricted to satisfy $w_A(\cdot) \in\{-1, 0\}$ and $\ell\in \{p,\neg p\mid p\in P\}\cup\{\lambda\}$ for every $(u,v,w,\ell)\in A$.
\end{Thm}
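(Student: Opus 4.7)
The plan is to prove $\coNP$-hardness by a polynomial-time reduction from $3$-SAT to the complement of $\DCC$. Given a $3$-CNF formula $\phi = C_1 \wedge \cdots \wedge C_m$ over propositional variables $p_1, \ldots, p_n$, I construct a \CSTN $\Gamma_\phi$ whose propositional-variable set is $P = \{p_1, \ldots, p_n\}$, respecting the stated syntactic restrictions (weights in $\{-1,0\}$ and constraint labels in $\{p,\neg p\mid p\in P\}\cup\{\lambda\}$), such that $\phi$ is satisfiable if and only if $\Gamma_\phi$ is \emph{not} dynamically consistent. Since $3$-SAT is $\NP$-complete, this yields the desired hardness.

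By Definition~\ref{def:consistency}, any dynamic execution strategy for $\Gamma_\phi$ must induce a feasible schedule for each restriction \STN $\Gamma_\phi^+_s$; hence, by Corollary~\ref{cor.STNcharCons}, a single scenario $s$ whose restriction contains a negative cycle already rules out dynamic consistency. So it suffices to engineer $\Gamma_\phi$ so that: (i) $\Gamma_\phi^+_s$ contains a negative cycle whenever $s$ satisfies $\phi$, and (ii) when $\phi$ is unsatisfiable, the network admits an explicit dynamic strategy. Scenarios are in bijection with truth assignments through the observation events $\Ord_{p_i}$ (each with $L(\Ord_{p_i}) = \lambda$), and I also use an anchor event $Z$ with $L(Z) = \lambda$.

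For each clause $C_j$ I introduce a small number of auxiliary events, each with a single-literal node label, connected to $Z$ and to each other by LTCs of weight $0$ or $-1$ carrying single-literal or empty constraint labels. Choosing the node label of every auxiliary event to coincide with the label of the incident LTC suffices to meet WD1 and WD3. The clause gadgets are designed so that, by selecting one literal-activated arc from each clause gadget, $\Gamma_\phi^+_s$ contains a negative cycle precisely when every clause $C_j$ has at least one literal true under $s$, i.e., when $\phi$ is satisfied by $s$. For the converse direction, when $s$ falsifies $\phi$ some clause gadget contributes no active arc, hence no candidate cycle closes and $\Gamma_\phi^+_s$ is conservative; the ``schedule each required event at its earliest feasible time given the currently revealed partial scenario'' policy is then dynamic, by WD2 together with Lemma~\ref{lem:dynamicimplequality}, since the timing of each event depends only on literals that WD2 forces to be observed strictly beforehand.

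The main obstacle lies in step (i): encoding the AND-of-ORs semantics of $3$-SAT using only single-literal constraint labels, which by themselves naturally encode conjunctions, not disjunctions, of literals. The resolution is to route the activated arcs of each clause through shared literal-labeled auxiliary events, so that one fixed polynomial-sized subgraph exhibits the required negative cycle in every satisfying scenario and breaks it otherwise, while maintaining WD1–WD3 throughout and keeping the construction size polynomial in $|\phi|$.
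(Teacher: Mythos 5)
Your overall plan is the paper's plan: reduce $3$-SAT to the complement of \DCC, arrange that a satisfying scenario forces a negative cycle in the corresponding restriction \STN (which kills dynamic consistency because any viable strategy must schedule every restriction feasibly), and exhibit an explicit dynamic strategy when the formula is unsatisfiable. The problem is that the one place where the reduction actually has content --- the clause gadget --- is never constructed. You correctly state the property it must have (``a negative cycle closes precisely when every clause has a true literal'') and then explicitly label its realization as ``the main obstacle,'' resolving it only with the sentence about ``routing the activated arcs through shared literal-labeled auxiliary events.'' That is not a construction; a reader cannot check WD1--WD3, the polynomial size bound, or either direction of the correctness argument without knowing what the arcs actually are. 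The paper's gadget is concrete and needs no auxiliary events at all: the clause nodes $c_1,\ldots,c_m$ (all with empty node labels) are arranged in a single directed cycle, and between consecutive nodes $c_j$ and $c_{(j+1)\bmod m}$ one places \emph{three parallel} weight $-1$ arcs, one labeled by each literal of $C_j$. Disjunction within a clause is encoded by parallelism (the hop $c_j\to c_{j+1}$ survives in the restriction iff at least one literal of $C_j$ is true), and conjunction across clauses by the serial cycle (a negative cycle exists iff every hop survives). Since all node labels are empty and all arc labels are single literals, WD1 and WD3 hold trivially, with weights in $\{-1,0\}$ as required.

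A second, smaller gap: for the converse direction (unsatisfiable $\Rightarrow$ dynamically consistent) you appeal to WD2 and an ``earliest feasible time'' policy, but WD2 says nothing about the clause nodes, whose labels are empty. What actually makes the strategy dynamic in the paper is a structural arrangement you never impose: all observation events are pinned to occur simultaneously at time $0$ (mutual weight-$0$ constraints) and every clause node is forced at least one unit later (arcs $\langle x - c_j \leq -1,\lambda\rangle$). Hence the Planner knows the entire scenario before scheduling any clause node, finds the broken clause $c_{j_\nu}$ (which exists exactly because the assignment falsifies some clause), and schedules $c_{(j_\nu+k)\bmod m}$ at time $k$; this satisfies every surviving $-1$ arc because the cycle is broken at $j_\nu$. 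Without forcing the observations to precede the clause gadget, the dynamicity of your proposed policy does not follow. Both gaps are fillable, but as written the proposal asserts the existence of the reduction rather than giving one.
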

\begin{proof}
We reduce $3$-\SAT\, to the complement of \DCC. Let $\varphi$ be a boolean formula in 3CNF.
Let $X$ be the set of variables and let $\C=\{C_0, \ldots, C_{m-1}\}$ be the set of clauses comprising $\varphi = \bigwedge_{j=0}^{m-1} C_j$.

(1) Let $N^\varphi$ be the \CSTN $\langle V^\varphi, A^\varphi, L^\varphi, \Ord^\varphi, \Ord{V}^\varphi, P^\varphi \rangle$,
where: $V^{\varphi}\triangleq X\cup \C $, and all the nodes are given an empty label, \ie $L^{\varphi}(v)=\lambda$ for every $v\in V^{\varphi}$;
each variable in $X$ becomes an observation event and each clause in $\C$ a non-observation,
	\ie $P^{\varphi}\triangleq \Ord{V^{\varphi}} \triangleq X$, so, $\Ord^{\varphi}$ is the identity map;
moreover, all observation events will be forced to be executed simultaneously before any of the non-observation events, thus
for every $u,v\in\Ord{V^{\varphi}}$ we have $\langle u-v\leq 0, \lambda\rangle \in A^{\varphi}$,
and for every $x\in X$ and $C\in \C$ we have $\langle x-C\leq -1, \lambda\rangle \in A^{\varphi}$;
finally, there is a negative loop among all the $C\in \C$ which plays an important role in the rest of the proof, particularly,
for each $j=0,\ldots, m-1$ and for each literal $\ell\in C_j$, we have $\langle C_j-C_{(j+1)\text{mod } m}\leq -1, \ell\rangle \in \A_\varphi$.
Notice that $|V^{\varphi}| = n+m$ and $|A^{\varphi}|=n^2+nm+3m$.

(2) We show that, if $\varphi$ is satisfiable, there must be an unavoidable negative circuit among all the $C_j\in \C$.
Assume that $\varphi$ is satisfiable. Let $\nu$ be a satisfying truth-assignment of $\varphi$.
In order to prove that $N^\varphi$ is not dynamically-consistent, observe that the restriction of $N^\varphi$ \wrt the scenario $\nu$ is an inconsistent \STN.
Indeed, if for every $j=0, \ldots, m-1$ we pick a standard arc $\langle C_j-C_{(j+1)\text{mod } m}\leq-1,\ell_j\rangle$
with $\ell_j$ being a literal in $C_j$ such that $\nu(\ell_j)=\top$, then we obtain a negative circuit.

(3) We show that, if $\varphi$ is unsatisfiable, there can't be a negative circuit among the $C_j\in \C$ because for each scenario,
there will be at least one $j$ such that all three labels, $\alpha_j$, $\beta_j$ and $\gamma_j$ will be false.
Assume that $\varphi$ is unsatisfiable. In order to prove that $N^{\varphi}$ is dynamically-consistent,
we exhibit a viable and dynamic execution strategy $\sigma$ for $N^{\varphi}$.
Firstly, schedule every $x\in X$ at $\sigma(x)\triangleq 0$.
Therefore, by time $1$, the strategy has full knowledge of the observed scenario $\nu$.
Since $\varphi$ is unsatisfiable, there exists an index $j_{\nu}$ such that $\nu(C_{j_\nu})=\bot$.
At this point, set $\sigma(C_{(j_{\nu}+k)\text{mod }m})\triangleq k$ for each $k=1, \ldots, m$.
The reader can verify that $\sigma$ is viable and dynamic for $N^{\varphi}$.
\end{proof}

\begin{figure}[!htb]
\centering
\begin{tikzpicture}[arrows=->,scale=.8,node distance=.5 and .6, node/.style={circle,draw,minimum size=22pt}]
 	% ZERO NODE:
	\node[node,label={below:$\bf 0$}] (0C) {$0_C$};
	% LETTER NODES
	\node[node, below = of 0C, yshift=-7ex, scale=0.1, blackNode] (Rp2) {};
	\node[node, left = of Rp2, xshift=3ex, scale=0.1, blackNode] (Rp1) {};
	\node[node, right = of Rp2, xshift=-3ex, scale=0.1, blackNode] (Rp3) {};
	\node[node, left = of Rp1, xshift=0ex, yshift=0ex, label={above:$x_1?$}] (Ox1) {$\Ord_{x_1}$};
	\node[node, right = of Rp3, xshift=0ex, yshift=0ex, label={above:$x_n?$}] (Oxn) {$\Ord_{x_n}$};
	% CLAUSE NODES
	\node[node, above left = of 0C, xshift=0ex, yshift=5ex] (Ci) {$c_i$};
	\node[node, above right = of 0C, xshift=0ex, yshift=5ex] (Cip1) {$c_{i+1}$};
	\node[node, left = of Ci, xshift=0ex, scale=0.1, blackNode] (Lp1) {};
	\node[node, left = of Lp1, xshift=3ex, scale=0.1, blackNode] (Lp2) {};
	\node[node, left = of Lp2, xshift=3ex, scale=0.1, blackNode] (Lp3) {};
	\node[node, left = of Lp3, xshift=0ex, yshift=0ex] (C1) {$c_1$};
	\node[node, right = of Cip1, xshift=0ex, scale=0.1, blackNode] (Lp11) {};
	\node[node, right = of Lp11, xshift=-3ex, scale=0.1, blackNode] (Lp12) {};
	\node[node, right = of Lp12, xshift=-3ex, scale=0.1, blackNode] (Lp13) {};
	\node[node, right = of Lp13, xshift=0ex, yshift=0ex] (Cm) {$c_m$};
	%arcs
	\draw[] (Ox1) to [] node[xshift=-2ex, yshift=0ex, timeLabel,above] {$[0,0]$} (0C);
	\draw[] (Oxn) to [] node[xshift=2ex, yshift=0ex, timeLabel,above] {$[0,0]$} (0C);
	\draw[] (C1) to [] node[xshift=0ex, yshift=0ex, timeLabel,below left] {$-1$} (0C);
	\draw[] (Ci) to [] node[xshift=0ex, yshift=0ex, timeLabel,below left] {$-1$} (0C);
	\draw[] (Cip1) to [] node[xshift=0ex, yshift=0ex, timeLabel,below right] {$-1$} (0C);
	\draw[] (Cm) to [] node[xshift=0ex, yshift=0ex, timeLabel,below right] {$-1$} (0C);
	%%%%%%%%%%%%%%%%%%%%%%%%%%%%%%%%%%%%%%%%%%%%%%%%%%%%%%%%%%%%%%%%%% c_1 => c_{i}%
	\draw[decoration={zigzag}, dotted] (C1) to [bend left = 16] node[xshift=0ex, yshift=0ex, timeLabel,above, scale=1.5] {} (Ci);
	\draw[dotted] (C1) to [] node[xshift=0ex, yshift=0ex, timeLabel,below, scale=1.5] {} (Ci);
	\draw[dotted] (C1) to [bend right = 21] node[xshift=0ex, yshift=0ex, timeLabel,below, scale=1.5] {} (Ci);
	%%% c_i => c_{i+1}
	\draw[] (Ci) to [bend left = 30] node[xshift=0ex, yshift=0ex, timeLabel,above] {$-1, \alpha_i$} (Cip1);
	\draw[] (Ci) to [] node[xshift=0ex, yshift=0ex, timeLabel,below] {$-1, \beta_i$} (Cip1);
	\draw[] (Ci) to [bend right = 35] node[xshift=0ex, yshift=0ex, timeLabel,below] {$-1, \gamma_i$} (Cip1);
	%%% c_1 => c_{i}
	\draw[dotted] (Cip1) to [bend left = 16] node[xshift=0ex, yshift=0ex, timeLabel,above, scale=1.5] {} (Cm);
	\draw[dotted] (Cip1) to [] node[xshift=0ex, yshift=0ex, timeLabel,below, scale=1.5] {} (Cm);
	\draw[dotted] (Cip1) to [bend right = 21] node[xshift=0ex, yshift=0ex, timeLabel,below, scale=1.5] {} (Cm);
	%%% c_1 => c_m
	\draw[] (Cm) to [bend right = 25] node[xshift=0ex, yshift=0ex, timeLabel,above] {$-1, \gamma_m$} (C1);
	\draw[] (Cm) to [bend right = 35] node[xshift=0ex, yshift=0ex, timeLabel,above] {$-1, \beta_m$} (C1);
	\draw[] (Cm) to [bend right = 45] node[xshift=0ex, yshift=0ex, timeLabel,above] {$-1, \alpha_m$} (C1);
	%%%%%%%%%%%%%%%%%%%%%%%%%%%%%%%%%%%%%%%%%%%%%%%%%%%%%%%%%%%%
\end{tikzpicture}
\caption{The \CSTN $N^{\varphi}$ where $\varphi(x_1, \ldots, x_n) = \bigwedge_{i=1}^m c_i$ for $c_i =(\alpha_i \vee \beta_i \vee \gamma_i)$.}
\label{FIG:cstn_Cvarphi}
\end{figure}
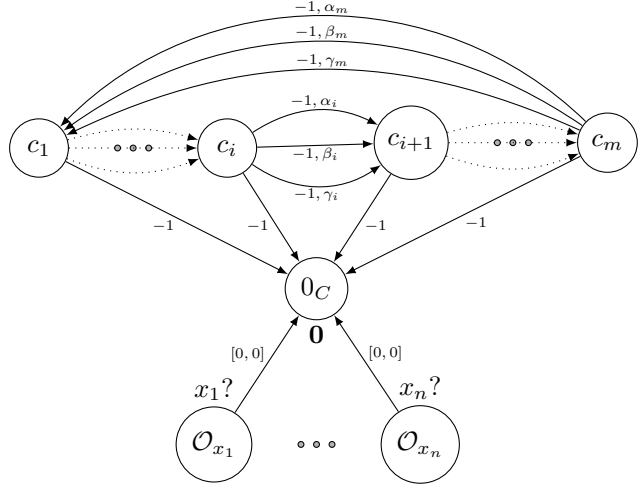

An illustration of the CSTN $N^{\varphi}$, which was constructed in the proof of Theorem~\ref{thm:dcc-conp-hard}, is shown in \figref{FIG:cstn_Cvarphi};
to ease the representation we have introduced an additional non-observation event $0_C$ in \figref{FIG:cstn_Cvarphi},
	which is executed at time $t=0$, together with all of the observation events in $X$.

Next, we show that when the input CHyTN instances are allowed to have \emph{both} multi-heads \emph{and} multi-tail LTHCs
then the DC-Checking problem becomes \PSPACE-hard.
\begin{Thm}\label{Teo:pspacehardness}
\GHyDCC is \PSPACE-hard, even if the input instances $\Gamma=\langle V, \A, L, \Ord, \Ord{V}, P \rangle$
	are restricted to satisfy the following two constraints:

-- $w_a(\cdot) \in [-n-1, n+1]\cap\Z$ and $\ell_a\in \{p, \neg p\mid p\in P\}\cup\{\lambda\}$
	for every weight $w_a$ and label $\ell_a$ appearing in any \emph{standard} LTHC $a\in\A$;

-- $w_A(\cdot)\in\{-1, 0, 1\}$, $\ell_A=\lambda$ and $|A|\leq 2$ for every weight $w_A$
	and label $\ell_A$ appearing in any \emph{multi-tail/head} LTHC $A\in \A$.
\end{Thm}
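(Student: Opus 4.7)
The natural approach is a polynomial-time reduction from 3-CNF-TQBF. Given a quantified Boolean formula $\psi = Q_1 x_1 \, Q_2 x_2 \cdots Q_n x_n \; \varphi(x_1,\ldots,x_n)$ with $\varphi=\bigwedge_{j=1}^m C_j$ in 3-CNF, I would build a CHyTN $\Gamma_\psi$ such that $\psi$ is true iff $\Gamma_\psi$ is \emph{not} dynamically-consistent; this yields $\PSPACE$-hardness of \GHyDCC since $\PSPACE$ is closed under complementation.

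The backbone of $\Gamma_\psi$ is the \emph{clause-cycle} gadget already used in Theorem~\ref{thm:dcc-conp-hard}: clause nodes $C_1,\ldots,C_m$ sit on a directed cycle, each consecutive pair joined by three parallel standard LTHCs of weight $-1$ labeled by the three literals of $C_j$, with a central node $0_C$ anchoring events in time via tight arcs. By the same argument used there, for any scenario/commitment $s$ the restriction $\Gamma^+_s$ admits this negative cycle iff $s$ satisfies $\varphi$. On top of this, I place variable gadgets in quantifier order so that the value of $x_i$ is locked in strictly before the value of $x_{i+1}$.

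For each universal variable ($Q_i = \forall$) I introduce an observation event $\Ord_{x_i}$ pinned at time $i$ by tight labeled standard arcs, so nature picks $s(x_i)$ on the fly. For each existential variable ($Q_i = \exists$) I introduce an observation event $\Ord_{x_i}$ together with a hyperarc gadget that forces the planner to pre-commit to the value of $x_i$ by time $i$, independently of future observations. A multi-head LTHC provides the planner with a binary disjunction ``schedule $T_i$ or schedule $F_i$''; each alternative is then coupled, via a multi-tail LTHC together with a few labeled unit-weight standard arcs, to a forbidden negative cycle in the opposite-polarity restriction. The net effect is that any viable strategy must tie the value of $\Ord_{x_i}$ to the planner's choice. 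All hyperarcs introduced this way are unlabeled, of bounded cardinality, and of weight in $\{-1,0,1\}$, while all newly introduced standard arcs carry literal labels and weights in $[-n-1,n+1]\cap\Z$, as required.

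Correctness then follows from a back-and-forth on the two game trees: at level $i$ the player moving in the TQBF game (existential/universal) coincides with the player moving in the DC game on $\Gamma_\psi$ (planner/nature), so a dynamic viable strategy for $\Gamma_\psi$ corresponds precisely to a winning strategy for the universal side of $\psi$. The main obstacle is the existential gadget: it must simultaneously (a) commit the planner to a binary choice strictly before the next observation fires, (b) propagate this commitment into the label semantics of the clause cycle so that exactly one of $x_i,\neg x_i$ becomes ``live'' in any surviving restriction, and (c) fit within the tight weight, label, and cardinality bounds of the theorem. My approach is to lay the gadget out on a fine integer time grid around $t=i$ and use small auxiliary negative-cycle threats of unit-weight labeled arcs to kill any ``wrong-polarity'' restriction, so that only the scenario consistent with the planner's committed value remains non-trivially playable.
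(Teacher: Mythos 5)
Your proposal is a plan rather than a proof, and the step you yourself flag as ``the main obstacle'' --- the existential gadget --- is exactly where it breaks. In a CHyTN the truth of a label is determined solely by Nature's observations; the Planner's only moves are scheduling times. Your clause-cycle (inherited from Theorem~\ref{thm:dcc-conp-hard}) reads the truth assignment off the \emph{scenario}, so every variable occurring in a clause label is de facto controlled by Nature. To hand control of the existentially quantified variables to the Planner you keep an observation event $\Ord_{x_i}$ and propose to ``force'' its outcome to agree with a scheduling commitment, punishing disagreement with a negative cycle in the opposite-polarity restriction. But a negative cycle in the disagreeing branch is a \emph{loss for the Planner}: Nature simply observes the opposite of whatever the Planner committed to and wins immediately. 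If instead disagreement leaves that branch satisfiable, Nature disagrees for free and the clause cycle is still evaluated against Nature's observed values, so the commitment is vacuous. Making the clause arcs ``dead'' under disagreement would require conjoining agreement literals into their labels, which violates the single-literal label restriction you are trying to meet. There is also a quantifier bookkeeping error: with the Planner controlling the $\exists$-variables, Nature the $\forall$-variables, and ``satisfying assignment $\Rightarrow$ negative cycle'', the network is dynamically consistent iff $Q_1x_1\cdots Q_nx_n\,\neg\varphi$ holds, whose negation is the dual-prefix formula with all quantifiers flipped, not $\psi$; this is moreover inconsistent with your own claim that a viable dynamic strategy corresponds to a strategy for the universal side of $\psi$ while the Planner commits the existential variables.

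The paper's construction avoids all of this by using no observation events for existential variables and no labels to check clauses. The Planner's choice of $x_j$ is encoded purely in the scheduling time of a node $t_{x_j}$ (time $j$ versus $j+1$); a pair of literal nodes $l_{x_j},l_{\overline{x}_j}$ is forced, by one multi-head and one multi-tail hyperarc with weights in $\{-1,0\}$ plus unit standard arcs, to occupy the times $n+1$ and $n+2$ in an order determined by $t_{x_j}$; and each clause is checked \emph{positively} by a multi-tail hyperarc requiring at least one of its three literal nodes to be scheduled at time $n+2$. Consequently the reduction runs in the direction ``$\psi$ is true iff $\Gamma_\psi$ \emph{is} dynamically consistent'', with the Planner literally playing the existential player. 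To salvage your route you would need to redesign the existential gadget along these temporal (rather than label-based) lines, at which point you would essentially have reconstructed the paper's proof.
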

\begin{proof}
To show that \GHyDCC is \PSPACE-hard, we describe a reduction from the problem 3-CNF-TQBF (True Quantified Boolean Formula in 3-CNF).

Let us consider a $3$-CNF quantified boolean formula  with $n\geq 1$ variables and $m\geq 1$ clauses:
\[\varphi(x_1, \ldots, x_n) = Q_1x_1\ldots Q_n x_n \bigwedge_{i=1}^m (\alpha_i \vee \beta_i \vee \gamma_i), \]
where for every $j\in [n]$ the symbol $Q_j$ is either $\exists$ or $\forall$, and where
$\C_i = (\alpha_i \vee \beta_i \vee \gamma_i)$ is the $i$-th clause of $\varphi$
and  each $\alpha_i,\beta_i,\gamma_i\in \{x_j, \neg x_j\mid 1\leq j\leq n\}$ is a positive or negative literal.
We also say that $Q_1x_1\ldots Q_n x_n$ is the \emph{prefix} of $\varphi$.

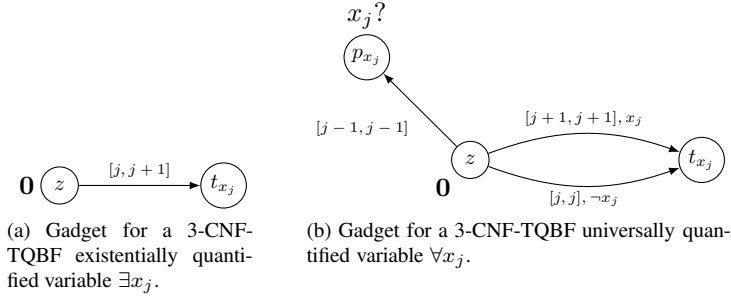
\begin{figure}[!htb]
\centering
\subfloat[Gadget for a 3-CNF-TQBF existentially quantified variable $\exists x_j$.\label{FIG:timeline_existential}]{
\begin{tikzpicture}[arrows=->,scale=.8,node distance=2 and 2]
	\node[node, label={left:$\bf 0$}] (zero) {$z$};
	\node[node, right=of zero] (Xjex) {$t_{x_j}$};
	%arcs
	\draw[] (zero) to [] node[above, timeLabel, yshift=.5ex] {$[j,j+1]$} (Xjex);
	\end{tikzpicture}\label{FIG:timeline_existential}
}\qquad
\subfloat[Gadget for a 3-CNF-TQBF universally quantified variable $\forall x_j$.\label{FIG:timeline_universal}]{
\begin{tikzpicture}[arrows=->,scale=.8,node distance=2 and 2]
	\node[node, label={below left:$\bf 0$}] (zero) {$z$};
	\node[node, above left=of zero, xshift=5ex, yshift=-5ex, label={above:$x_j?$}] (Xjunp) {$p_{x_j}$};
	\node[node, right=of zero, xshift=7ex] (Xjun) {$t_{x_j}$};
	%arcs
	\draw[] (zero) to [] node[above, timeLabel, xshift=-6ex, yshift=-3ex] {$[j-1,j-1]$} (Xjunp);
	\draw[] (zero) to [bend left=20] node[above, timeLabel, yshift=.5ex] {$[j+1,j+1], x_j$} (Xjun);
	\draw[] (zero) to [bend right=20] node[below, timeLabel, yshift=-.25ex] {$[j,j], \neg x_j$} (Xjun);
\end{tikzpicture}\label{FIG:timeline_universal}
}
\caption{Gadgets for quantified variables used in the reduction from 3-CNF-TQBF to \GHyDCC.}\label{fig.timeline}
\end{figure}

\underline{Construction.} We associate to $\varphi$ a CHyTN $\Gamma_{\varphi}=\langle V, \A, L, \Ord, \Ord{V}, P \rangle$.
In so doing, our first goal is to simulate the interaction between two players:
Player-$\exists$ (corresponding to the Planner in CHyTNs) and Player-$\forall$ (corresponding to the Nature in CHyTNs),
which corresponds directly to the chain of alternating quantifiers in the prefix of $\varphi$.
Naturally, the Planner is going to control those variables that are quantified existentially in $\varphi$,
whereas the Nature is going to control (by means of some observation events in $\Ord{V}$) those variables that are quantified universally in $\varphi$.
Briefly, $P$ contains one boolean variable for each universally quantified variable of $\varphi$,
and $V$ contains the following: two special events $z$ and $z’$ to be executed at time $0$ and $n+1$, respectively;
an observation event $p_{x_j}$ for each universally quantified variable $\forall x_j$;
a non-observation event $t_{x_j}$ for each quantified variable $x_j$; two non-observation events $l_{x_j}$ and $l_{\overline{x}_j}$ for each quantified variable $x_j$,
these will play (respectively) the role of positive and negative literals of $\varphi$
	(\ie the $\alpha$, $\beta$ and $\gamma$ in each clause $\C_i$); finally, a non-observation event $\C_i$ for each clause.

Let us describe the low-level details of $\Gamma_{\varphi}$.
We let $P\triangleq \{x_j\mid ``\forall x_j" \text{ appears in the prefix of } \varphi\}$.
Moreover, $V$ contains a node $z$ (\ie the \emph{zero} node that has to be executed at time $t=0$).

Next, for each existential quantification $\exists x_j$ in the prefix of $\varphi$,
$V$ contains a node named $t_{x_j}$ and $\A$ contains the following two standard LTHCs:
$(z,t_{x_j}, j+1, \lambda)$ and $(t_{x_j}, z, -j, \lambda)$; the underlying intuition being that,
during execution, it will be the responsibility of the Planner to schedule $t_{x_j}$ either at time $j$
(and this means that the Planner chooses to set $x_j$ to \texttt{false} in $\varphi$) or
	at time $j+1$ (and this means that he chooses to set $x_j$ to \texttt{true} in $\varphi$).
See \figref{FIG:timeline_existential} for an illustration of the $\exists x_j$ gadget.

Moreover, for each universal quantification $\forall x_j$ in the prefix of $\varphi$ (\ie for each $x_j\in P$),
$V$ contains two nodes named $p_{x_j}$ and $t_{x_j}$. Particularly, $p_{x_j}$ is an observation event (\ie $p_{x_j}\in {\Ord}V$) such
that $\Ord(x_j)=p_{x_j}$; hence, ${\Ord}V\triangleq \{p_{x_j}\mid x_j\in P\}$.
Also, for each $\forall x_j$ in $\varphi$'s prefix (\ie for each $x_j\in P$), $\A$ contains the following six standard LTHCs:
$(z, p_{x_j}, j-1, \lambda)$, $(p_{x_j}, z, -j+1, \lambda)$, $(z, t_{x_j}, j+1, x_j)$, $(t_{x_j}, z, -j-1, x_j)$,
$(z, t_{x_j}, j, \neg x_j)$ and $(t_{x_j}, z, -j, \neg x_j)$; the underlying intuition being that
the Nature must choose whether to schedule $t_{x_j}$ at time $j$ (setting $x_j$ to \texttt{false} in $\varphi$
by controlling the observation event $p_{x_j}$) or at time $j+1$ (setting $x_j$ to \texttt{true} in $\varphi$
again, by controlling the observation event $p_{x_j}$).
\figref{FIG:timeline_universal} illustrates the gadget for universally quantified variables $\forall x_j$.

In both cases (existentially and universally quantified variables),
the weights of the involved standard temporal constraints depend on $j$
in such a way that their scheduling times and their corresponding propositional
choices must occur one after the other in time. More precisely, for every $j\in [n]$,
$t_{x_j}$ is going to be scheduled either at time $j$ (if $x_j$ is \texttt{true} in $\varphi$)
or at time $j+1$ (when instead $x_j$ is \texttt{false} in $\varphi$). In addition to this,
when $x_j$ is quantified universally in $\varphi$ (\ie when $x_j\in P$),
the observation event that determines its propositional value (\ie $p_{x_j}$) is always scheduled at time $j-1$
(and this leaves enough space for the reaction time; actually, an entire unit of time between time $j-1$ and time $j$).

This concludes the description of our gadgets for simulating the chain of alternating quantifiers in the prefix of $\varphi$.

At this point, we have an additional node in $V$, named $z'$, which is always scheduled at time $n+1$; for this,
$\A$ contains the following two standard LTHCs: $(z, z', n+1, \lambda)$ and $(z', z, -n-1, \lambda)$.
Next, we shall describe two additional gadgets (that make use of $z’$) for simulating the 3-CNF formula
$\bigwedge_{i=1}^m (\alpha_i \vee \beta_i \vee \gamma_i)$, one for the literals, and one for the clauses.
We have a gadget for the positive (\ie $x_j$) and the negative (\ie $\neg x_j$) literals.
It goes as follows: for each $j\in [n]$, $V$ contains two nodes named $l_{x_j}$ (\ie positive literal) and $l_{\overline{x}_j}$ (\ie negative literal).
Moreover, $\A$ contains the following four standard LTHCs, $(z', l_{x_j}, 1, \lambda), (l_{x_j}, z', 0, \lambda)$
and $(z', l_{\overline{x}_j}, 1, \lambda), (l_{\overline{x}_j}, z', 0, \lambda)$,
plus the following \emph{multi-head} LTHC,
\[A^{h}({l_{x_j}, l_{\overline{x}_j}}) \triangleq  \Big(z', \{l_{x_j}, l_{\overline{x}_j}\}, \langle w(l_{x_j}), w(l_{\overline{x}_j})\rangle
	= \langle 0, 0\rangle, \langle L(l_{x_j}), L(l_{\overline{x}_j})\rangle=\langle\lambda,\lambda\rangle\Big),\]
and the following \emph{multi-tail} LTHC,
\[A^{t}({l_{x_j}, l_{\overline{x}_j}}) \triangleq  \Big(\{l_{x_j}, l_{\overline{x}_j}\}, z', \langle w(l_{x_j}), w(l_{\overline{x}_j})\rangle
	= \langle -1, -1\rangle, \langle L(l_{x_j}), L(l_{\overline{x}_j})\rangle=\langle\lambda,\lambda\rangle\Big).\]
The idea here is that the standard LTHCs are going to force the scheduling times of both $l_{x_j}$ and $l_{\overline{x}_j}$
to fall within the real interval $[n+1, n+2]$ (\ie not before $z'$ and at most $1$ time unit after $z'$).
Meanwhile, the multi-head constraint $A^{h}({l_{x_j}, l_{\overline{x}_j}})$
forces that at least one between $l_{x_j}$ and $l_{\overline{x}_j}$ happen
not later than time $n+1$ (\ie not later than the scheduling time of $z'$);
similarly, the multi-tail constraint $A^{t}({l_{x_j}, l_{\overline{x}_j}})$ is going to force that
at least one between $l_{x_j}$ and $l_{\overline{x}_j}$ happen not before time $n+2$ (\ie not before the scheduling time of $z'$ plus $1$).
Therefore, exactly one between $l_{x_j}$ and $l_{\overline{x}_j}$ will be forced to happen at time $n+1$, and the other one at time $n+2$.

Up to this point, the key idea is that, for every $j\in [n]$, we can force the scheduling time of each node $l_{x_j}$ and $l_{\overline{x}_j}$
to be uniquely determined, according to a suitable translation of the scheduling time of $t_{x_j}$.
Particularly, we want to schedule at time $n+1$ (\ie at the same scheduling time of $z'$) the one node between $l_{x_j}$
and $l_{\overline{x}_j}$ whose corresponding literal was chosen to be \texttt{false} in $\varphi$
(that is $l_{x_j}$ if $t_{x_j}$ was scheduled at time $j$, and $l_{\overline{x}_j}$ if $t_{x_j}$ was scheduled at time $j+1$);
similarly, we want to schedule at time $n+2$ (\ie at the same scheduling time of $z'$ \emph{plus} $1$ time unit)
the one node between $l_{x_j}$ and $l_{\overline{x}_j}$ whose corresponding literal was chosen to be \texttt{true}
(that is $l_{x_j}$ if $t_{x_j}$ was scheduled at time $j+1$, and $l_{\overline{x}_j}$ if $t_{x_j}$ was scheduled at time $j$).
In order to achieve this, for each $j\in [n]$, $\A$ contains the following two standard LTHCs:
$(t_{x_j}, l_{x_j}, n+1-j, \lambda)$ and $(l_{x_j}, t_{x_j}, -n-1+j, \lambda)$
	(in~\figref{FIG:Var_j} they are depicted with a unique arc $t_{x_j}\overset{[k,k], \lambda}{\longrightarrow} l_{x_j}$ where $k=n+1-j$);
in this way, $l_{x_j}$ is forced to happen at the same time of $t_{x_j}$ \emph{plus} $n+1-j$ units of time.
Therefore, if $t_{x_j}$ was scheduled at time $j$ (\ie $x_j$ is \texttt{false} in $\varphi$), then node $l_{x_j}$ is scheduled at time $j+n+1-j=n+1$;
otherwise, if $t_{x_j}$ was scheduled at time $j+1$ (\ie $x_j$ is \texttt{true} in $\varphi$), then node $l_{x_j}$ is scheduled at time $j+1+n+1-j=n+2$.
At this point, the scheduling time of the node $l_{\overline{x}_j}$ is determined uniquely thanks to the hyperarcs $A^{h}(l_{x_j}, l_{\overline{x}_j}), A^{t}(l_{x_j}, l_{\overline{x}_j})$
and the standard constraints $(z', l_{\overline{x}_j}, 1, \lambda)$, $(l_{\overline{x}_j}, z', 0, \lambda)$:
if the node $l_{x_j}$ is scheduled at time $n+1$ (\ie if $x_j$ is \texttt{false} in $\varphi$),
then $l_{\overline{x}_j}$ must be scheduled at time $n+1+1=n+2$ (\ie if $\overline{x}_j$ is \texttt{true} in $\varphi$) so that to satisfy
$A^{t}(l_{x_j}, l_{\overline{x}_j})$ and $(z', l_{\overline{x}_j}, 1, \lambda)$;
otherwise, if $l_{x_j}$ is scheduled at time $n+2$ (\ie if $x_j$ is \texttt{true} in $\varphi$),
then $l_{\overline{x}_j}$ must be scheduled at time $n+1+0=n+1$ (\ie if $\overline{x}_j$ is \texttt{false} in $\varphi$)
so that to satisfy $A^{h}(l_{x_j}, l_{\overline{x}_j})$ and $(l_{\overline{x}_j}, z', 0, \lambda)$.
Notice that the literals $\alpha_i,\beta_i,\gamma_i$ of $\varphi$
	are thus instances of the nodes $l_{x_i}$ or $l_{\overline{x}_i}$ described in~\figref{FIG:Var_j}.

\begin{figure}[tb]
\centering
\subfloat[Gadget for 3-CNF-TQBF positive $x_j$ and negative $\neg x_j$ literal.]{
\begin{tikzpicture}[arrows=->,scale=.8,node distance=2 and 2]
	\node[node] (zero1) {$z'$};
	\node[node, above right=of zero1] (nX) {$l_{\overline{x}_j}$};
	\node[node, above left=of zero1] (X) {$l_{x_j}$};
	\node[node, below=of zero1, label={below: $\bf 0$}] (zero) {$z$};
	\node[node, left=of zero] (Txj) {$t_{x_j}$};
 	%arcs
	%%%%%%% X's arcs %%%%%%
	\draw[] (zero1) to [bend left=40] node[below,timeLabel,xshift=-1ex] {$1$} (X);
	\draw[] (X) to [bend left=40] node[above,timeLabel,xshift=1ex] {$0$} (zero1);
	\draw[dashed, thick] (zero1) to [bend right=15] node[timeLabel,xshift=1.5ex] {$0$} (X);
	\draw[dotted, thick] (X) to [bend right=15] node[timeLabel,xshift=-2ex] {$-1$} (zero1);
	%%%%%%% nX's arcs %%%%%
	\draw[] (zero1) to [bend left=40] node[above,timeLabel,xshift=-1ex] {$1$} (nX);
	\draw[] (nX) to [bend left=40] node[below,timeLabel,xshift=1ex] {$0$} (zero1);
	\draw[dashed, thick] (zero1) to [bend left=15] node[timeLabel,xshift=-1ex] {$0$} (nX);
	\draw[dotted, thick] (nX) to [bend left=15] node[timeLabel,xshift=1ex] {$-1$} (zero1);
	\draw[] (zero) to [bend right=0] node[right,timeLabel] {$[n+1,n+1]$} (zero1);
	\draw[] (zero) to [bend right=10] node[above] {$\ldots$} (Txj);
	\draw[] (Txj) to [bend right=10] node[below] {$\ldots$} (zero);
	\draw[] (Txj) to [bend left=10] node[left,timeLabel] {$[n+1-j,n+1-j]$} (X);
\end{tikzpicture}\label{FIG:Var_j}
}
%\end{figure}
\quad
%\begin{figure}[th]
\subfloat[Gadget for 3-CNF-TQBF clause $\C_i = (\alpha_i \vee \beta_i \vee \gamma_i)$
where each $\alpha_i, \beta_i, \gamma_i$ is a positive or negative literal.]{
\begin{tikzpicture}[arrows=->,scale=.85,node distance=1.5 and 2]
% 	\draw[help lines,step=10pt] (-93pt,-155pt) grid (93pt, 25pt);
	\clip (-94pt,-155pt) rectangle (94pt, 25pt);
	\node[node,label={above:$\bf n+2$}] (one) {$\C_i$};
	\node[node,below =of one] (beta) {$\beta_i$};
	\node[node,left=of beta] (alpha) {$\alpha_i$};
	\node[node,right=of beta] (gamma) {$\gamma_i$};
 	\node[node,label={below:$\bf n+1$}, below=of beta] (zero) {$z'$};
 	\coordinate (fakeL) at ($(alpha.west)+(-1,0)$);
 	\coordinate (fakeR) at ($(gamma.east)+(1,0)$);
	%arcs
	%%%%%%% zero/one arcs %%%%%
 	\draw[] (zero.west) .. controls ($(fakeL)+(0,-10mm)$) and ($(fakeL)+(0,10mm)$) .. node[left,pos=.2] {$+1$} (one);
  	\draw[] (one) .. controls ($(fakeR)+(0,10mm)$) and ($(fakeR)+(0,-10mm)$) .. node[right,pos=.2] {$-1$} (zero.east);
	%%%%%%% alpha's arcs %%%%%%
	\draw[] (alpha) to [bend right=20] node[timeLabel,below] {} (zero);
	\draw[] (zero) to [bend right=20] (alpha);
	\draw[dotted, thick] (alpha) to [bend right=5] (zero);
	\draw[dashed, thick] (zero) to [bend right=5] (alpha);
	\draw[dotted, thick] (alpha) to [bend right=25] node[timeLabel,below] {$0$} (one.south west);
	%%%%%%% beta's arcs %%%%%
	\draw[] (zero) to [bend left=25] (beta);
	\draw[] (beta) to [bend left=25] (zero);
	\draw[dotted, thick] (beta) to [bend right=10] (zero);
	\draw[dashed, thick] (zero) to [bend right=10] (beta);
	\draw[dotted, thick] (beta) to [] node[timeLabel,right] {$0$} (one.south);
	%%%%%%% gamma's arcs %%%%%%
	\draw[] (gamma) to [bend left=20] (zero);
	\draw[] (zero) to [bend left=20] (gamma);
	\draw[dotted, thick] (gamma) to [bend left=5] (zero);
	\draw[dashed, thick] (zero) to [bend left=5] (gamma);
	\draw[dotted, thick] (gamma) to [bend left=25] node[timeLabel,below] {$0$} (one.south east);
\end{tikzpicture}\label{FIG:Cl_i}
}
\caption{Gadgets used for variables and clauses in the reduction from 3-CNF-TQBF to \GHyDCC.}\label{fig.gadget}
\end{figure}
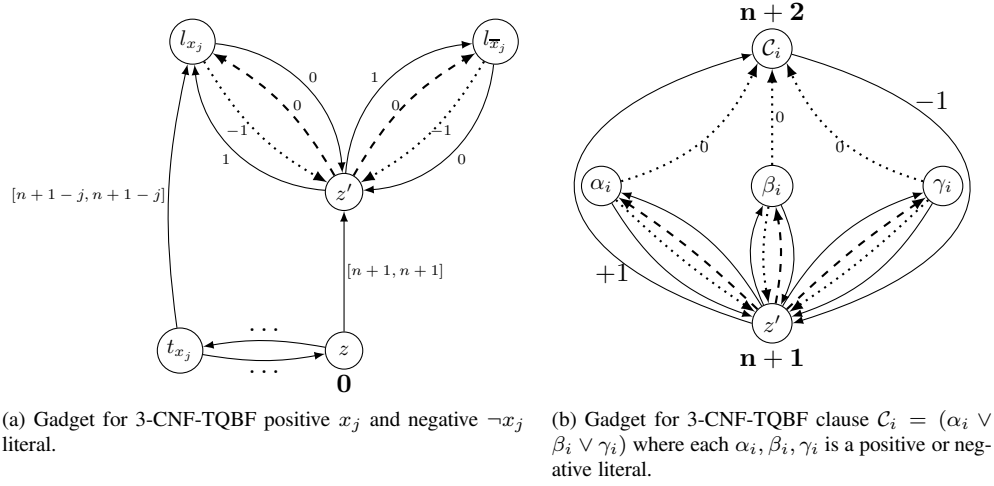

Finally, we describe the gadget for the clauses:
for each $i\in [m]$, the CHyTN $\Gamma_{\varphi}$ contains
	a node $\C_i$ for each clause $\C_i=(\alpha_i\vee \beta_i\vee\gamma_i)$ of $\varphi$; also, each node $\C_i$ is connected by:

-- a multi-tail hyperarc with head in $\C_i$ and tails over the literals $\alpha_i, \beta_i, \gamma_i$
	occurring in $\C_i$ and having weight $0$ and label $\lambda$, \ie by a multi-tail LTHC:
\[ A^c(\alpha_i, \beta_i, \gamma_i) \triangleq  \Big(\{\alpha_i, \beta_i, \gamma_i\}, \C_i,
		\langle w(\alpha_i), w(\beta_i), w(\gamma)_i \rangle=\langle 0, 0, 0\rangle,
			\langle L(\alpha_i), L(\beta_i), L(\gamma)_i\rangle=\langle\lambda, \lambda, \lambda \rangle \Big),\]
	for some literals $\alpha_i, \beta_i, \gamma_i\in \{l_{x_j}, l_{\overline{x}_j}\mid 1\leq j\leq n\}$.

-- two standard and opposite LTHCs, $(z', C_i, 1, \lambda)$ and $(C_i, z', -1, \lambda)$, with node $z'$.

See \figref{FIG:Cl_i} for an illustration of the clauses' gadget; the dashed arrows form the multi-head LTHCs and the dotted arrows form the multi-tail LTHCs.
Every node of $\Gamma_{\varphi}$ has an empty label, \ie $L(v)=\lambda$ for every $v\in V$.
The rationale of the clauses' gadget is that, for each $i$, at least one of the $\alpha_i, \beta_i, \gamma_i$
must occur at the same time instant of $C_i$ (i.e., at least one must occur at time $n+2$, because one of the literals must be true)

This concludes our description of $\Gamma_{\varphi}$.

More formally and succinctly, the CHyTN $\Gamma_{\varphi}=\langle V, \A, L, \Ord, \Ord{V}, P \rangle$ is defined as follows:
\begin{itemize}
	\item $P\triangleq \{ x_j \mid ``\forall x_j" \text{ appears in the prefix of } \varphi\}$;
	\item \begin{itemize} \item $V\triangleq \{z,z'\}\cup \{t_{x_j}\mid 1\leq j\leq n\}\cup \{p_{x_j}\mid x_j\in P\}\cup \\
				\cup \{l_{x_j} \mid 1\leq j \leq n\}\cup \{l_{\overline{x}_j}\mid 1\leq j \leq n\}\cup\{\C_i \mid 1\leq i\leq m\}$;
	\item ${\Ord}V\triangleq  \{p_{x_j}\mid x_j\in P\}$ and $\Ord(x_j)=p_{x_j}$ for every $x_j\in P$;
	\item $L(v)=\lambda$ for every $v\in V$;
		\end{itemize}
	\item $\displaystyle \A \triangleq  \bigcup_{j : ``\exists x_j" \in \varphi} \exists\text{-Qnt}_j \cup
		\bigcup_{j : ``\forall x_j" \in \varphi} \forall\text{-Qnt}_j \cup
		 \bigcup_{j=1}^n \text{Var}_j \cup \bigcup_{i=1}^m\text{Cla}_i \cup \big\{ (z, z', n+1, \lambda), (z', z, -n-1, \lambda) \big\}$, \\ where:
	\begin{itemize}
	\item $\exists\text{-Qnt}_j\triangleq \Big\{ (z,t_{x_j}, j+1, \lambda), (t_{x_j}, z, -j, \lambda) \Big\}$; \\
			This defines the existential quantifier gadget as depicted in \figref{FIG:timeline_existential};
	\item $\forall\text{-Qnt}_j\triangleq \Big\{  (z, p_{x_j}, j-1, \lambda), (p_{x_j}, z, -j+1, \lambda), \\
			(z, t_{x_j}, j+1, x_j), (t_{x_j}, z, -j-1, x_j), (z, t_{x_j}, j, \neg x_j), (t_{x_j}, z, -j, \neg x_j) \Big\}$; \\
			This defines the universal quantifier gadget as depicted in \figref{FIG:timeline_universal};
	\item $\text{Var}_j=\Big\{(z', l_{x_j}, 1, \lambda), (l_{x_j}, z', 0, \lambda),
					(z', l_{\overline{x}_j}, 1, \lambda), (l_{\overline{x}_j}, z', 0, \lambda), \\
			A^t_j \triangleq  %A^t(l_{x_j}, l_{\overline{x}_j})\triangleq
	\Big(\{l_{x_j}, l_{\overline{x}_j}\}, z', \langle w_{A^t_j}(l_{x_j}), w_{A^t_j}(l_{\overline{x}_j})\rangle = \langle -1, -1\rangle,
				\langle L_{A^t_j}(l_{x_j}), L_{A^t_j}(l_{\overline{x}_j})\rangle = \langle \lambda, \lambda\rangle \Big), \\
	 	A^h_j \triangleq  %A^h(l_{x_j}, l_{\overline{x}_j})\triangleq
	\Big(z', \{l_{x_j}, l_{\overline{x}_j}\}, \langle w_{A^h_j}(l_{x_j}), w_{A^h_j}(l_{\overline{x}_j}) \rangle = \langle 0, 0\rangle,
			\langle L_{A^h_j}(l_{x_j}), L_{A^h_j}(l_{\overline{x}_j})\rangle = \langle \lambda, \lambda\rangle \Big), \\
				(t_{x_j}, l_{x_j}, n+1-j, \lambda), (l_{x_j}, t_{x_j}, -n-1+j, \lambda) \Big\}$. \\
			This defines the variable gadget for $x_j$ as depicted in \figref{FIG:Var_j};
		\item $\text{Cla}_i=\Big\{(z', \C_j, 1), (\C_j, z', -1), \\
 			A^c_i\triangleq  \Big(\{\alpha_j, \beta_j, \gamma_j\}, \C_j,
				\langle w_{A^c_i}(\alpha_j), w_{A^c_i}(\beta_j), w_{A^c_i}(\gamma_j) \rangle = \langle 0,0,0\rangle, \\
				\langle L_{A^c_i}(\alpha_j), L_{A^c_i}(\beta_j), L_{A^c_i}(\gamma_j) \rangle =
											\langle \lambda,\lambda,\lambda \rangle \Big) \Big\}$.\\
	This defines the clause gadget for clause $\C_j = (\alpha_i \vee \beta_i \vee \gamma_i)$ as depicted in \figref{FIG:Cl_i}.
	\end{itemize}
\end{itemize}

Notice that $|V|\leq 1+4n+m=O(m+n)$ and $m_{\A}\leq 16n+5m=O(m+n)$; the transformation is thus linear.

\underline{Correctness.}
Let us show that $\varphi$ is \texttt{true} if and only if $\Gamma_{\varphi}$ is dynamically-consistent.

($\Rightarrow$)
Assume $\varphi$ is \texttt{true}, so Player-$\exists$ has a strategy to satisfy $\bigwedge_{i=1}^m (\alpha_i \vee \beta_i \vee \gamma_i)$
no matter how Player-$\forall$ decides to assign the universally quantified variables of $\varphi$.
Suppose that Player-$\exists$ and Player-$\forall$ alternate their choices by assigning a truth value to the variables of $\varphi$;
we can construct a dynamic and viable execution strategy $\sigma\in\S_{\Gamma_{\varphi}}$ for $\Gamma_{\varphi}$ by reflecting these choices, as follows.
The nodes $z$ and $z'$ are scheduled at time $0$ and $n+1$ (respectively) under all possible scenarios.
For each $j=1, \ldots, n$, the node $t_{x_j}$ is scheduled at time $j$ if $x_j$ is set to
\texttt{true} in $\varphi$, either by Player-$\exists$ or Player-$\forall$, otherwise at time $j+1$;
and, when $x_j$ is quantified universally in $\varphi$, the node $p_{x_j}$ is scheduled at time $j-1$ under all possible scenarios;
also, the node $l_{x_j}$ is scheduled at time $n+2$ if $x_j$ is set to \texttt{true} in $\varphi$,
either by Player-$\exists$ or Player-$\forall$, otherwise at time $n+1$;
symmetrically, $l_{\overline{x}_j}$ is scheduled at time $n+1$ if $x_j$ is \texttt{true} in $\varphi$, otherwise at time $n+2$.
Finally, for each $i=1, \ldots, m$, the node $\C_i$ is scheduled at time $n+2$ under all possible scenarios.
It is easy to check that all LTHCs of $\Gamma_\varphi$ are satisfied by $\sigma$ under all possible scenarios,
so $\sigma$ is viable for $\Gamma_\varphi$; moreover, since $\sigma$ reflects the alternating choices of Player-$\exists$ and Player-$\forall$, then $\sigma$ is also dynamic.
Therefore, $\Gamma_\varphi$ is dynamically-consistent.

($\Leftarrow$) Vice versa, assume that $\Gamma_{\varphi}$ is dynamically-consistent.
Let $\sigma\in\S_{\Gamma_{\varphi}}$ be a viable and dynamic execution strategy for $\Gamma_{\varphi}$.
Firstly, we argue that $\sigma$ is integer valued, \ie that $[\sigma(s)]_v\in \Z$ for every $v\in V$ and $s\in\Sigma_{\Gamma_{\varphi}}$.
Indeed, since $\sigma$ is viable, it is easy to check that the scheduling time of $z$, $z'$, $\C_i$ (for every $i=1, \ldots, m$)
and $p_{x_j}$ (for every universally quantified variable $x_j$ in $\varphi$) is forced to be $0$, $n+1$, $n+2$ and $j-1$ (respectively);
also, for each universally quantified variable $x_j$ in $\varphi$, the scheduling time of $p_{x_j}$ is forced to be $j-1$, and
that of $t_{x_j}$ is forced to be either $j$ or $j+1$ according to whether $x_j$ is \texttt{true} or $\texttt{false}$ in the current scenario.
Still, for each existentially quantified variable $x_j$ in $\varphi$,
the two standard LTHCs $(z,t_{x_j}, j+1, \lambda)$ and $(t_{x_j}, z, -j, \lambda)$ allow $t_{x_j}$ to be scheduled anywhere within $[j,j+1]$,
\ie even at non-integer values. However, on one side, the scheduling time of $l_{x_j}$ is forced to be that of $t_{x_j}$ \emph{plus} $n+1-j$,
on the other side, $l_{x_j}$ must be scheduled either at time $n+1$ or $n+2$ because of the multi-head $A^h(l_{x_j},
l_{\overline{x}_j})$ and multi-tail $A^t(l_{x_j}, l_{\overline{x}_j})$ LTHCs (respectively).
Thus, for $\sigma$ to be viable, $t_{x_j}$ must be scheduled either at time $j$ or $j+1$. Therefore, $\sigma$ is integer valued.
Now, suppose to execute $\sigma$ step-by-step over the integer line;
we can construct a strategy for Player-$\exists$ by
reflecting the integer choices that the Planner makes to schedule the nodes of $\Gamma_\varphi$, as follows.
For each existentially quantified variable $x_j$ in $\varphi$,
Player-$\exists$ sets $x_j$ to $\texttt{true}$ if the Planner schedules $t_{x_j}$ at time $j+1$
(\ie if $l_{x_j}$ is scheduled at time $n+2$, and $l_{\overline{x}_j}$ at time $n+1$),
and to $\texttt{false}$ otherwise (\ie if $t_{x_j}$ is at time $j$, $l_{x_j}$ at time $n+1$ and $l_{\overline{x}_j}$ at time $n+2$).
Then, since $\sigma$ is viable, for each clause $\C_i$ of $\varphi$,
at least one of the literals $\alpha_i, \beta_i, \gamma_i$ must be $\texttt{true}$, thanks to the multi-tail LTHC $A^c(\alpha_i, \beta_i, \gamma_i)$;
and since $\sigma$ is also dynamic, then Player-$\exists$ wins, so $\varphi$ is $\texttt{true}$.

 To conclude, notice that any LTHC $A\in\A$ of $\Gamma_{\varphi}$ has weights $w_A(\cdot)\in\{-1, 0, 1\}$ and size $|A|\leq 3$.
Since any hyperarc with three heads (tails) can be replaced by two hyperarcs each having at most two heads (tails),
then \GHyDCC remains \PSPACE-hard even if $w_A(\cdot)\in\{-1, 0, 1\}$ and $|A|\leq 2$ for every multi-tail/head LTHC $A\in \A$.
Also notice that $w_a(\cdot) \in [-n-1, n+1]\cap\Z$ and $\ell_a\in \{p, \neg p\mid p\in P\}\cup\{\lambda\}$
holds for every weight $w_a$ and label $\ell_a$ appearing in any \emph{standard} LTHC $a\in\A$. This concludes the proof.
\end{proof}

Theorem~\ref{Teo:pspacehardness} motivates the study of consistency problems on CHyTNs having either
only multi-head or only multi-tail hyperarcs. Since we are interested in dynamic consistency, where time moves only forward of course,
and the execution strategy depends only on past observations, from now on we shall consider only multi-head CHyTNs.

\subsection{$\epsilon$-dynamic consistency}
In CHyTN{s}, decisions about the precise timing of actions are postponed until execution time,
when information gathered from the execution of the observation events can be taken into account.
However, the Planner is allowed to factor in an observation, and modify its strategy in response to it,
only strictly after the observation has been made (whence the strict inequality in Definition~\ref{def:scenario_history}).
Notice that this definition does not take into account the actual reaction time, which, in most applications, is non-negligible.
In order to deliver algorithms that can also deal with the \emph{reaction time} $\epsilon$ of the Planner
we now introduce $\epsilon$-dynamic consistency, a refined notion of dynamic consistency.
The intuition underlying Definition~\ref{def:epsilonconsistency} is that to model a specific kind of disjunctive constraint:
given a small real number $\epsilon>0$, for any two scenarios $s_1, s_2\in \Sigma_P$ and any event $u\in V^+_{s_1, s_2}$, the scheduling time
of $u$ under $s_1$ must be greater or equal to either that of $u$ under $s_2$ or that of $v$ under $s_2$ plus $\epsilon$
for at least one $v\in\Delta(s_1; s_2)$. Let us remind the fact that, from now on, our CHyTNs admit only multi-head hyperarcs.
The definition of $\epsilon$-dynamic consistency follows below.

\begin{Def}[$\epsilon$-dynamic consistency]\label{def:epsilonconsistency}
Given any CHyTN $\langle V, \A, L, \Ord, \Ord{V}, P \rangle$ and any real number $\epsilon\in (0, \infty)$,
an execution strategy $\sigma\in\S_{\Gamma}$ is \emph{$\epsilon$-dynamic} if it satisfies all the $H_\epsilon\text{-constraints}$,
namely, for any two scenarios $s_1, s_2\in \Sigma_P$ and any event $u\in V^+_{s_1, s_2}$,
the execution strategy $\sigma$ satisfies the following constraint $H_{\epsilon}(s_1;s_2;u)$:
\[
	[\sigma(s_1)]_u \geq \min\Big(\big\{[\sigma(s_2)]_u\big\}\cup\big\{[\sigma(s_1)]_v + \epsilon\mid v\in\Delta(s_1; s_2)\big\}\Big)
\]
We say that a CHyTN $\Gamma$ is \emph{$\epsilon$-dynamically-consistent} if it admits $\sigma\in\S_{\Gamma}$ which is both viable and $\epsilon$-dynamic.

The problem of checking whether a given CHyTN is $\epsilon$-dynamically-consistent is named \emph{\eHyDCC}.
\end{Def}

It follows directly from Definition~\ref{def:epsilonconsistency} that,
whenever $\sigma\in S_{\Gamma}$ satisfies some $H_\epsilon(s_1;s_2;u)$, then $\sigma$ satisfies $H_{\epsilon'}(s_1;s_2;u)$
for every $\epsilon’\in(0, \epsilon]$ as well. This proves the following lemma.
\begin{Lem}
Let $\Gamma$ be a CHyTN.
If $\Gamma$ is $\epsilon$-dynamically-consistent for some real $\epsilon>0$,
then $\Gamma$ is $\epsilon'$-dynamically-consistent for every $\epsilon'\in (0, \epsilon]$.
\end{Lem}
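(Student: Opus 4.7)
The plan is to exhibit, as a witness for $\epsilon'$-dynamic consistency, the very same execution strategy $\sigma\in\S_\Gamma$ that witnesses the $\epsilon$-dynamic consistency of $\Gamma$. Since viability of $\sigma$ is independent of the reaction-time parameter (it is a property of each restriction $\Gamma^+_s$ as a \HTN), it suffices to check that $\sigma$ satisfies $H_{\epsilon'}(s_1;s_2;u)$ for every $s_1,s_2\in\Sigma_P$ and every $u\in V^+_{s_1,s_2}$.

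Fix arbitrary $s_1,s_2\in\Sigma_P$ and $u\in V^+_{s_1,s_2}$. By assumption, $\sigma$ satisfies $H_\epsilon(s_1;s_2;u)$, that is,
\[
  [\sigma(s_1)]_u \;\geq\; \min\Big(\big\{[\sigma(s_2)]_u\big\}\cup\big\{[\sigma(s_1)]_v+\epsilon\mid v\in\Delta(s_1;s_2)\big\}\Big).
\]
The step I would then carry out is a pointwise comparison of the two sets whose minimum appears on the right-hand side of $H_\epsilon$ and $H_{\epsilon'}$: the term $[\sigma(s_2)]_u$ is identical in both, while each term $[\sigma(s_1)]_v+\epsilon'$ satisfies $[\sigma(s_1)]_v+\epsilon'\leq [\sigma(s_1)]_v+\epsilon$ because $\epsilon'\leq\epsilon$. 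Hence the minimum appearing in $H_{\epsilon'}(s_1;s_2;u)$ is no larger than the one appearing in $H_\epsilon(s_1;s_2;u)$, and the inequality $H_\epsilon$ implies $H_{\epsilon'}$ immediately.

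There is essentially no obstacle here: the statement is a monotonicity observation about the parameter $\epsilon$ in the defining constraint, and as the paragraph preceding the lemma already remarks, it follows directly from Definition~\ref{def:epsilonconsistency}. The only care needed is to keep the viability condition conceptually separate from the $\epsilon$-dynamic condition (so that one does not need to reprove anything about the restrictions $\Gamma^+_s$), and to note that $\epsilon'>0$ so that $H_{\epsilon'}$ is indeed a well-defined instance of Definition~\ref{def:epsilonconsistency}. Quantifying over all triples $(s_1,s_2,u)$ then yields that $\sigma$ is $\epsilon'$-dynamic, hence $\Gamma$ is $\epsilon'$-dynamically-consistent.
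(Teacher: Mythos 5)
Your proposal is correct and matches the paper's own argument: the paper proves the lemma in one line by observing that $[\sigma(s_1)]_v+\epsilon'\leq[\sigma(s_1)]_v+\epsilon$ makes the minimum in $H_{\epsilon'}(s_1;s_2;u)$ no larger than that in $H_{\epsilon}(s_1;s_2;u)$, so the same strategy $\sigma$ witnesses $\epsilon'$-dynamic consistency. Your additional remarks on viability being independent of $\epsilon$ are correct and only make the argument more explicit.
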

Given any dynamically-consistent CHyTN,
we may ask for the maximum reaction time $\epsilon$ of the Planner beyond which the network is no longer dynamically-consistent.
\begin{Def}[Reaction time $\hat{\epsilon}$]\label{def:reaction_time}
Let $\Gamma$ be a CHyTN.
Let $\hat{\epsilon}\triangleq\hat{\epsilon}(\Gamma)$ be the least upper bound of the set of all real numbers $\epsilon>0$
such that $\Gamma$ is $\epsilon$-dynamically-consistent, \ie
\[
	\hat{\epsilon}\triangleq\hat{\epsilon}(\Gamma)\triangleq\sup\{\epsilon>0\mid \Gamma\text{ is } \epsilon\text{-dynamically-consistent}\}.
\]
\end{Def}
Let us consider the (affinely) extended real numbers $\overline{\RR}\triangleq \RR\cup\{-\infty, \infty\}$,
where \emph{every} subset $S$ of $\overline{\RR}$ has an infimum and a supremum.
Particularly, recall that $\sup\emptyset=-\infty$ and, if $S$ is unbounded above, then $\sup S=\infty$.

If $\Gamma$ is dynamically-consistent,
then $\hat{\epsilon}(\Gamma)$ exists and $\hat{\epsilon}(\Gamma)\neq -\infty$
(\ie the set on which we have taken the supremum in Definition~\ref{def:reaction_time} is non-empty),
as it is now proved in Lemma~\ref{lem:dynamic_impl_epsilon}.
\begin{Lem}\label{lem:dynamic_impl_epsilon}
Let $\sigma$ be a dynamic execution strategy for the CHyTN $\Gamma$.
Then, there exists a sufficiently small real number $\epsilon\in (0, \infty)$ such that $\sigma$ is $\epsilon$-dynamic.
\end{Lem}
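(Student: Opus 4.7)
The plan is to leverage the characterization of dynamicity given by Lemma~\ref{lem:dynamicimplequality}, and then define $\epsilon$ as the minimum ``slack'' available across the finitely many relevant triples $(s_1, s_2, u)$. The key observation is that both $\Sigma_P$ and $V$ are finite, so the whole universe of $H_\epsilon$-constraints is finite, and a uniform $\epsilon>0$ can be chosen by taking a minimum.

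First I would fix any triple $(s_1, s_2, u)$ with $s_1, s_2 \in \Sigma_P$ and $u \in V^+_{s_1, s_2}$, and split into two cases. If $[\sigma(s_1)]_u \geq [\sigma(s_2)]_u$, then the term $[\sigma(s_2)]_u$ already lies below the left-hand side of $H_\epsilon(s_1;s_2;u)$, so that constraint is satisfied for \emph{every} positive $\epsilon$. If instead $[\sigma(s_1)]_u < [\sigma(s_2)]_u$, then by the contrapositive of the implication in Lemma~\ref{lem:dynamicimplequality}, the conjunction $\bigwedge_{v\in\Delta(s_1;s_2)} [\sigma(s_1)]_u \leq [\sigma(s_1)]_v$ must fail; hence there exists $v^\ast \in \Delta(s_1;s_2)$ with the strict inequality $[\sigma(s_1)]_{v^\ast} < [\sigma(s_1)]_u$. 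Define the positive real
\[
  \delta(s_1,s_2,u) \;\triangleq\; \max\bigl\{\,[\sigma(s_1)]_u - [\sigma(s_1)]_v \;\bigm|\; v \in \Delta(s_1;s_2)\bigr\},
\]
which is strictly positive in this second case.

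Next I would set
\[
  \epsilon \;\triangleq\; \min\bigl\{\,\delta(s_1,s_2,u) \;\bigm|\; s_1,s_2\in\Sigma_P,\, u\in V^+_{s_1,s_2},\, [\sigma(s_1)]_u < [\sigma(s_2)]_u\bigr\},
\]
with the convention that the minimum over the empty set is, say, $1$ (any positive value). Since $|\Sigma_P|\leq 2^{|P|}$ and $|V|$ are finite, this minimum is over a finite collection of positive reals, hence $\epsilon \in (0,\infty)$. Finally, I would verify $H_\epsilon(s_1;s_2;u)$ for this $\epsilon$: in the first case it holds trivially since $[\sigma(s_2)]_u$ is among the quantities taken by $\min$ on the right-hand side, while in the second case the witness $v^\ast$ attaining $\delta(s_1,s_2,u)$ gives $[\sigma(s_1)]_u \geq [\sigma(s_1)]_{v^\ast} + \delta(s_1,s_2,u) \geq [\sigma(s_1)]_{v^\ast} + \epsilon$.

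The only delicate points are book-keeping: one must ensure the set of triples over which the minimum is taken is indeed finite (it is, being contained in $\Sigma_P\times\Sigma_P\times V$), and that the minimum is strictly positive (guaranteed by the strict inequality $[\sigma(s_1)]_{v^\ast} < [\sigma(s_1)]_u$ derived from dynamicity). There is no real combinatorial obstacle here; the lemma essentially says that a strict inequality, over a finite configuration space, can be converted into a uniform quantitative margin, and that margin is precisely the desired reaction time.
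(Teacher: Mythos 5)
Your proof is correct and follows essentially the same route as the paper's: use Lemma~\ref{lem:dynamicimplequality} to split on whether $[\sigma(s_1)]_u \geq [\sigma(s_2)]_u$, extract a strictly positive gap $[\sigma(s_1)]_u - [\sigma(s_1)]_{v^\ast}$ from a violating observation event in the remaining case, and take the minimum over the finitely many triples. The only cosmetic difference is that you define the per-triple margin as a $\max$ over all of $\Delta(s_1;s_2)$ while the paper takes a $\min$ over the heads with positive gap; both yield a valid positive $\epsilon$.
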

\begin{proof}
Let $s_1, s_2\in \Sigma_P$ be two scenarios and let us consider any event $u\in V^+_{s_1, s_2}$.
Since $\sigma$ is dynamic, then by Lemma~\ref{lem:dynamicimplequality} the following implication necessarily holds:
\[\Big( \bigwedge_{v\in\Delta(s_1; s_2)} [\sigma(s_1)]_u \leq [\sigma(s_1)]_v \Big) \Rightarrow [\sigma(s_1)]_u \geq [\sigma(s_2)]_u\;\; \tag{*}\]
Notice that, \wrt Lemma~\ref{lem:dynamicimplequality}, we have relaxed the equality $[\sigma(s_1)]_u = [\sigma(s_2)]_u$ in the implicand of (L\ref{lem:dynamicimplequality})
by introducing the inequality $[\sigma(s_1)]_u \geq [\sigma(s_2)]_u$.
At this point, we convert ($*$) from implicative to disjunctive form,
first by applying the rule of material
implication\footnote{The rule of material implication: $\models p\Rightarrow q \iff \neg p \vee q$.},
and then De Morgan's law\footnote{De Morgan's law: $\models \neg (p\wedge q)\iff \neg p\vee \neg q$.},
resulting in the following equivalent expression:
\[ \Big( [\sigma(s_1)]_u \geq [\sigma(s_2)]_u \Big) \vee \Big( \bigvee_{v\in\Delta(s_1; s_2)} [\sigma(s_1)]_u > [\sigma(s_1)]_v \Big) \tag{**} \]
Then, we argue that there exists a real number $\epsilon\in (0, \infty)$
such that the following disjunction holds as well:
\[ \Big( [\sigma(s_1)]_u \geq [\sigma(s_2)]_u \Big) \vee \Big( \bigvee_{v\in\Delta(s_1; s_2)} [\sigma(s_1)]_u \geq [\sigma(s_1)]_v +\epsilon \Big). \]
In fact, since the disjunction ($**$) necessarily holds, then
one can pick the following real number $\epsilon>0$:
\[ \epsilon\triangleq \min_{\langle s_1, s_2, u\rangle\in \Sigma_P\times\Sigma_P\times V^+_{s_1, s_2}} \epsilon(s_1;s_2;u), \]
where the values $\epsilon(s_1;s_2;u)\in (0, \infty)$ are defined as follows,
	for every $\langle s_1, s_2, u\rangle\in \Sigma_P\times\Sigma_P\times V^+_{s_1, s_2}$:
\[
\epsilon(s_1;s_2;u) \triangleq
\left\{
\begin{array}{ll}
1, \text{ if } [\sigma(s_1)]_u \geq [\sigma(s_2)]_u ;\\
\min\Big\{[\sigma(s_1)]_u - [\sigma(s_1)]_v \mid v\in\Delta(s_1; s_2),\; [\sigma(s_1)]_u > [\sigma(s_1)]_v\Big\}, \text{ otherwise}.
\end{array}
\right.
\]
This implies that $\sigma$ satisfies every $H_{\epsilon}$-constraint of $\Gamma$,
thus $\sigma$ is $\epsilon$-dynamic.
\end{proof}
Next, we prove a converse formulation of Lemma~\ref{lem:dynamic_impl_epsilon}.
\begin{Lem}\label{lem:epsilon_impl_dynamic}
Let $\sigma$ be an $\epsilon$-dynamic execution strategy for a CHyTN $\Gamma$, for some real number $\epsilon\in (0, \infty)$.

Then, $\sigma$ is dynamic.
\end{Lem}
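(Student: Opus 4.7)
The plan is to argue by contradiction. Suppose $\sigma$ is $\epsilon$-dynamic but not dynamic. By Lemma~\ref{lem:dynamicimplequality} this yields scenarios $s_1, s_2 \in \Sigma_P$ and an event $u_0 \in V^+_{s_1, s_2}$ with (i) $[\sigma(s_1)]_{u_0} \leq [\sigma(s_1)]_v$ for every $v \in \Delta(s_1; s_2)$, yet (ii) $[\sigma(s_1)]_{u_0} \neq [\sigma(s_2)]_{u_0}$. First, I would feed $(s_1, s_2, u_0)$ into $H_\epsilon(s_1; s_2; u_0)$: since (i) together with $\epsilon > 0$ forces every term $[\sigma(s_1)]_v + \epsilon$ to strictly exceed $[\sigma(s_1)]_{u_0}$, the minimum on the right-hand side can only be realised by $[\sigma(s_2)]_{u_0}$, so $[\sigma(s_1)]_{u_0} \geq [\sigma(s_2)]_{u_0}$; combining with (ii) this strengthens to $[\sigma(s_1)]_{u_0} > [\sigma(s_2)]_{u_0}$.

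Next, I would build an infinite sequence $v_0 = u_0, v_1, v_2, \ldots$ inside the finite set $V^+_{s_2}$ with strictly decreasing schedules $[\sigma(s_2)]_{v_i}$, deriving the desired contradiction. Whenever $v_i \in V^+_{s_1, s_2}$ and $[\sigma(s_1)]_{v_i} > [\sigma(s_2)]_{v_i}$ I would invoke $H_\epsilon(s_2; s_1; v_i)$: the strict inequality rules out $[\sigma(s_1)]_{v_i}$ as the minimum, so some $v_{i+1} \in \Delta(s_2; s_1)$ achieves $[\sigma(s_2)]_{v_{i+1}} \leq [\sigma(s_2)]_{v_i} - \epsilon$. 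If additionally $v_{i+1} \in V^+_{s_1}$, then $v_{i+1} \in \Delta(s_1; s_2)$ as well, and (i) gives $[\sigma(s_1)]_{v_{i+1}} \geq [\sigma(s_1)]_{u_0} > [\sigma(s_2)]_{u_0} > [\sigma(s_2)]_{v_{i+1}}$, preserving the invariant for the next step.

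The main obstacle is the case $v_{i+1} \in V^+_{s_2} \setminus V^+_{s_1}$, where $H_\epsilon$ is unavailable on $v_{i+1}$ because $v_{i+1} \notin V^+_{s_1, s_2}$. I would bypass this via WD2 applied to the observation event $v_{i+1} = \Ord_p$: since $s_2(L(\Ord_p)) = \top$ and $s_1(L(\Ord_p)) = \bot$, the two scenarios disagree on some literal $q$ appearing in $L(\Ord_p)$, and WD2 then supplies both $\Sub(L(\Ord_p), L(\Ord_q))$ (whence $\Ord_q \in V^+_{s_2}$) and a constraint $\langle \Ord_q - \Ord_p \leq -\epsilon_0, L(\Ord_p) \rangle \in \A$ for the network's intrinsic reaction time $\epsilon_0 > 0$, which under $s_2$ becomes $[\sigma(s_2)]_{\Ord_q} \leq [\sigma(s_2)]_{v_{i+1}} - \epsilon_0$. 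Since $s_1, s_2$ disagree on $q$, we have $\Ord_q \in \Delta(s_2; s_1)$, so taking $v_{i+2} = \Ord_q$ continues the strictly-decreasing chain without needing the $V^+_{s_1,s_2}$ invariant on the intermediate $v_{i+1}$.

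In both cases $[\sigma(s_2)]_{v_{i+1}} < [\sigma(s_2)]_{v_i}$, so the $v_i$'s are pairwise distinct elements of the finite set $V^+_{s_2}$, yielding the desired contradiction and completing the proof.
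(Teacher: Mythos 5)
Your first paragraph is a clean and correct deduction: by condition (i) every term $[\sigma(s_1)]_v+\epsilon$ strictly exceeds $[\sigma(s_1)]_{u_0}$, so the constraint $H_\epsilon(s_1;s_2;u_0)$ can only be witnessed by $[\sigma(s_2)]_{u_0}$, and (ii) upgrades the resulting $\geq$ to $[\sigma(s_1)]_{u_0}>[\sigma(s_2)]_{u_0}$. The subsequent descent through $H_\epsilon(s_2;s_1;\cdot)$, with the invariant maintained via (i) and the strictly decreasing values $[\sigma(s_2)]_{v_i}$ inside the finite set $V^+_{s_2}$, is also essentially sound. But note that this is a genuinely different route from the paper's, and it has one gap relative to the stated hypotheses.

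The gap is in your Case B. When $v_{i+1}\in V^+_{s_2}\setminus V^+_{s_1}$ you extend the chain by invoking WD2 and then asserting that the constraint $\langle \Ord_q-\Ord_p\leq -\epsilon_0, L(\Ord_p)\rangle$ ``under $s_2$ becomes $[\sigma(s_2)]_{\Ord_q}\leq[\sigma(s_2)]_{v_{i+1}}-\epsilon_0$.'' That step requires $\sigma(s_2)$ to be a \emph{feasible} schedule for the restriction $\Gamma^+_{s_2}$, i.e.\ it uses the \emph{viability} of $\sigma$. The lemma, however, assumes only that $\sigma$ is $\epsilon$-dynamic; viability is a separate condition (Definition~\ref{def:epsilonconsistency} lists ``viable and $\epsilon$-dynamic'' as two distinct requirements), and nothing in the $H_\epsilon$-constraints forces the WD2 arcs to be respected. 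So, as written, you prove the implication only for viable $\epsilon$-dynamic strategies. This weaker version does cover every use the paper makes of the lemma (Theorem~\ref{thm:epsilonconsistency} and Theorem~\ref{thm:mainreduction} only ever deal with strategies that are both viable and $\epsilon$-dynamic), but it is not the statement as given; you should either add viability as a hypothesis or find a way to close Case B without it.

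For comparison, the paper's proof avoids both the descent and any appeal to viability or WD2: it selects a failing triple minimizing the scheduling time $[\sigma(s_1)]_u$, argues from that minimality that $[\sigma(\hat{s_1})]_{\hat{u}}<[\sigma(s_2)]_{\hat{u}}$, and then a single application of $H_\epsilon(\hat{s_1};s_2;\hat{u})$ yields some $v\in\Delta(\hat{s_1};s_2)$ with $[\sigma(\hat{s_1})]_{\hat{u}}\geq[\sigma(\hat{s_1})]_v+\epsilon>[\sigma(\hat{s_1})]_v$, contradicting condition (i) outright. In effect the paper extracts the inequality $<$ from minimality, where you extract the opposite inequality $>$ from $H_\epsilon(s_1;s_2;u_0)$ and then must work considerably harder (the infinite descent, and the WD2 detour) to reach a contradiction.
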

\begin{proof}
For the sake of contradiction, let us suppose that $\sigma$ is not dynamic.
Let $F$ be the set of all the triplets $\langle u,s_1,s_2\rangle\in V^+_{s_1, s_2}\times\Sigma_P\times\Sigma_P$,
for which the implication (L\ref{lem:dynamicimplequality}) given in Lemma~\ref{lem:dynamicimplequality} does not hold.
Notice, $F\neq\emptyset$; indeed, since $\sigma$ is not dynamic, by Lemma~\ref{lem:dynamicimplequality}
there exists at least one $\langle u,s_1,s_2\rangle$ for which (L\ref{lem:dynamicimplequality}) doesn't hold.
So, it holds that $\langle u,s_1,s_2\rangle\in F$ if and only if the following two properties hold:
\begin{enumerate}
\item $[\sigma(s_1)]_u \leq [\sigma(s_1)]_v$, for every $v\in\Delta(s_1; s_2)$;
\item $[\sigma(s_1)]_u \neq [\sigma(s_2)]_u$.
\end{enumerate}
Let $\langle\hat{u}, \hat{s_1}\rangle$ be an event whose scheduling time $[\sigma(\hat{s_1})]_{\hat{u}}$ is minimum and for which ($1$) and ($2$) hold,
namely, let: \[ \langle\hat{u}, \hat{s_1}\rangle\triangleq \arg\min\Big\{[\sigma(s_1)]_u\mid \exists{s_2}\, \langle u,s_1,s_2\rangle\in F\Big\}.\]
Since $\langle \hat{u}, \hat{s_1}\rangle$ is minimum in $[\sigma(\hat{s_1})]_{\hat{u}}$, then
$[\sigma(\hat{s_1})]_{\hat{u}} \leq [\sigma(s_2)]_{\hat{u}}$ for every $s_2\in\Sigma_P$ such that $\langle\hat{u}, \hat{s_1}, s_2\rangle\in F$;
moreover, since $\langle\hat{u}, \hat{s_1}, s_2\rangle\in F$, then $[\sigma(\hat{s_1})]_{\hat{u}} \neq [\sigma(s_2)]_{\hat{u}}$ holds by (2),
so that $[\sigma(\hat{s_1})]_{\hat{u}} < [\sigma(s_2)]_{\hat{u}}$.
At this point, recall that $\sigma$ is $\epsilon$-dynamic by hypothesis, hence $[\sigma(\hat{s_1})]_{\hat{u}} < [\sigma(s_2)]_{\hat{u}}$ implies that
there exists $v\in\Delta(\hat{s_1}; s_2)$ such that:
\[ [\sigma(\hat{s_1})]_{\hat{u}}\geq [\sigma(\hat{s_1})]_v+\epsilon>[\sigma(\hat{s_1})]_v,\]
but this inequality contradicts item ($1$) above. Indeed, $F=\emptyset$ and $\sigma$ is thus dynamic.
\end{proof}
In Section~\ref{sect:epsilon}, the following theorem is proved.
\begin{Thm}\label{thm:boundexp}
For any dynamically-consistent CHyTN $\Gamma$, where $V$ is the set of events and $\Sigma_P$ is the set of scenarios,
	it holds that $\hat{\epsilon}(\Gamma) \geq |\Sigma_P|^{-1}|V|^{-1}$.
\end{Thm}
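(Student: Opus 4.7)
The plan is to start from a dynamic execution strategy $\sigma$ for $\Gamma$ (available by hypothesis) and to construct from it a $(1/N)$-dynamic execution strategy $\sigma^\star$, where $N\triangleq |\Sigma_P|\cdot |V|$; by Definition~\ref{def:reaction_time} this immediately yields $\hat{\epsilon}(\Gamma)\geq 1/N$. Lemma~\ref{lem:dynamic_impl_epsilon} only guarantees that $\sigma$ is $\epsilon$-dynamic for \emph{some} positive $\epsilon$ that could be arbitrarily small; what we need instead is a strategy whose reaction slack is quantitatively bounded from below in terms of the structural parameters $|V|$ and $|\Sigma_P|$ of $\Gamma$ only.

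The first step is to extract from $\sigma$ a non-disjunctive feasible subsystem of linear difference constraints $L$ on variables indexed by the pairs $(s,v)$ with $s\in\Sigma_P$ and $v\in V^+_s$, whose number is at most $N$. For each scenario $s$ and each hyperarc $A\in\A^+_s$ of the restriction $\Gamma^+_s$, pick a head $v^\star\in H_A$ for which $\sigma(s)$ witnesses feasibility, generating the non-strict constraint $[\sigma(s)]_{v^\star}-[\sigma(s)]_{t_A}\leq w_A(v^\star)$ with integer right-hand side. For every triple $(s_1,s_2,u)$ with $u\in V^+_{s_1,s_2}$ such that $[\sigma(s_1)]_u<[\sigma(s_2)]_u$, Lemma~\ref{lem:dynamicimplequality} furnishes some $v^\star\in\Delta(s_1;s_2)$ with $[\sigma(s_1)]_{v^\star}<[\sigma(s_1)]_u$, and we record the strict constraint $[\sigma(s_1)]_{v^\star}-[\sigma(s_1)]_u<0$; for all remaining triples record the non-strict constraint $[\sigma(s_2)]_u-[\sigma(s_1)]_u\leq 0$. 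The resulting $L$ has integer non-strict right-hand sides, all strict right-hand sides equal to $0$, and is feasible by $\sigma$ itself.

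The second step is a negative-cycle argument applied to the standard difference-constraint graph $G_L$ associated with $L$. By Theorem~\ref{teo:charConservativeGraphs}, feasibility of the non-strict part is equivalent to $G_L$ being conservative. Since $\sigma$ also satisfies the strict constraints, every directed cycle of $G_L$ using at least one strict arc has strictly positive total weight; integrality of the weights (which comes from the integer-weighted assumption on the CHyTN) promotes the word ``positive'' into ``$\geq 1$''. Tighten now every strict arc from weight $0$ to weight $-1/N$, obtaining the graph $G_L^\star$: any simple cycle of $G_L^\star$ with $k\geq 1$ strict arcs must satisfy $k\leq N$, so the new cycle weight is at least $1-N\cdot(1/N)=0$; simple cycles without strict arcs are unaffected. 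Therefore $G_L^\star$ is conservative, and a further application of Theorem~\ref{teo:charConservativeGraphs} produces a feasible (real-valued) potential $\sigma^\star$ of the tightened non-strict system.

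It then remains to observe that $\sigma^\star$ is viable and $(1/N)$-dynamic for the original $\Gamma$: viability holds because each hyperarc $A\in\A^+_s$ has its selected head $v^\star$ satisfying the chosen standard-arc inequality, which implies the $\min$ defining hyperarc feasibility; and $(1/N)$-dynamicity holds because for every triple $(s_1,s_2,u)$ we have either $[\sigma^\star(s_2)]_u\leq[\sigma^\star(s_1)]_u$ or $[\sigma^\star(s_1)]_{v^\star}+1/N\leq[\sigma^\star(s_1)]_u$, and in either case the $\min$ appearing in $H_{1/N}(s_1;s_2;u)$ is upper-bounded by $[\sigma^\star(s_1)]_u$. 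The main delicate point is the integrality estimate that turns ``positive weight'' into ``weight at least $1$'' on strict cycles: this single quantitative observation is what converts a purely qualitative feasibility statement into the explicit $1/N$ lower bound, and it is exactly the portable technique the paper advertises as useful when dealing with other models of linear difference constraints that include strict inequalities.
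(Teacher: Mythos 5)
Your proof is correct, and it obtains the $|\Sigma_P|^{-1}|V|^{-1}$ bound by a genuinely different mechanism than the paper's, even though the two arguments share the same opening move: linearize the disjunctions by selecting, for each hyperarc, the head that the given schedule witnesses, and, for each triple $(s_1,s_2,u)$, either the inequality $[\sigma(s_1)]_u\geq[\sigma(s_2)]_u$ or a witnessing observation $v^\star\in\Delta(s_1;s_2)$ scheduled strictly earlier (the paper packages these constraints as the hyperarcs of $\H_{\epsilon'}(\Gamma)$ and then extracts the witnessed sub-\STN $T^{\phi}_{\epsilon'}(\Gamma)$, whereas you read them off directly from Lemma~\ref{lem:dynamicimplequality}). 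The divergence is in the quantitative step. The paper keeps the original schedule and only re-grids its fractional parts: each $r_v=\phi_v-\lfloor\phi_v\rfloor$ is replaced by $(\texttt{pos}(v)-1)/(|\Sigma_P|\cdot|V|)$, an order-preserving snap onto a grid of spacing $1/N$ that leaves the integer parts, and hence all integer-weight constraints, untouched while forcing every strict inequality to hold with slack at least $1/N$. You instead discard the schedule, observe that every cycle of the witnessed difference-constraint graph through a strict arc has integer weight, hence weight at least $1$, that a simple cycle carries at most $N$ strict arcs because it has at most $N$ nodes, and therefore that tightening each strict arc from $0$ to $-1/N$ preserves conservativeness; Theorem~\ref{teo:charConservativeGraphs} then re-derives a feasible potential from scratch. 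Your version makes it more transparent why $1/N$ is exactly the right granularity (one unit of integral cycle slack shared among at most $N$ strict arcs) and transfers verbatim to any mixed strict/non-strict system of integer difference constraints; the paper's version is more local, perturbing the given schedule rather than re-solving the system, which keeps the new strategy close to the old one. Both ultimately rest on the same two ingredients, witness selection and integrality of the non-strict right-hand sides, and your case analysis for viability and for $H_{1/N}(s_1;s_2;u)$ is complete, so the construction does yield a viable, $(1/N)$-dynamic execution strategy and hence $\hat{\epsilon}(\Gamma)\geq|\Sigma_P|^{-1}|V|^{-1}$.
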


\begin{figure}[!htb]
\centering
\subfloat[The CSTN $\Gamma_{\frac{1}{2}}$.]{
\begin{tikzpicture}[arrows=->,scale=0.8, node distance=1.5 and 1.5]\label{FIG:fractional-CSTN-CSTN}
	\node[node, label={above:$Y_1?$}] (Y1) {$Y_1$};
 	\node[node, above left = of Y1, xshift=0ex, label={left:$\textbf{0}$}, label={above:\small $X_1?$}] (X1) {$X_1$};
	\node[node, above right = of Y1, xshift=0ex] (Z1) {$Z_1$};
	%arcs
	\draw[] (X1) to [] node[xshift=0ex, yshift=0ex,above] {\footnotesize $1, X_1 Y_1$} (Z1);
	\draw[] (X1) to [] node[xshift=-3.5ex, yshift=0ex,below] {\footnotesize $[2,2], \neg X_1$} (Y1);
	\draw[] (Y1) to [] node[xshift=3.5ex, yshift=0ex,below] {\footnotesize $[2,2], \neg Y_1$} (Z1);
	\end{tikzpicture}
	} \quad
	\subfloat[A viable and $\epsilon$-dynamic execution strategy for $\Gamma_{\frac{1}{2}}$.]{
	\begin{tikzpicture}[scale=0.8, level distance=50pt, sibling distance=1pt]\label{FIG:fractional-CSTN-strategy}
	\Tree [. \framebox{$[\sigma_1(s)]_{X_1}=0$}
		 \edge node[left,xshift=-1ex]{$s(X_1)=\top$};
		 [. \framebox{$[\sigma_1(s)]_{Y_1}=\frac{1}{2}$}
			\edge node[left, xshift=-1ex]{$s(Y_1)=\top$};
			[. \framebox{$[\sigma_1(s)]_{Z_1}=1$}
			]
			\edge node[right, xshift=1ex]{$s(Y_1)=\bot$};
			[. \framebox{$[\sigma_1(s)]_{Z_1}=\frac{5}{2}$}
			]
		 ]
		 \edge node[right,xshift=1ex]{$s(X_1)=\bot$};
		 [. \framebox{$[\sigma_1(s)]_{Y_1}=2$}
			\edge node[xshift=1.3ex]{$s(Y_1)=\top$ or $s(Y_1)=\bot$};
			[. \framebox{$[\sigma_1(s)]_{Z_1}=4$}
			]
		 ]
	]
\end{tikzpicture} }
\caption{A dynamically-consistent CSTN whose viable and dynamic execution strategies are fractional.}
\label{FIG:fractional-CSTN}
\end{figure}
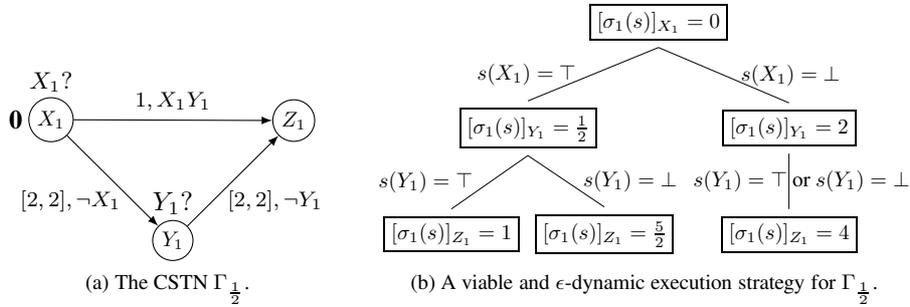

Notice that one really needs to consider rational values for $\hat{\epsilon}$, as it is shown in the following example.
\begin{Ex}\label{ex:CSTN-fractional}
Consider the \CSTN $\Gamma_{\frac{1}{2}}$ shown in \figref{FIG:fractional-CSTN-CSTN}.
The Planner needs to schedule and to observe $X_1$ at time $0$ under all possible scenarios.
But it is not viable to schedule $Y_1$ or $Z_1$ at time $0$, because $X_1$ and $Y_1$ may turn out to be $\bot$;
so $Y_1$ and $Z_1$ both need to be scheduled strictly after $0$. Next, assume that $X_1$ turns out to be $\top$ at time $0$.
Then, it is not viable to schedule $Y_1$ at time $1$,
because $Z_1$ needs to be scheduled within time $1$ if $Y_1$ is $\top$ and strictly after otherwise,
and the Planner can't react instantaneously to the observation made at $Y_1$.
Thus, if $X_1$ is $\top$ at time $0$, then $Y_1$ needs to be scheduled at time $t\in (0,1)$, \eg $t=\frac{1}{2}$.
The corresponding execution strategy is shown in \figref{FIG:fractional-CSTN-strategy}.
\end{Ex}
Also notice that, in Definition~\ref{def:consistency}, dynamic consistency was defined by strict-inequality and equality constraints.
However, by Theorem~\ref{thm:epsilonconsistency}, dynamic consistency can also be defined in terms
of $H_{\epsilon}$-constraints only (\ie no strict-inequalities are required).
\begin{Thm}\label{thm:epsilonconsistency}
Let $\Gamma$ be a CHyTN. Let $\epsilon\triangleq |\Sigma_P|^{-1}|V|^{-1}$.
Then, $\Gamma$ is dynamically-consistent if and only if $\Gamma$ is $\epsilon$-dynamically-consistent.
\end{Thm}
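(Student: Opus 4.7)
The plan is that this theorem should fall out as a direct corollary of the two tools we already have in hand: Lemma~\ref{lem:epsilon_impl_dynamic} for one direction, and Theorem~\ref{thm:boundexp} for the other. So the bulk of the work is really shifted into the proof of Theorem~\ref{thm:boundexp} (done later in Section~\ref{sect:epsilon}), and here my job is just to stitch the two pieces together cleanly, with care about how the supremum in Definition~\ref{def:reaction_time} relates to the specific value $\epsilon = |\Sigma_P|^{-1}|V|^{-1}$.

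First I would dispatch the $(\Leftarrow)$ direction. It is immediate: if $\Gamma$ is $\epsilon$-dynamically-consistent for $\epsilon = |\Sigma_P|^{-1}|V|^{-1}$, then in particular this $\epsilon$ is a strictly positive real number, so by Lemma~\ref{lem:epsilon_impl_dynamic} any viable, $\epsilon$-dynamic execution strategy $\sigma \in \S_{\Gamma}$ witnessing this is also dynamic, which makes $\Gamma$ dynamically-consistent. No further analysis is required here.

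For $(\Rightarrow)$ I would invoke Theorem~\ref{thm:boundexp}, which asserts $\hat{\epsilon}(\Gamma) \geq |\Sigma_P|^{-1}|V|^{-1}$ whenever $\Gamma$ is dynamically-consistent. The cleanest presentation is to inspect the proof of Theorem~\ref{thm:boundexp} (given in Section~\ref{sect:epsilon}) and observe that the lower bound is in fact attained, \ie it actually exhibits a viable execution strategy $\sigma \in \S_\Gamma$ that satisfies every $H_{\epsilon}(s_1; s_2; u)$-constraint for this exact value $\epsilon = |\Sigma_P|^{-1}|V|^{-1}$ (not only for every $\epsilon'$ strictly below it). That is exactly the definition of $|\Sigma_P|^{-1}|V|^{-1}$-dynamic consistency, so we are done.

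The one point requiring care, and the main obstacle in the write-up, is precisely that attainment issue: Definition~\ref{def:reaction_time} states $\hat{\epsilon}(\Gamma)$ as a supremum, and a priori the set $\{\epsilon > 0 \mid \Gamma \text{ is } \epsilon\text{-DC}\}$ is downward-closed but could a priori be either $(0, \hat\epsilon(\Gamma))$ or $(0, \hat\epsilon(\Gamma)]$. Hence reading Theorem~\ref{thm:boundexp} only as the inequality $\hat{\epsilon}(\Gamma) \geq |\Sigma_P|^{-1}|V|^{-1}$ is \emph{not} literally enough to conclude $\epsilon$-dynamic consistency at the boundary value $\epsilon = |\Sigma_P|^{-1}|V|^{-1}$. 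My plan is therefore to point this out explicitly and to appeal to the constructive content of the proof of Theorem~\ref{thm:boundexp}, which will produce a concrete viable strategy meeting each $H_\epsilon$-constraint at this exact $\epsilon$; combining the two directions then yields the stated equivalence.
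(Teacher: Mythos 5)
Your proposal is correct and follows essentially the same route the paper intends: the paper gives no separate proof of Theorem~\ref{thm:epsilonconsistency}, treating it as a consequence of Lemma~\ref{lem:epsilon_impl_dynamic} for the backward direction and of Theorem~\ref{thm:boundexp} for the forward one. Your concern about attainment of the supremum is well placed and correctly resolved: the paper's proof of Theorem~\ref{thm:boundexp} does not merely bound $\hat{\epsilon}(\Gamma)$ but explicitly constructs a schedule $\phi'$ witnessing consistency of $\H_{\hat{\epsilon}}(\Gamma)$ at the exact value $\hat{\epsilon}=|\Sigma_P|^{-1}|V|^{-1}$, concluding that $\Gamma$ is $\hat{\epsilon}$-dynamically-consistent for that value, which is precisely the constructive content you invoke.
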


By Theorem~\ref{thm:epsilonconsistency}, any algorithm for checking $\epsilon$-dynamic consistency can be used to check dynamic consistency.

\subsection{A (pseudo) Singly-Exponential Time Algorithm for \DCC and \HyDCC}\label{subsect:Algo}

In this section, we present a (pseudo) singly-exponential time algorithm for solving \DCC and \HyDCC,
 also producing a dynamic execution strategy whenever the input CHyTN is dynamically-consistent.

The main result of this paper is summarized in the following theorem, which is proven in the next Section~\ref{sect:epsilon}.
\begin{Thm}\label{thm:mainresult}
The following two algorithmic results hold for CHyTNs.
\begin{enumerate}
\item There exists an $O \big(|\Sigma_P|^{2}|\A|m_{\A} + |\Sigma_P|^3|V||\A||P| + |\Sigma_P|^{3}|V|m_{\A} + |\Sigma_P|^4|V|^2|P|\big)WD$
	time deterministic algorithm for deciding \eHyDCC on input $\langle\Gamma, \epsilon\rangle$,
	for any CHyTN $\Gamma=\langle V, \A, L, \Ord, \Ord{V}, P \rangle$ and any rational number $\epsilon=N/D$ where $N,D\in \N_+$.
Particularly, given any $\epsilon$-dynamically-consistent CHyTN $\Gamma$,
	the algorithm returns as output a viable and $\epsilon$-dynamic execution strategy $\sigma\in \S_\Gamma$.
\item There exists an $O\big(|\Sigma_P|^{3}|V||\A|m_{\A} + |\Sigma_P|^4|V|^2|\A||P| + |\Sigma_P|^{4}|V|^2m_{\A} + |\Sigma_P|^5|V|^3|P|\big)W$
	time deterministic algorithm for checking \HyDCC on any input $\Gamma=\langle V, \A, L, \Ord, \Ord{V}, P \rangle$.
Particularly, given any dynamically-consistent CHyTN $\Gamma$,
	it returns as output a viable and dynamic execution strategy $\sigma\in \S_\Gamma$.
\end{enumerate}
Here, $W\triangleq \max_{a\in A} |w_a|$.
\end{Thm}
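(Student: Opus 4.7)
The plan is to reduce the $\epsilon$-dynamic consistency problem for a CHyTN $\Gamma=\langle V,\A,L,\Ord,\Ord{V},P\rangle$ to an \HTNC instance on an (exponentially larger) multi-head \HTN, and then invoke Theorem~\ref{Teo:MainAlgorithms}. For each scenario $s\in\Sigma_P$, create a disjoint copy $V_s\triangleq\{u_s\mid u\in V^+_s\}$ of the relevant events, so that the node set of the auxiliary \HTN $\H_\Gamma$ has cardinality $O(|\Sigma_P||V|)$. For each $s\in\Sigma_P$ insert into $\H_\Gamma$ every hyperarc of the restriction $\Gamma^+_s$ (after re-labeling its endpoints by their $s$-copies); the resulting block of hyperarcs has total cardinality $O(|\Sigma_P|\,m_{\A})$ and its satisfaction by a schedule encodes the viability of the corresponding execution strategy.

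The second and more delicate block of hyperarcs encodes the $H_{\epsilon}$-constraints of Definition~\ref{def:epsilonconsistency}. The key observation is that the constraint
\[
[\sigma(s_1)]_u \geq \min\Big(\{[\sigma(s_2)]_u\}\cup\{[\sigma(s_1)]_v+\epsilon\mid v\in\Delta(s_1;s_2)\}\Big)
\]
has exactly the form required by a multi-head hyperarc: introduce one hyperarc with tail $u_{s_1}$, heads $\{u_{s_2}\}\cup\{v_{s_1}\mid v\in\Delta(s_1;s_2)\}$, weight $0$ on the head $u_{s_2}$, and weight $-\epsilon$ on each head $v_{s_1}$. Writing $\epsilon=N/D$ and scaling all weights of $\H_\Gamma$ by $D$ turns the entire construction into an integer-weighted multi-head \HTN with maximum absolute weight $O(WD)$. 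Iterating over all triples $(s_1,s_2,u)$ with $u\in V^+_{s_1,s_2}$ produces $O(|\Sigma_P|^2|V|)$ such hyperarcs, each of cardinality $O(|P|)$, hence of total size $O(|\Sigma_P|^2|V||P|)$. A straightforward two-way correspondence argument shows that viable $\epsilon$-dynamic execution strategies of $\Gamma$ are in bijection with feasible schedules of $\H_\Gamma$, so that a feasible schedule of $\H_\Gamma$ can be mechanically translated back into the desired $\sigma\in\S_\Gamma$.

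Plugging the sizes $|V'|=O(|\Sigma_P||V|)$, $|\A'|=O(|\Sigma_P||\A|+|\Sigma_P|^2|V|)$ and $m_{\A'}=O(|\Sigma_P|m_{\A}+|\Sigma_P|^2|V||P|)$ into the $O\big((|V'|+|\A'|)\,m_{\A'}\,W'\big)$ bound of Theorem~\ref{Teo:MainAlgorithms}, and taking $W'=O(WD)$, the cross-product expansion yields exactly the four summands in item~1. Item~2 then follows immediately by appealing to Theorem~\ref{thm:epsilonconsistency}: dynamic consistency is equivalent to $\epsilon$-dynamic consistency for $\epsilon=|\Sigma_P|^{-1}|V|^{-1}$, i.e.\ $N=1$, $D=|\Sigma_P||V|$, and substituting this $D$ into the bound of item~1 multiplies every term by one factor of $|\Sigma_P||V|$, producing the claimed running time. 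Recovery of a dynamic strategy is just the same translation applied to the schedule returned by the \HTNC algorithm.

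The main technical obstacle is verifying the bijection between feasible schedules of $\H_\Gamma$ and viable $\epsilon$-dynamic execution strategies of $\Gamma$: one direction is easy (flatten $\sigma$ to a schedule of $\H_\Gamma$), while the other direction requires checking that an arbitrary feasible schedule of $\H_\Gamma$ can be ``pasted'' across scenarios consistently, which is precisely what the $H_\epsilon$-hyperarcs enforce. A secondary point that deserves care is bookkeeping the encoding lengths (especially $m_{\A^+_s}\leq m_{\A}$ and $|\Delta(s_1;s_2)|\leq|P|$) to make sure no hidden $|V|$ or $|P|$ factor is missing in the final complexity bound.
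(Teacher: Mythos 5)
Your proposal is correct and follows essentially the same route as the paper: you build the same scenario-expanded multi-head \HTN $\H_{\epsilon}(\Gamma)$ (disjoint copies $V_s$, the restricted hyperarcs $\A^+_s$ for viability, and one hyperarc of cardinality at most $1+|P|$ per triple $(s_1,s_2,u)$ encoding $H_{\epsilon}(s_1;s_2;u)$), scale by $D$ to obtain integer weights, invoke the pseudo-polynomial \TNC algorithm, and specialize to $\epsilon=|\Sigma_P|^{-1}|V|^{-1}$ for item~2. The size accounting and the resulting four-term bound coincide with the paper's proof, which likewise rests on Theorem~\ref{thm:mainreduction}, Theorem~\ref{Teo:MainAlgorithms}, and Theorems~\ref{thm:boundexp}/\ref{thm:epsilonconsistency}.
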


Since every \CSTN is also a CHyTN, Theorem~\ref{thm:mainresult} holds for \CSTN{s} as well.

We now present the reduction from \HyDCC to \HTNC.
Again, since any \CSTN is a CHyTN, the same argument reduces \DCC to \HTNC.
Firstly, we argue that any CHyTN can be viewed as a succinct representation
which can be expanded into an exponentially sized \HTN.

The \emph{Expansion} of \CSTN{s} is introduced below.

\begin{Def}[Expansion $\langle V^{\text{Ex}}_{\Gamma}, \Lambda^{\text{Ex}}_{\Gamma}\rangle$]\label{def:expansion}
Let $\Gamma=\langle V, \A, L, \Ord, {\Ord}V, P \rangle$ be a CHyTN.
Consider the family of distinct and disjoint HyTN{s} $\langle V_s, \A_s\rangle$,
one for each scenario $s\in\Sigma_P$, which is defined as follows (where $v_s\triangleq (v,s)$ for every $v\in V$ and $s\in\Sigma_P$):
\begin{align*}
	V_s  \triangleq \{ & v_s \mid v\in V^+_s\},  \\
	\A_s \triangleq \Big\{ & \big( t_s, \underbrace{ \{h^{(1)}_s, \ldots, h^{(k)}_s\}}_{\text{heads labeled with } s},
			\underbrace{ \langle w(h^{(1)}_s), \ldots, w(h^{(k)}_s) \rangle}_{\text{corresponding weights}} \big)
				\,\Big|\,  \\
				&  \big( \underbrace{t}_{\text{tail}}, \underbrace{ \{h^{(1)}, \ldots, h^{(k)}\}}_{\text{heads}},
							\underbrace{ \langle w(h^{(1)}), \ldots, w(h^{(k)}) \rangle}_{\text{corresponding weights}} \big) \in \A^+_s \Big\} .
\end{align*}
(Of course, in the above notation, $k=1$ when $\Gamma$ is a \CSTN, whereas $k\in\N_+$ when $\Gamma$ is a CHyTN.)

Next, we define the \emph{expansion} $\langle V^{\text{Ex}}_{\Gamma}, \Lambda^{\text{Ex}}_{\Gamma}\rangle$ of $\Gamma$ as follows:
\[\langle V^{\text{Ex}}_{\Gamma}, \Lambda^{\text{Ex}}_{\Gamma}\rangle\triangleq
	\Big(\bigcup_{s\in\Sigma_P}V_s, \bigcup_{s\in\Sigma_P} \A_s\Big).\]
\end{Def}

Notice that $V_{s_1}\cap V_{s_2}=\emptyset$ whenever $s_1\neq s_2$ and that
$\langle V^{\text{Ex}}_{\Gamma}, \Lambda^{\text{Ex}}_{\Gamma}\rangle$ is an \STN/\HTN with at most $|V^{\text{Ex}}_{\Gamma}|\leq|\Sigma_P|\cdot |V|$ nodes
and size at most $|\Lambda^{\text{Ex}}_{\Gamma}|\leq |\Sigma_P|\cdot |\A|$.

We now show that the expansion of a CHyTN can be enriched with some (extra) multi-head hyperarcs
in order to model $\epsilon$-dynamic consistency, by means of a particular \HTN which is denoted by $\H_{\epsilon}(\Gamma)$.
\begin{Def}[\HTN $\H_{\epsilon}(\Gamma)$]\label{def:Hepsilonzero}
Let $\Gamma=\langle V, \A, L, \Ord, {\Ord}V, P \rangle$ be a CHyTN.
Given any real number $\epsilon\in(0, \infty)$, the \HTN $\H_{\epsilon}(\Gamma)$ is defined as follows:
\begin{itemize}
\item For every two scenarios $s_1, s_2\in\Sigma_P$ and for every event node $u\in V^+_{s_1, s_2}$,
define a hyperarc $\alpha\triangleq\alpha_{\epsilon}(s_1; s_2; u)$ as follows
(with the intention to model $H_{\epsilon}(s_1; s_2; u)$ from Def.~\ref{def:epsilonconsistency}):
\[\alpha_{\epsilon}(s_1; s_2; u)\triangleq \big( t_\alpha, H_\alpha, w_\alpha\big),
				\;\; \forall\, s_1, s_2\in\Sigma_P\text{ and } u\in V^+_{s_1, s_2}. \]
where:
\begin{itemize}
\item $t_\alpha\triangleq u_{s_1}$ is the tail of the (multi-head) hyperarc $\alpha_{\epsilon}(s_1;s_2;u)$;
\item $H_\alpha\triangleq \{u_{s_2}\}\cup \Delta(s_1; s_2)$ is the set of the heads of $\alpha_{\epsilon}(s_1;s_2;u)$;
\item $w_\alpha(u_{s_2})\triangleq 0$, and $w_\alpha(v)\triangleq - \epsilon$ for each $v\in\Delta(s_1; s_2)$.
\end{itemize}
\item Consider the expansion $\langle V^{\text{Ex}}_{\Gamma}, \Lambda^{\text{Ex}}_{\Gamma}\rangle$ of $\Gamma$.
Then, $\H_{\epsilon}(\Gamma)$ is defined as
$\H_{\epsilon}(\Gamma)\triangleq \big( V^{\text{Ex}}_{\Gamma}, \A_{H_{\epsilon}}\big)$,
where,
\[\A_{H_{\epsilon}}\triangleq \Lambda^{\text{Ex}}_{\Gamma} \cup \bigcup_{\substack{s_1,s_2\in\Sigma_P \\ u\in V^+_{s_1, s_2}}}\alpha_{\epsilon}(s_1;s_2;u).\]
\end{itemize}
\end{Def}

Notice that each $\alpha_{\epsilon}(s_1; s_2; u)$ has size $|\alpha_{\epsilon}(s_1; s_2; u)| = 1+\Delta(s_1;s_2)\leq 1+|P|$.

Here below, Algorithm~\ref{algo:pseudocode_construct_H} provides a pseudocode for constructing $\H_{\epsilon}(\Gamma)$.

\begin{algorithmenv}
\begin{algorithm}[H]\label{algo:pseudocode_construct_H}
\caption{$\texttt{construct\_}\H$$(\Gamma, \epsilon)$}
\KwIn{a CHyTN $\Gamma\triangleq\langle V, \A, L, \Ord, {\Ord}V, P \rangle$,	a rational number $\epsilon>0$}
\ForEach{($s\in\Sigma_P$)}{
	$V_s\leftarrow\{v_s \mid v\in V^+_{s}\}$; \tcp{ see Def.~\ref{def:expansion}}
	$A_s\leftarrow\{a_s\mid a\in A^+_s\}$; \tcp{ see Def.~\ref{def:expansion}}
}
$\displaystyle V^{\text{Ex}}_{\Gamma}\leftarrow \cup_{\substack{s\in\Sigma_P}}V_s$\;
$\displaystyle \Lambda^{\text{Ex}}_{\Gamma}\leftarrow \cup_{\substack{s\in\Sigma_P}} A_s$\;
\ForEach{($s_1, s_2\in\Sigma_P$ \texttt{\bf and} $u\in V^+_{s_1, s_2}$)}{
	$t_\alpha\leftarrow u_{s_1}$\;
	$H_\alpha\leftarrow \{u_{s_2}\}\cup \Delta(s_1; s_2)$\;
	$w_\alpha(u_{s_2})\leftarrow 0$\;
	\ForEach{$v\in\Delta(s_1; s_2)$}{
		$w_\alpha(v_{s_1})\leftarrow -\epsilon$\;
		}
		$\alpha_{\epsilon}(s_1; s_2; u)\leftarrow \big( t_\alpha, H_\alpha, w_\alpha \big)$\;
}
$\displaystyle\A_{\H_{\epsilon}}\leftarrow \Lambda^{\text{Ex}}_{\Gamma} \cup
			\bigcup_{\substack{s_1,s_2\in\Sigma_P \\ u\in V^+_{s_1, s_2}}}\alpha_{\epsilon}(s_1;s_2;u)$\;
$\H_{\epsilon}(\Gamma)\leftarrow \big( V^{\text{Ex}}_{\Gamma}, \A_{H_{\epsilon}} \big)$\;
\Return{$H_{\epsilon}(\Gamma)$;}
\end{algorithm}\smallskip
\caption{Constructing $\H_{\epsilon}(\Gamma)$.}
\end{algorithmenv}

\begin{Ex}
An excerpt of the \HTN $\H_{\epsilon}(\Gamma_0)$ corresponding to the \CSTN $\Gamma_0$ of \figref{FIG:cstn1} is depicted in \figref{FIG:cstn22htn};
here, two scenarios $s_1\triangleq p\wedge q$ and $s_4\triangleq \neg p\wedge \neg q$ are considered,
	on top we have ${\Gamma_0}^+_{s_1}$, whereas ${\Gamma_0}^+_{s_4}$ is below, finally,
		the corresponding hyperconstraints $H_\epsilon(s_1;s_4;u)$ and $H_\epsilon(s_4;s_1;u)$ are depicted as dashed hyperarcs.
	\end{Ex}

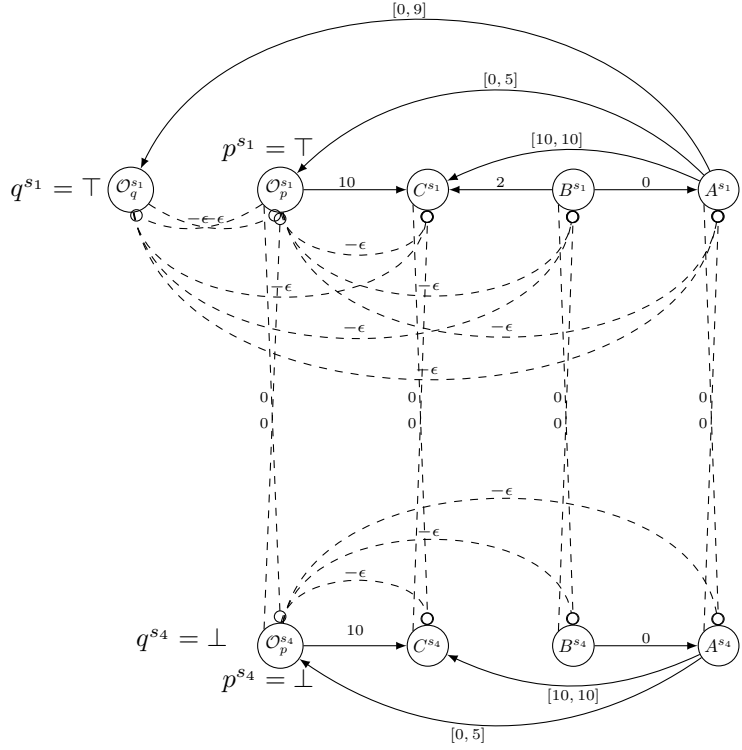
\begin{figure}[!htb]
\centering
\begin{tikzpicture}[arrows=->, scale=.68, node distance=8 and 2]
 	\node[node] (A') {$A^{s_1}$};
	\node[node,xshift=0ex,left=of A'] (B') {$B^{s_1}$};
	\node[node,xshift=0ex,left=of B'] (C') {$C^{s_1}$};
	\node[node,xshift=0ex,left=of C', label={above,xshift=-1ex:$p^{s_1}=\top$}] (P') {$\Ord_p^{s_1}$};
	\node[node,xshift=0ex,left=of P', label={left:$q^{s_1}=\top$}] (Q') {$\Ord_q^{s_1}$};
	\node[node, below=of P', label={below,xshift=-1ex,yshift=.5ex:$p^{s_4}=\bot$},
					label={left,xshift=-2ex,yshift=.5ex:$q^{s_4}=\bot$}] (P'''') {$\Ord_p^{s_4}$};
	\node[node, right=of P'''',yshift=0ex] (C'''') {$C^{s_4}$};
	\node[node, right=of C'''',yshift=0ex] (B'''') {$B^{s_4}$};
	\node[node, right=of B'''',yshift=0ex] (A'''') {$A^{s_4}$};

	%arcs s1
	\draw[] (A') to [bend right=25] node[xshift=-2ex, yshift=0ex,timeLabel,scale=1.2,above] {$[10,10]$} (C');
	\draw[] (B') to [] node[timeLabel,scale=1.2,above] {$2$} (C');
	\draw[] (B') to [] node[timeLabel,scale=1.2,above] {$0$} (A');
	\draw[] (A') to [bend right=45] node[xshift=0ex, yshift=0ex, timeLabel,scale=1.2,above] {$[0,5]$} (P');
	\draw[] (A') to [bend right=65] node[xshift=-2ex,yshift=0ex, timeLabel,scale=1.2,above] {$[0,9]$} (Q');
	\draw[] (P') to [] node[xshift=-1ex, timeLabel,scale=1.2,above] {$10$} (C');

	%arcs s4
	\draw[] (A'''') to [bend left=25] node[xshift=0ex, yshift=-.1ex,timeLabel,scale=1.2,below] {$[10,10]$} (C'''');
	\draw[] (B'''') to [] node[timeLabel,scale=1.2,above] {$0$} (A'''');
	\draw[] (A'''') to [bend left=35] node[xshift=-4ex, yshift=0ex, timeLabel,scale=1.2,below] {$[0,5]$} (P'''');
	\draw[] (P'''') to [] node[xshift=0ex,yshift=1ex, timeLabel,scale=1.2,above] {$10$} (C'''');

	%hyperarcs s4->s1
	\draw[>=o, multiHead] (A''''.north west) to [] node[xshift=-1ex, yshift=2ex,timeLabel,scale=1.2,above] {$0$} (A'.south);
	\draw[>=o, multiHead] (P') to [bend right=80] node[xshift=0ex, yshift=0ex,timeLabel,scale=1.3,above] {$-\epsilon$} (A'.south);
	\draw[>=o, multiHead] (Q') to [bend right=80] node[xshift=0ex, yshift=0ex,timeLabel,scale=1.3,above] {$-\epsilon$} (A'.south);
	\draw[>=o, multiHead] (B''''.north west) to [] node[xshift=-1ex, yshift=2ex,timeLabel,scale=1.2,above] {$0$} (B'.south);
	\draw[>=o, multiHead] (P') to [bend right=80] node[xshift=0ex, yshift=0ex,timeLabel,scale=1.3,above] {$-\epsilon$} (B'.south);
	\draw[>=o, multiHead] (Q') to [bend right=80] node[xshift=0ex, yshift=0ex,timeLabel,scale=1.3,above] {$-\epsilon$} (B'.south);
	\draw[>=o, multiHead] (C''''.north west) to [] node[xshift=-1ex, yshift=2ex,timeLabel,scale=1.2,above] {$0$} (C'.south);
	\draw[>=o, multiHead] (P') to [bend right=80] node[xshift=0ex, yshift=0ex,timeLabel,scale=1.3,above] {$-\epsilon$} (C'.south);
	\draw[>=o, multiHead] (Q') to [bend right=80] node[xshift=0ex, yshift=0ex,timeLabel,scale=1.3,above] {$-\epsilon$} (C'.south);
	\draw[>=o, multiHead] (P''''.north west) to [] node[xshift=-1ex, yshift=2ex,timeLabel,scale=1.2,above] {$0$} (P'.south);
	\draw[>=o, multiHead] (Q') to [bend right=30] node[xshift=0ex, yshift=0ex,timeLabel,scale=1.3,above] {$-\epsilon$} (P'.south);
	\draw[>=o, multiHead] (P') to [bend left=30] node[xshift=0ex, yshift=0ex,timeLabel,scale=1.3,above] {$-\epsilon$} (Q'.south);

	%hyperarcs s1->s4
	\draw[>=o, multiHead] (A'.south west) to [] node[xshift=-1ex, yshift=-2ex,timeLabel,scale=1.2,above] {$0$} (A''''.north);
	\draw[>=o, multiHead] (P'''') to [bend left=80] node[xshift=0ex, yshift=0ex,timeLabel,scale=1.3,above] {$-\epsilon$} (A''''.north);
	\draw[>=o, multiHead] (B'.south west) to [] node[xshift=-1ex, yshift=-2ex,timeLabel,scale=1.2,above] {$0$} (B''''.north);
	\draw[>=o, multiHead] (P'''') to [bend left=80] node[xshift=0ex, yshift=0ex,timeLabel,scale=1.3,above] {$-\epsilon$} (B''''.north);
	\draw[>=o, multiHead] (C'.south west) to [] node[xshift=-1ex, yshift=-2ex,timeLabel,scale=1.2,above] {$0$} (C''''.north);
	\draw[>=o, multiHead] (P'''') to [bend left=80] node[xshift=0ex, yshift=0ex,timeLabel,scale=1.3,above] {$-\epsilon$} (C''''.north);
	\draw[>=o, multiHead] (P'.south west) to [] node[xshift=-1ex, yshift=-2ex,timeLabel,scale=1.2,above] {$0$} (P''''.north);
\end{tikzpicture}
\caption{An excerpt of the \HTN $\H_{\epsilon}(\Gamma_0)$ corresponding to the \CSTN $\Gamma_0$ of Fig.~\ref{FIG:cstn1},
	in which two scenarios, $s_1$ and $s_4$, are considered.}
\label{FIG:cstn22htn}
\end{figure}

The following establishes the connection between dynamic consistency of CHyTN{s} and consistency of \HTN{s}.
\begin{Thm}\label{thm:mainreduction}
Given any CHyTN $\Gamma=\langle V, \A, L, \Ord, {\Ord}V, P \rangle$,
there exists a sufficiently small real number $\epsilon\in (0, \infty)$ such that
the CHyTN $\Gamma$ is dynamically-consistent if and only if the \HTN $\H_{\epsilon}(\Gamma)$ is consistent.

Moreover, the \HTN $\H_{\epsilon}(\Gamma)$ has at most
$|V_{\H_{\epsilon}}|\leq |\Sigma_P|\cdot |V|$ nodes, $|\A_{\H_{\epsilon}}|=O(|\Sigma_P||\A| + |\Sigma_P|^2|V|)$ hyperarcs,
and it has size $m_{\A_{\H_{\epsilon}}}=O(|\Sigma_P| m_{\A} + |\Sigma_P|^2|V|\, |P|)$.
\end{Thm}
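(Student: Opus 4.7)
The argument decomposes naturally into (a) a tight correspondence between $\epsilon$-dynamic consistency of $\Gamma$ and consistency of $\H_\epsilon(\Gamma)$, valid for every fixed $\epsilon>0$, and (b) a quantifier swap from the ``there exists $\epsilon$'' in the statement to a specific choice used to build $\H_\epsilon$. Step (b) is immediate from Lemmas~\ref{lem:dynamic_impl_epsilon} and~\ref{lem:epsilon_impl_dynamic}: if $\Gamma$ is dynamically-consistent then Lemma~\ref{lem:dynamic_impl_epsilon} exhibits a small enough $\epsilon>0$ under which some dynamic strategy is already $\epsilon$-dynamic, while conversely any $\epsilon$-dynamic strategy is automatically dynamic by Lemma~\ref{lem:epsilon_impl_dynamic}; hence it suffices to pick the $\epsilon$ provided by Lemma~\ref{lem:dynamic_impl_epsilon} (or, more quantitatively, $\epsilon=|\Sigma_P|^{-1}|V|^{-1}$ via the forthcoming Theorem~\ref{thm:epsilonconsistency}) and establish the per-$\epsilon$ equivalence.

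For the per-$\epsilon$ equivalence, the plan is to exhibit a bijection between viable $\epsilon$-dynamic execution strategies $\sigma\in\S_\Gamma$ and feasible schedules $\phi:V^{\text{Ex}}_\Gamma\to\RR$ of $\H_\epsilon(\Gamma)$, via $\phi(v_s)\triangleq [\sigma(s)]_v$ for every $s\in\Sigma_P$ and $v\in V^+_s$. This is well-defined because the expansion carries a copy $v_s$ exactly when $v\in V^+_s$, which is also the domain of $\sigma(s)$. Feasibility of $\phi$ on the ``scenario-slice'' $\langle V_s,\A_s\rangle$ of the expansion is then tautologically the same as viability of $\sigma(s)$ on the restriction $\Gamma^+_s$: by construction $\A_s$ is just $\A^+_s$ relabeled by $s$, and the slices are pairwise disjoint by Definition~\ref{def:expansion}, so together they exhaust $\Lambda^{\text{Ex}}_\Gamma$. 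Meanwhile, each constraint $H_\epsilon(s_1;s_2;u)$, namely
\[
[\sigma(s_1)]_u \geq \min\Big(\{[\sigma(s_2)]_u\}\cup\{[\sigma(s_1)]_v+\epsilon\mid v\in\Delta(s_1;s_2)\}\Big),
\]
translates, via $w_\alpha(u_{s_2})=0$ and $w_\alpha(v_{s_1})=-\epsilon$ for $v\in\Delta(s_1;s_2)$, into precisely the multi-head HyTN consistency condition $\phi(t_\alpha)\geq \min_{h\in H_\alpha}\{\phi(h)-w_\alpha(h)\}$ at the hyperarc $\alpha_\epsilon(s_1;s_2;u)$. Hence $\sigma$ is viable and $\epsilon$-dynamic iff $\phi$ is feasible for $\H_\epsilon(\Gamma)$. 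The only delicate point of the whole argument is keeping domains straight in the bijection; all the definitions have been tailored to make the verification a tautological unpacking of Definitions~\ref{def:epsilonconsistency} and~\ref{def:Hepsilonzero}.

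The size bounds are a direct counting exercise. The expansion contributes at most $|\Sigma_P|\cdot|V|$ nodes, $|\Sigma_P|\cdot|\A|$ hyperarcs, and total size at most $|\Sigma_P|\cdot m_{\A}$. The extra hyperarcs $\alpha_\epsilon(s_1;s_2;u)$ are indexed by triples $(s_1,s_2,u)\in\Sigma_P\times\Sigma_P\times V$, so there are at most $|\Sigma_P|^2|V|$ of them; each has size $1+|\Delta(s_1;s_2)|\leq 1+|P|$, adding $O(|\Sigma_P|^2|V||P|)$ to the total size. Summing the two contributions gives the announced $O(|\Sigma_P||\A|+|\Sigma_P|^2|V|)$ bound on the number of hyperarcs and the $O(|\Sigma_P|m_{\A}+|\Sigma_P|^2|V||P|)$ bound on size, concluding the proof.
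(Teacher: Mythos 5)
Your proposal is correct and follows essentially the same route as the paper's proof: the per-$\epsilon$ equivalence via the correspondence $\phi(v_s)=[\sigma(s)]_v$ (with the hyperarcs $\alpha_\epsilon(s_1;s_2;u)$ encoding exactly the $H_\epsilon$-constraints and the expansion slices encoding viability), the quantifier swap via Lemma~\ref{lem:dynamic_impl_epsilon}, and the same direct counting for the size bounds. If anything, your explicit appeal to Lemma~\ref{lem:epsilon_impl_dynamic} for the converse direction of the quantifier swap makes that step slightly more transparent than the paper's phrasing.
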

\begin{proof}
For any real number $\epsilon\in (0,\infty)$,
let $\H_{\epsilon}(\Gamma)=\langle V^{\text{Ex}}_{\Gamma}, \A_{H_{\epsilon}} \rangle$ be the \HTN of Definition~\ref{def:Hepsilonzero}.

(1) By Definitions~\ref{def:expansion}~and~\ref{def:Hepsilonzero},
	$|V_{\H_{\epsilon}}|=|V^{\text{Ex}}_\Gamma|\leq |\Sigma_P|\cdot |V|$; also,
		$|\A_{\H_{\epsilon}}| = |\Lambda^{\text{Ex}}_{\Gamma}| +
			\big| \bigcup_{s_1,s_2\in\Sigma_P; u\in V^+_{s_1, s_2}}\alpha_{\epsilon}(s_1;s_2;u)\big|=
				O(|\Sigma_P||\A| + |\Sigma_P|^2|V|)$,
					and since $\alpha_{\epsilon}(s_1;s_2;u)$ has at most $P$ heads,
						then $m_{\A_{\H_{\epsilon}}}=O(|\Sigma_P| m_{\A} + |\Sigma_P|^2|V|\, |P|)$.

(2) We claim that, for any $\epsilon>0$, $\H_{\epsilon}(\Gamma)$ is consistent
	if and only if $\Gamma$ is $\epsilon$-dynamically-consistent.

($\Rightarrow$) Given any feasible schedule $\phi:V_\Gamma^{\text{Ex}}\rightarrow\RR$ for the \HTN $\H_{\epsilon}(\Gamma)$,
let $\sigma_\phi(s)\in\S_{\Gamma}$ be the execution strategy
defined as follows:
\[
	[\sigma_\phi(s)]_v\triangleq \phi(v_s),\text{ for every } v_s\in V_{\Gamma}^{\text{E}},\text{ where } v\in V\text{ and }s\in\Sigma_P.
\]
Notice that each hyperarc $\alpha_{\epsilon}(s_1; s_2; u)$ is satisfied by $\phi$
if and only if the corresponding $H_{\epsilon}$-constraint $H_{\epsilon}(s_1; s_2; u)$ is satisfied by $\sigma_{\phi}$;
moreover, recall that $\Lambda^{\text{Ex}}_{\Gamma}\subseteq \A_{H_{\epsilon}}$,
and that $\Lambda^{\text{Ex}}_{\Gamma}$ contains all the \emph{original} standard/hyper
difference constraints of $\Gamma$ (\ie those induced by $\A$, by means of Def.~\ref{def:expansion}).
At this point, since $\phi$ is feasible for the \HTN $\H_{\epsilon}(\Gamma)$,
then $\sigma_{\phi}$ must be viable and $\epsilon$-dynamic for $\Gamma$ (because it satisfies all the required constraints).

Therefore, $\Gamma$ is $\epsilon$-dynamically-consistent.

($\Leftarrow$) Given any viable and $\epsilon$-dynamic execution strategy $\sigma\in\S_{\Gamma}$, for some real number $\epsilon\in (0,\infty)$,
let $\phi_{\sigma}:V^{\text{Ex}}_{\Gamma}\rightarrow\RR$ be the schedule of the \HTN $\H_{\epsilon}(\Gamma)$
defined as follows:
\[
	\phi_{\sigma}(v_s)\triangleq [\sigma(s)]_v\text{ for every } v_s\in V^{\text{Ex}}_{\Gamma},\text{ where } v\in V\text{ and } s\in\Sigma_P.
\]
Also in this case, we have that $\Lambda^{\text{Ex}}_{\Gamma}\subseteq \A_{H_{\epsilon}}$,
and a moment's reflection reveals that each hyperarc $\alpha_{\epsilon}(s_1; s_2; u)$ is satisfied by $\phi_{\sigma}$
if and only if  $H_{\epsilon}(s_1; s_2; u)$ is satisfied by $\sigma$.
At this point, since $\sigma$ is viable and $\epsilon$-dynamic for the CHyTN $\Gamma$,
then $\phi_{\sigma}$ must be feasible for $\H_{\epsilon}(\Gamma)$.
Therefore, $\H_{\epsilon}(\Gamma)$ is consistent.

This proves that, for any $\epsilon\in (0,\infty)$, $\H_{\epsilon}(\Gamma)$ is consistent
if and only if $\Gamma$ is $\epsilon$-dynamically-consistent.

(3) At this point, by composition with (1), Lemma~\ref{lem:dynamic_impl_epsilon} implies that
there exists a sufficiently small real number $\epsilon\in (0,\infty)$ such that
$\Gamma$ is dynamically-consistent if and only if $\H_{\epsilon}(\Gamma)$ is consistent.
\end{proof}

At this point, we are in the position to show the pseudocode for checking \eHyDCC,
	it is given in Algorithm~\ref{algo:solve_epsilonzeroDCC}:

\begin{algorithmenv}
\begin{algorithm}[H]\label{algo:solve_epsilonzeroDCC}
\caption{\texttt{check\_CHyTN-$\epsilon$-DC}$(\Gamma, \epsilon)$}

\KwIn{a CHyTN $\Gamma\triangleq\langle V, \A, L, \Ord, {\Ord}V, P \rangle$,
a rational number $\epsilon\triangleq N/D$, for $N,D\in\N_+$}

$\H_{\epsilon}(\Gamma)\leftarrow\texttt{construct\_}\H(\Gamma, \epsilon)$; \tcp{ref. Algorithm~\ref{algo:pseudocode_construct_H}}

\ForEach{($A=\langle t_A, H_A, w_A\rangle\in\A_{\H_{\epsilon}(\Gamma)}$ \texttt{\bf and} $h\in H_A$)}{
	$w_A(h)\leftarrow w_A(h)\cdot D$; \tcp{scale all weights of $\H_{\epsilon}(\Gamma)$, from $\Q$ to $\Z$}
}

$\phi\leftarrow \texttt{check\_\HTNC}(\H_{\epsilon}(\Gamma))$; \tcp{ref. Thm~\ref{Teo:MainAlgorithms}}

\If{($\phi$ is a feasible schedule of $\H_{\epsilon}(\Gamma)$)}{

\ForEach{(event node $v\in V_{\H_{\epsilon}(\Gamma)}$)}{
	$\phi(v)\leftarrow \phi(v)/ D$; \tcp{re-scale back to size the scheduling time, from $\Z$ to $\Q$, \wrt $\epsilon$}
}

\Return{$\langle\texttt{YES}, \phi\rangle$;}

}
\lElse{
\Return{$\texttt{NO}$}
}
\end{algorithm}\smallskip
\caption{Checking \eHyDCC on input $(\Gamma, \epsilon)$.}
\end{algorithmenv}

whereas, the pseudocode for checking \HyDCC is provided in Algorithm~\ref{algo:solve_DCC}, here below:

\begin{algorithmenv}
\begin{algorithm}[H]\label{algo:solve_DCC}
%\footnotesize
\caption{\texttt{check\_\DCC/\HyDCC}$(\Gamma)$}

\KwIn{a CHyTN $\Gamma\triangleq\langle V, \A, L, \Ord, {\Ord}V, P \rangle$}

$\hat{\epsilon} \leftarrow |\Sigma_P|^{-1}|V|^{-1}$; \tcp{ref. Thm.~\ref{thm:epsilonconsistency}}

\Return{\texttt{check\_CHyTN-$\epsilon$-DC$(\Gamma, \hat{\epsilon})$};}

\end{algorithm}
\caption{Checking \HyDCC on input $\Gamma$.}
\end{algorithmenv}

Notice that the latter (Algorithm~\ref{algo:solve_DCC}) invokes the former (Algorithm~\ref{algo:solve_epsilonzeroDCC}); more details follow.
\paragraph{Description of Algorithm~\ref{algo:solve_DCC}} Firstly, Algorithm~\ref{algo:solve_DCC}
computes a sufficiently small \emph{rational} number $\epsilon\in (0, \infty)\cap\Q$, by relying on Theorem~\ref{thm:epsilonconsistency},
\ie it is set $\hat{\epsilon}\triangleq |\Sigma_P|^{-1}|V|^{-1}$ (line~1).
Secondly, Algorithm~\ref{algo:solve_epsilonzeroDCC} is invoked on input $(\Gamma,\hat\epsilon)$.
At this point, Algorithm~\ref{algo:solve_epsilonzeroDCC} firstly constructs $\H_{\hat{\epsilon}}(\Gamma)$ (line~1 of
Algorithm~\ref{algo:solve_epsilonzeroDCC}) by invoking Algorithm~\ref{algo:pseudocode_construct_H},
and then it scales \emph{every} hyperarc's weight, appearing in $\H_{\hat{\epsilon}}(\Gamma)$, from $\Q$ to $\Z$ (at lines~2-3).
This is done by multiplying each weight by a factor $D$ (line~3), where $D\in \N_+$ is the denominator of $\hat\epsilon$ (\ie $D=|\Sigma_P|\cdot |V|$).
Thirdly, $\H_{\hat{\epsilon}}(\Gamma)$ is solved with the \HTNC-Checking algorithm underlying Theorem~\ref{Teo:MainAlgorithms} (at line~4),
\ie within the underlying algorithmic engine, an instance of the \HTNC problem is solved
	by reducing it to the problem of determining winning regions in a carefully constructed
		\MPG (see~\cite{CPR2014, CPR2015} for the details of such a reduction).
At this point, if the \HTNC algorithm outputs \texttt{YES}, together with a feasible schedule $\phi$ of $\H_{\hat{\epsilon}}(\Gamma)$,
then the time values of $\phi$ are scaled back to size \wrt $\hat{\epsilon}$,
and then $\langle \texttt{YES}, \phi \rangle$ is returned as output (lines~5-8);
otherwise, the output is simply \texttt{NO} (at line~9). Still, notice that, thanks to Item~3 of Theorem~\ref{Teo:MainAlgorithms},
we could also return a negative certificate, because negative instances are well
characterized in terms of generalized negative cycles (see Definition~\ref{def:negative_cycle}).

\begin{Rem}
The same algorithm, with essentially the same upper bound on its running time and space,
works also in case we allow for arbitrary boolean formulae as labels, rather than just conjunctions.
\end{Rem}
%\begin{Rem}\label{rem:rep_scenario}
%The algorithm lends itself to further optimizations.
%In fact, two scenarios $s_1$ and $s_2$ are \emph{equivalent} \wrt $\Gamma$ when they induce the same restriction network,
%\ie when $\Gamma^+_{s_1}=\Gamma^+_{s_2}$. This defines an equivalence relation on $\Sigma_P$.
%For any equivalence class, we may pick a representative scenario, and consider the set of representative scenarios only.
%Thus, inside Algorithm~\ref{algo:solve_DCC},
%$\Sigma_P$ can be regarded as the set of all the representative scenarios (one for each equivalence class).
%In general, the number of representative scenarios $|\Sigma_P|$ is bounded above by $2^{\min (|P|, l)}$,
%where $l$ is the number of distinct labels that appear in $\Gamma$,
%but in practice it may well be the case
%that $|\Sigma_P|=o(2^{\min(|P|, l)})$,
%or even $|\Sigma_P|=O(\text{poly}(\min(|P|, l)))$.
%\end{Rem}
\begin{Rem}
We remark that the \HTN/\MPG algorithm that is at the heart of
our approach requires integer weights (\ie it requires that $w(u,v)\in\Z$ for every $(u,v)\in A$);
somehow, we could not play it differently (see \cite{CPR2014,CPR2015} for a discussion).
Moreover, the algorithm always computes an integer solution to \HTN{s}/\MPG{s} and,
therefore, it always computes \emph{rational} feasible schedules for the CHyTN{s} given as input.
As such, it seems to us that this ``requirement`` actually turns out to be a plus in practice.
It is actually the integer assumption that allows us to analyze the algorithm quantitatively,
also presenting a sharp lower bounding analysis on the critical value of the reaction time $\hat{\epsilon}$,
where the CHyTN transits from being, to not being, dynamically-consistent.
We believe that these issues deserve much attention,
and going into them required a ``discrete" approach to the notion of numbers.
\end{Rem}

The correctness and the time complexity of
	Algorithms~\ref{algo:solve_epsilonzeroDCC}~and~\ref{algo:solve_DCC} is analyzed in Section~\ref{sect:epsilon}.

\section{Bounding Analysis on the Reaction Time $\hat{\epsilon}$}\label{sect:epsilon}
In this section we present an asymptotically sharp lower bound for $\hat{\epsilon}(\Gamma)$,
that is the critical value of reaction time where the CHyTN transits from being,
to not being, dynamically-consistent. The proof technique introduced in this analysis is applicable more generally,
when dealing with linear difference constraints which include strict inequalities.
This bound implies that Algorithm~\ref{algo:solve_DCC} is a (pseudo) singly-exponential time algorithm for solving \HyDCC.

To begin, we are going to provide a proof of Theorem~\ref{thm:boundexp}; for this, let us firstly introduce some further notation.

Let $\Gamma\triangleq\langle V, \A, L, \Ord, {\Ord}V, P \rangle$ be a dynamically-consistent CHyTN.
By Theorem~\ref{thm:mainreduction}, there exists $\epsilon>0$ such that the \HTN $\H_{\epsilon}(\Gamma)$ is consistent.
Then, let $\phi:V^{\text{Ex}}_{\Gamma}\rightarrow\RR$ be a feasible schedule for $\H_{\epsilon}(\Gamma)$.
For any hyperarc $A=\langle t_A, H_A, w_A\rangle\in\A_{\H_{\epsilon}}$, define a standard arc $a_A$ as follows:
\[a_A\triangleq \langle t_A, \hat{h}, w_A(\hat{h})\rangle,
\text{ where } \hat{h}\triangleq \arg\min_{h\in H_A}\big( \phi(h)-w_A(h) \big).\]
Then, notice that the network	$T^{\phi}_{\epsilon}(\Gamma)\triangleq \langle V^{\text{Ex}}_{\Gamma}, \bigcup_{A\in\A_{\H_\epsilon}} a_A\rangle$ is always an \STN.
Moreover, a moment's reflection reveals that, by definition of $\hat{h}$ as above, then $\phi$ is a feasible schedule for the \STN $T^{\phi}_{\epsilon}(\Gamma)$.

At this point, assuming $v\in V_\Gamma^{\text{Ex}}$, let us consider the \emph{fractional part} $r_v$ of $\phi_v$, \ie \[r_v\triangleq \phi_v-\lfloor\phi_v\rfloor.\]
Then, let $R\triangleq \{r_v\}_{v\in V^{\text{Ex}}_\Gamma}$ be the set of all the fractional parts induced by $V^{\text{Ex}}_\Gamma$.
Sort $R$ by the common ordering on $\RR$ and assume that $S\triangleq \{r_1, \ldots, r_k\}$ is the resulting ordered set (without repetitions), \ie $|S|=k$, $S=R$, $r_1<\ldots<r_k$.
Now, let $\texttt{pos}(v)\in [k]$ be the (unique) index position such that: \[ r_{\texttt{pos}(v)}=r_v. \]
Then, we define a new fractional part $r'_v$ as follows:
\begin{equation} r'_v\triangleq \frac{\texttt{pos}(v)-1}{|\Sigma_P|\cdot |V|}, \tag{NFP} \end{equation}
	and a new schedule function as follows:
\begin{equation} \phi'_v\triangleq \lfloor \phi_v\rfloor + r'_v. \tag{NSF} \end{equation}

Then the following holds.
\begin{Rem}\label{rem:invariant}
	Notice that (NFP) doesn't alter the ordering relation among the fractional parts,
	\ie \[ r'_u<r'_v\iff r_u<r_v, \text{ for any } u,v\in V^{\text{Ex}}_\Gamma.\]
	Moreover, since $\big( \texttt{pos}(v)-1 \big) < |\Sigma_P|\cdot |V|$,
		observe that (NSF) doesn't change the value of any integer part,
	\ie \[\lfloor \phi'_u\rfloor = \lfloor\phi_u\rfloor, \text{ for any } u\in V^{\text{Ex}}_\Gamma.\]
\end{Rem}

We are now in the position to prove Theorem~\ref{thm:boundexp}.
\begin{proof}[Proof of Theorem~\ref{thm:boundexp}]
Let $\Gamma\triangleq\langle V, \A, L, \Ord, {\Ord}V, P \rangle$ be dynamically-consistent. By Theorem~\ref{thm:mainreduction}
there exists $\epsilon'>0$ such that $\H_{\epsilon'}(\Gamma)$ is consistent
and it admits some \emph{feasible} schedule $\phi:V^{\text{Ex}}_\Gamma\rightarrow\RR$.
As mentioned, $\phi$ is feasible for the \STN $T^{\phi}_{\epsilon'}(\Gamma)$.
Now, let $\hat{\epsilon}\triangleq |\Sigma_P|^{-1}|V|^{-1}$.
Moreover, let $T^{\phi}_{\hat{\epsilon}}(\Gamma)$ be the \STN obtained from $T^{\phi}_{\epsilon'}(\Gamma)$ simply by
replacing, in the weights of the arcs, each weight $-\epsilon'$ with $-\hat{\epsilon}$.
We argue that $\phi'$ (as defined in (NSF) \wrt $\phi,V,\Sigma_P$), is a feasible schedule for the \STN $T^{\phi}_{\hat{\epsilon}}(\Gamma)$.
Indeed, every constraint of $T^{\phi}_{\hat{\epsilon}}(\Gamma)$ has form $\phi_v-\phi_u\leq w$, for some $w\in\Z$ or $w=-\hat{\epsilon}$.
\begin{itemize}
\item Consider the case $w\in\Z$. Notice that $\phi_v-\phi_u\leq w$ holds because $\phi$ is feasible for the \STN $T^{\phi}_{\epsilon'}(\Gamma)$.
Then, it is not difficult to see that $\phi'_v-\phi'_u\leq w$ holds as well, because of Remark~\ref{rem:invariant}.
%\begin{align*}
%\phi'_v-\phi'_u & =  \lfloor \phi_v\rfloor + r'_v - \lfloor \phi_u\rfloor - r'_u & \text{[by def. of (NSF)]} \\
%\end{align*}
%holds because of Remark~\ref{rem:invariant}.
\item Consider the case $w=-\hat{\epsilon}$.
Notice that $\phi_v-\phi_u\leq -\epsilon'$ holds because $\phi$ is feasible for the \STN $T^{\phi}_{\epsilon'}(\Gamma)$.

Then, notice that the following implication always holds,
	\[\phi_v-\phi_u\leq -\epsilon' \Longrightarrow \phi_v\neq \phi_u.\]
Hence, again by Remark~\ref{rem:invariant}, we can conclude that $\phi'_v\neq \phi'_u$.
At this point, we observe that the temporal distance between $\phi'_u$ and $\phi'_v$
is, therefore, at least $\hat{\epsilon}$ by definition of (NSF) and (NFP), \ie \[\phi'_u-\phi'_v\geq |\Sigma_P|^{-1}|V|^{-1}=\hat{\epsilon}.\]
That is to say, $\phi'_v - \phi'_u \leq -\hat{\epsilon}$.
\end{itemize}
This proves that $\phi'$ is a feasible schedule also for the \STN $T^{\phi}_{\hat{\epsilon}}(\Gamma)$.
Since $T^{\phi}_{\hat{\epsilon}}(\Gamma)$ is thus consistent, then,
a moment's reflection reveals that $\H_{\hat{\epsilon}}(\Gamma)$ is consistent as well thanks to the same schedule $\phi'$.

Therefore, by Theorem~\ref{thm:mainreduction}, the CHyTN $\Gamma$ is
$\hat{\epsilon}$-dynamically-consistent, provided that $\hat{\epsilon}\triangleq|\Sigma_P|^{-1}|V|^{-1}$.
\end{proof}

The correctness proof and the time complexity of Algorithm~\ref{algo:solve_DCC} is given next.
\begin{proof}[Proof of Theorem~\ref{thm:mainresult}]
To begin, notice that some of the temporal constraints introduced during
the reduction step depend on a sufficiently small parameter $\hat{\epsilon}\in (0, \infty)\cap\Q$,
whose magnitude turns out to depend on the size of the input CHyTN.
It is proved below that the time complexity of the algorithm depends multiplicatively on $D$,
where $\hat{\epsilon}=N/D$ for some $N,D\in\N_+$.
By Theorem~\ref{thm:boundexp}, $\hat{\epsilon}(\Gamma) \geq |\Sigma_P|^{-1}|V|^{-1}$;
	so line~1 of Algorithm~\ref{algo:solve_DCC} is correct.
Therefore, as a corollary of Theorem~\ref{thm:mainreduction},
	we obtain that Algorithm~\ref{algo:solve_DCC} correctly decides \DCC.

Concerning its time complexity,
	the most time-expensive step of the algorithm is clearly line~4 of Algorithm~\ref{algo:solve_epsilonzeroDCC},
which relies on Theorem~\ref{Teo:MainAlgorithms} in order to solve an instance of \HTNC on input $\H_{\epsilon}(\Gamma)$.
From Theorem~\ref{thm:mainreduction} we have an upper bound on the size of $\H_{\epsilon}(\Gamma)$,
while Theorem~\ref{Teo:MainAlgorithms} gives us a pseudo-polynomial upper bound for the computation time.
Also, recall that we scale weights by a factor $D$ at lines~2-3 of Algorithm~\ref{algo:solve_epsilonzeroDCC},
where $\hat{\epsilon}=N/D$ for some $N,D\in\N_+$. Thus, by composition, Algorithm~\ref{algo:solve_DCC}
decides \HyDCC in a time $T^{\text{Algo3}}_{\Gamma}$ which is bounded as follows,
where $W\triangleq \max_{a\in A} |w_a|$ and $D\in \N_+$:
\[
	T^{\text{Algo3}}_{\Gamma}= O\Big(\big(| V_{\H_{\epsilon}(\Gamma)} | + |\A_{\H_{\epsilon}(\Gamma)}|\big) m_{\A_{\H_{\epsilon}}(\Gamma)}\Big)WD.
\]
Whence, taking into account the upper bound on the size of $\H_{\epsilon}(\Gamma)$ give by Theorem~\ref{thm:mainreduction}, the following holds:
\begin{align*}
	T^{\text{Algo3}}_{\Gamma} &= O\Big( \big(|\Sigma_P||V| + |\Sigma_P||\A| + |\Sigma_P|^2|V|\big)(|\Sigma_P| m_{\A} + |\Sigma_P|^2|V|\, |P|) \Big) WD \\
										 &= O\Big(\cancel{|\Sigma_P|^{2}|V|m_{\A}} + \cancel{|\Sigma_P|^3|V|^2|P|}
										 					+ |\Sigma_P|^{2}|\A|m_{\A} + |\Sigma_P|^3|\A||V||P|
																	+ |\Sigma_P|^{3}|V|m_{\A} + |\Sigma_P|^4|V|^2|P| \Big)WD \\
										 &= O\Big(|\Sigma_P|^{2}|\A|m_{\A} + |\Sigma_P|^3|\A||V||P| + |\Sigma_P|^{3}|V|m_{\A} + |\Sigma_P|^4|V|^2|P|\Big)WD
\end{align*}

By Theorem~\ref{thm:boundexp}, it is sufficient to check $\epsilon$-dynamic consistency for $\hat{\epsilon}=|\Sigma_P|^{-1} |V|^{-1}$.

Therefore, the following worst-case time bound holds on Algorithm~\ref{algo:solve_DCC}:
\[
	T^{\text{Algo3}}_{\Gamma} = O\Big(|\Sigma_P|^{3}|V||\A|m_{\A} + |\Sigma_P|^4|\A||V|^2|P| + |\Sigma_P|^{4}|V|^2m_{\A} + |\Sigma_P|^5|V|^3|P|\Big)W.
\]

Since $|\Sigma_P|\leq 2^{|P|}$, the (pseudo) singly-exponential time bound follows.
\end{proof}

At this point,
a natural question is whether the lower bound given by Theorem~\ref{thm:boundexp}
can be improved up to $\hat{\epsilon}(\Gamma)=\Omega(|V|^{-1})$.
In turn, this would improve the time complexity of Algorithm~\ref{algo:solve_DCC} by a factor $|\Sigma_P|$.
However, the following theorem shows that this is not the case,
by exhibiting a \CSTN for which $\hat{\epsilon}(\Gamma) = 2^{-\Omega(|P|)}$.
This proves that the lower bound given by Theorem~\ref{thm:boundexp} is (almost) asymptotically sharp.

\begin{Thm}\label{thm:exponential_epsilon}
For each $n\in \N_+$ there exists a \CSTN $\Gamma^n$ such that:
	\[\hat{\epsilon}(\Gamma^n) < 2^{-n+1} = 2^{-|P^n|/3+1}, \]
where $P^n$ is the set of boolean variables of $\Gamma^n$.
\end{Thm}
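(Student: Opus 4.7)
The plan is to construct $\Gamma^n$ by recursively nesting $n$ copies of the halving gadget displayed in Example~\ref{ex:CSTN-fractional} (the CSTN $\Gamma_{1/2}$ of \figref{FIG:fractional-CSTN-CSTN}), where each copy introduces three fresh propositional variables and forces the Planner's feasible reaction window to halve. I would begin by revisiting $\Gamma_{1/2}$ to extract its structural essence: when $X_1 = \top$, viability requires $Z_1 \leq 1$, while the $H_\epsilon$-constraints of Definition~\ref{def:epsilonconsistency} force $Y_1 \geq \epsilon$ (since $\Ord_{X_1}$ must be observed before $Y_1$ can be scheduled differently under the two $X_1$-scenarios) and $Z_1 \geq Y_1 + \epsilon$. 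This squeezes the timing of $Y_1$ into $[\epsilon, 1-\epsilon]$, so $\Gamma_{1/2}$ is $\epsilon$-dynamically-consistent precisely when $\epsilon \leq 1/2$, giving $\hat{\epsilon}(\Gamma_{1/2}) = 1/2 < 1 = 2^{-1+1}$.

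Next, I would define $\Gamma^n$ by cascading: layer $i\in[n]$ introduces three fresh propositional variables (say $X_i, Y_i, Z_i$ with $\Ord_{X_i}, \Ord_{Y_i}, \Ord_{Z_i}$ as observation events, the third one being used to conditionally activate layer $i+1$ only along the branch where layer $i$ still has fractional room to halve) together with labeled constraints that embed a scaled-down copy of the halving gadget inside the feasible window handed down from layer $i-1$. By construction, the innermost layer $n$ admits a feasible window of width at most $2^{-n}$ in the scenario that keeps descending into the nested halved sub-gadgets. Thus $\Gamma^n$ has $|P^n| = 3n$ propositional variables and polynomially many events and labeled temporal constraints in $n$.

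Then I would establish two facts. \textbf{First}, $\Gamma^n$ is dynamically-consistent: define an execution strategy $\sigma$ which, under every scenario, places each intermediate observation at the midpoint of its currently admissible sub-window and reacts to each observation as soon as the next layer's $\epsilon$-gap allows, for $\epsilon$ small enough (say $\epsilon = 2^{-n-1}$). A direct check against Definitions~\ref{def:consistency}~and~\ref{def:executionstrategy} shows that $\sigma$ is viable and dynamic, so by Lemma~\ref{lem:dynamic_impl_epsilon} we have $\hat{\epsilon}(\Gamma^n) > 0$. \textbf{Second}, every viable and $\epsilon$-dynamic execution strategy $\sigma \in \S_{\Gamma^n}$ forces the width $w_i$ of the feasible scheduling window at layer $i$ to satisfy $w_{i+1} \leq w_i/2$ along the descending scenario, because of the $H_\epsilon$-constraints propagated from $\Ord_{X_i}$ and $\Ord_{Y_i}$ combined with the $[0,w_i]$-type labeled bounds introduced by the gadget. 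By induction, $w_n \leq 2^{-n}$; and the innermost $H_\epsilon$-constraint forces $\epsilon \leq w_n \leq 2^{-n}$. Combining the two facts yields $0 < \hat{\epsilon}(\Gamma^n) \leq 2^{-n} < 2^{-n+1} = 2^{-|P^n|/3+1}$, as required.

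The main obstacle, I expect, will be the careful bookkeeping of the labeled constraints across layers so that (i) the well-definedness assumptions WD1, WD2, WD3 of Definition~\ref{def:cstn} are preserved by every chained gadget — in particular, each label used deep in the cascade must subsume the labels of all observation events on which it depends — and (ii) the inductive halving $w_{i+1} \leq w_i / 2$ is genuinely tight under the $H_\epsilon$-constraints, so that the strict inequality $\hat{\epsilon}(\Gamma^n) < 2^{-n+1}$ emerges from $w_n \leq 2^{-n}$ without losing a factor of two anywhere along the way. A secondary delicate point is choosing the third proposition per layer so that the two ``sibling'' scenarios at layer $i$ genuinely differ at $\Ord_{X_i}$ and $\Ord_{Y_i}$ (forcing the $\epsilon$-gap between their schedules), without accidentally collapsing the nested structure.
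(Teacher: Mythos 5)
Your overall architecture --- a cascade of $n$ three-variable gadgets whose admissible scheduling window shrinks geometrically, giving $|P^n|=3n$ and $\hat{\epsilon}=2^{-\Omega(n)}$ --- is exactly the paper's, and your analysis of the base gadget $\Gamma_{\frac{1}{2}}$ is correct. But there is a genuine gap at the central step, the recursion $w_{i+1}\leq w_i/2$. You attribute the halving to ``the $H_\epsilon$-constraints propagated from $\Ord_{X_i}$ and $\Ord_{Y_i}$ combined with the $[0,w_i]$-type labeled bounds.'' Those constraints only force the Planner to leave a gap of at least $\epsilon$ on each side of $Y_i$ inside the window $[X_i,Z_i]$; by themselves they yield $2\epsilon\leq w_i$ at each layer, not a geometric decay of the windows. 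And if the next layer's window were anchored to one structurally fixed sub-interval (which is what your phrase ``conditionally activate layer $i+1$ only along the branch where layer $i$ still has fractional room to halve'' suggests), the Planner would simply place $Y_i$ at distance $\epsilon$ from the unused endpoint and hand down a window of width $w_i-\epsilon$; after $n$ layers this forces only $\epsilon\leq 1/(n+1)$, a polynomial rather than exponential bound.

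The missing idea is the adversarial binary search. Nature, having observed where the Planner placed $Y_i$ inside $[X_i,Z_i]$, uses the observation at $Z_i$ to steer layer $i+1$ into whichever of the two sub-intervals $[X_i,Y_i]$ and $[Y_i,Z_i]$ is \emph{shorter}. Concretely, in the paper the $D_i$ constraints anchor the next window to $[X_i+5,\,Y_i+5]$ (width $\delta_i$) when $Z_i=\top$ and to $[Y_i+5,\,Z_i+5]$ (width $\Delta_i-\delta_i$) when $Z_i=\bot$, and the adversarial scenario $\hat{s}$ sets $Z_i=\top$ if and only if $\delta_i\leq\Delta_i/2$, so that $\Delta_{i+1}=\min(\delta_i,\Delta_i-\delta_i)\leq\Delta_i/2$ against \emph{every} strategy. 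This is precisely what makes $w_{i+1}\leq w_i/2$ hold, and it is the step your sketch leaves unjustified; without it the claimed halving fails. The remaining parts of your plan (the viability of a midpoint-splitting strategy, the WD-condition bookkeeping, and extracting strictness of the final inequality) are routine once this mechanism is in place, and the paper handles them via the explicit sequences $\Delta_1=1$, $0<\delta_i<\Delta_i$, $\Delta_{i+1}=\min(\delta_i,\Delta_{i}-\delta_{i})$.
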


\begin{proof}

For each $n\in\N_+$, we define a \CSTN $\Gamma^n\triangleq\langle V^n, A^n, L^n, \Ord^n, {\Ord}V^n, P^n \rangle$ as follows.

See Fig.~\ref{FIG:exponentialepsilon-counterexample-CSTN} for a clarifying illustration.

\begin{figure}[!htb]
\centering
\begin{tikzpicture}[arrows=->,scale=.95, node distance=1.5 and 1.5]
	\node[node, label={above:$Y_1?$}] (Y1) {$Y_1$};
 	\node[node, above left = of Y1, xshift=0ex, label={left:$\textbf{0}$}, label={above:\small $X_1?$}] (X1) {$X_1$};
	\node[node, above right = of Y1, xshift=0ex, label={above:\small $Z_1?$}] (Z1) {$Z_1$};
	\node[node, below = of Y1, yshift=-10ex, label={above:\small $Y_2?$}] (Y2) {$Y_2$};
	\node[node, above left = of Y2, xshift=0ex, label={above:\small $X_2?$}] (X2) {$X_2$};
	\node[node, above right = of Y2, xshift=0ex, label={above:\small $Z_2?$}] (Z2) {$Z_2$};
	\node[below left = of Y2, xshift=0ex] (fakeUL) {};
	\node[below right = of Y2, xshift=0ex] (fakeUR) {};
	\node[node, blackNode, scale=0.2, below = of Y2] (p1) {};
	\node[node, blackNode, scale=0.2, below = of p1, yshift=20ex] (p2) {};
	\node[node, blackNode, scale=0.2, below = of p2, yshift=20ex] (p3) {};
	\node[below left = of p3, yshift=8ex] (fakeDL) {};
	\node[below = of p3, yshift=8ex] (fakeDC) {};
	\node[below right = of p3, yshift=8ex] (fakeDR) {};
	\node[node, below = of fakeDC, yshift=-15ex, label={above:\small $Y_n?$}] (Yn) {$Y_n$};
	\node[node, above left = of Yn, xshift=0ex, label={above:\small $X_n?$}] (Xn) {$X_n$};
	\node[node, above right = of Yn, xshift=0ex, label={above:\small $Z_n?$}] (Zn) {$Z_n$};
	%arcs
	\draw[] (X1) to [] node[xshift=0ex, yshift=0ex,above] {\footnotesize $1, X_1 Y_1$} (Z1);
	\draw[] (X1) to [] node[xshift=-3ex, yshift=0ex,below] {\footnotesize $[2,2], \neg X_1$} (Y1);
	\draw[] (Y1) to [] node[xshift=3ex, yshift=0ex,below] {\footnotesize $[2,2], \neg Y_1$} (Z1);
	\draw[] (X1) to [bend right=45] node[xshift=-4ex, yshift=0ex,above] {\footnotesize $[5,5], Z_1$} (X2);
	\draw[] (Z1) to [bend left=45] node[xshift=7ex, yshift=0ex,above] {\footnotesize $[5,5], \neg Z_1 X_2 Y_2$} (Z2);
	\draw[] (Y1) to [] node[xshift=-3ex, yshift=0ex,above] {\footnotesize $[5,5], \neg Z_1$} (X2);
	\draw[] (Y1) to [] node[xshift=5ex, yshift=0ex,above]  {\footnotesize $[5,5], Z_1 X_2 Y_2$} (Z2);
	\draw[] (X2) to [] node[xshift=-3ex, yshift=0ex,below] {\footnotesize $[2,2], \neg X_2$} (Y2);
	\draw[] (Y2) to [] node[xshift=3ex, yshift=0ex,below] {\footnotesize $[2,2], \neg Y_2$} (Z2);
	\draw[] (Y2) to [] node[xshift=-3ex, yshift=0ex,above] {\footnotesize $[5,5], \neg Z_2$} (fakeUL);
	\draw[] (Y2) to [] node[xshift=5ex, yshift=0ex,above] {\footnotesize $[5,5], Z_2 X_3 Y_3$} (fakeUR);
	\draw[] (X2) to [bend right=45] node[xshift=-4ex, yshift=0ex,above] {\footnotesize $[5,5], Z_2$} (fakeUL);
	\draw[] (Z2) to [bend left=45] node[xshift=7ex, yshift=0ex,above] {\footnotesize $[5,5], \neg Z_2 X_3 Y_3$} (fakeUR);
	\draw[] (Xn) to [] node[xshift=-3ex, yshift=0ex,below] {\footnotesize $[2,2], \neg X_n$} (Yn);
	\draw[] (Yn) to [] node[xshift=3ex, yshift=0ex,below] {\footnotesize $[2,2], \neg Y_n$} (Zn);
	\draw[] (fakeDL) to [bend right=45] node[xshift=-5ex, yshift=1ex,above] {\footnotesize $[5,5], Z_{n-1}$} (Xn);
	\draw[] (fakeDR) to [bend left=45] node[xshift=8ex, yshift=1ex,above] {\footnotesize $[5,5], \neg Z_{n-1} X_n Y_n$} (Zn);
	\draw[] (fakeDC) to [] node[xshift=-2ex, yshift=-1ex,above] {\footnotesize $[5,5], \neg Z_{n-1}$} (Xn);
	\draw[] (fakeDC) to [] node[xshift=2ex, yshift=-1ex,above] {\footnotesize $[5,5], Z_{n-1} X_n Y_n$} (Zn);
\end{tikzpicture}
\caption{A CSTN $\Gamma^n$ such that $\hat{\epsilon}(\Gamma^n) = 2^{-\Omega(|P^n|)}$.}
\label{FIG:exponentialepsilon-counterexample-CSTN}
\end{figure}
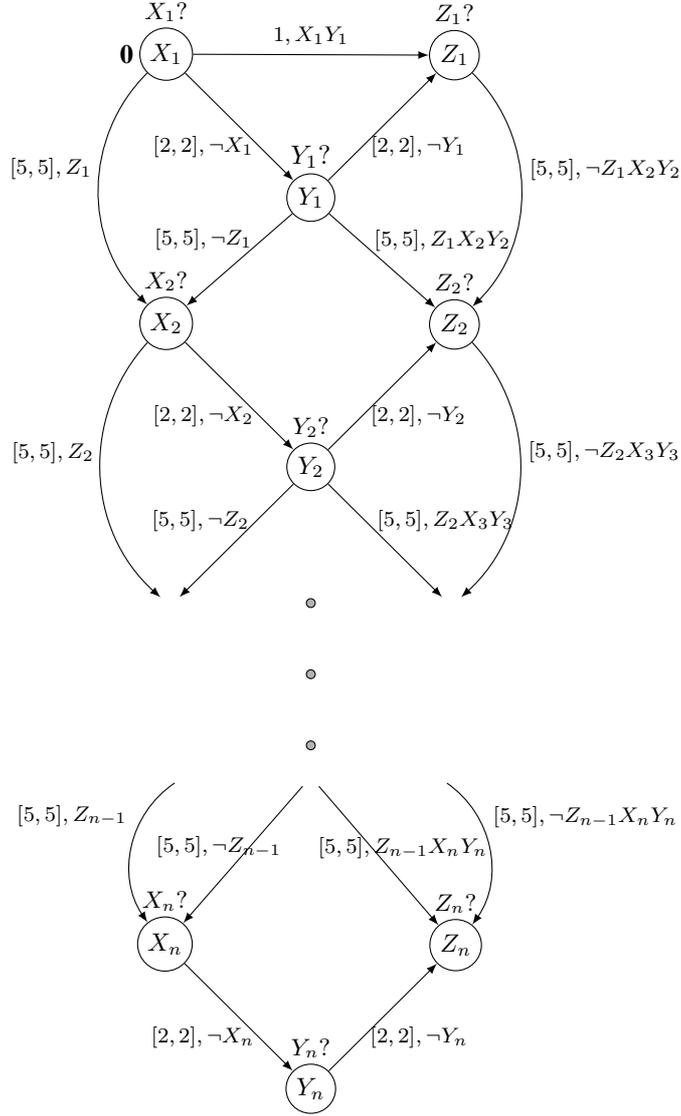

\begin{itemize}
\item $V^n\triangleq\{X_i, Y_i, Z_i \mid 1\leq i\leq n\}$;
\item $A^n\triangleq B \cup \bigcup_{i=1}^{n} C_i\cup \bigcup_{i=1}^{n-1} D_i$ \\ where:
\begin{itemize}
\item
$B\triangleq \{\langle X_1-v\leq 0, \lambda\rangle \mid v\in V^n\}\cup \{\langle Z_1 - X_1 \leq 1, X_1\wedge Y_1\rangle\}$;
\item
$C_i \triangleq \{\langle Y_i - X_i \leq 2, \neg X_i\rangle, \langle X_i - Y_i \leq -2, \neg X_i\rangle,
			\langle Z_i - Y_i \leq 2, \neg Y_i\rangle, \langle Y_i - Z_i \leq -2, \neg Y_i\rangle\}$;
\item
$D_i \triangleq \{\langle X_{i+1} - X_i \leq 5, Z_i\rangle, \langle X_i - X_{i+1}\leq -5, Z_i\rangle,
                         \langle X_{i+1} - Y_i\rangle\leq 5, \neg Z_i\rangle, \langle Y_i - X_{i+1}\leq -5, \neg Z_i\rangle,
		        \langle Z_{i+1}-Y_i\leq 5, Z_i\wedge X_{i+1}\wedge Y_{i+1}\rangle, \langle Y_i-Z_{i+1}\leq -5, Z_i\wedge X_{i+1}\wedge Y_{i+1}\rangle,
			 \langle Z_{i+1} - Z_i \leq 5, \neg Z_i \wedge X_{i+1}\wedge Y_{i+1}\rangle,
                         \langle Z_i - Z_{i+1}\leq -5, \neg Z_i \wedge X_{i+1}\wedge Y_{i+1}\rangle \}$;
\end{itemize}
\item $L^n(v)\triangleq\lambda$ for every $v\in V^n$;
${\Ord}V^n\triangleq V^n$;
$\Ord^n(v)\triangleq v$ for every $v\in {\Ord}V^n$;
$P^n\triangleq V^n$.
\end{itemize}

We exhibit an execution strategy $\sigma_n:\Sigma_{P^n}\rightarrow \Phi_{V^n}$,
	which we will show is dynamic and viable for $\Gamma^n$.

Let $\{\delta_i\}_{i=1}^{n}$ and $\{\Delta_i\}_{i=1}^{n}$ be two real valued sequences such that:
\[(1)\;\Delta_1\triangleq 1; (2)\; 0<\delta_i<\Delta_i;
(3)\; \Delta_{i}\triangleq \min(\delta_{i-1}, \Delta_{i-1}-\delta_{i-1}).\]
Then, the following also holds for every $1\leq i\leq n$:
\[ (4)\; 0 < \Delta_i \leq 2^{-i+1}, \]
where the equality holds if and only if $\delta_i = \Delta_i/2$.

Hereafter, provided that $s\in\Sigma_P$ and $\ell\in P^*$,
we will denote $\mathds{1}_{s(\ell)}\triangleq 1$	if $s(\ell)=\top$
	and $\mathds{1}_{s(\ell)}\triangleq 0$ if $s(\ell)=\bot$.

We are in the position to define $\sigma_n(s)$ for any $s\in\Sigma_P$:
\begin{itemize}
\item $[\sigma_n(s)]_{X_1}\triangleq 0$;
\item $[\sigma_n(s)]_{Y_1}\triangleq \delta_1 \mathds{1}_{s(X_1)} + 2\mathds{1}_{s(\neg X_1)}$;
\item $[\sigma_n(s)]_{Z_1}\triangleq$
	$\mathds{1}_{s(X_1\wedge Y_1)} + (2+[\sigma_n(s)]_{Y_1})\mathds{1}_{s(\neg X_1 \vee \neg Y_1)}$;
\item $[\sigma_n(s)]_{X_{i}}\triangleq 5 + [\sigma_n(s)]_{X_{i-1}}\mathds{1}_{s(Z_{i-1})} + \\
	+ [\sigma_n(s)]_{Y_{i-1}}\mathds{1}_{s(\neg Z_{i-1})}$, for any $2\leq i\leq n$;
\item $[\sigma_n(s)]_{Y_i}\triangleq$ $[\sigma_n(s)]_{X_i} + \delta_i\mathds{1}_{s(X_i)} + 2\mathds{1}_{s(\neg X_i)}$, for any $2\leq i \leq n$;
\item $[\sigma_n(s)]_{Z_{i}}\triangleq \big(5+[\sigma_n(s)]_{Y_{i-1}}\mathds{1}_{s(Z_{i-1})} + \\
                + [\sigma_n(s)]_{Z_{i-1}}\mathds{1}_{s(\neg Z_{i-1})}\big)\mathds{1}_{s(X_{i}\wedge Y_{i})} + \\
                + (2+[\sigma_n(s)]_{Y_{i}})\mathds{1}_{s(\neg X_i\vee \neg Y_i)}$, for any $2\leq i\leq n$;
\end{itemize}
Let us prove, by induction on $n\geq 1$, that $\sigma_n$ is viable and dynamic for $\Gamma^n$.

\begin{itemize}
	\item \textit{Base case}.
			Let $n=1$. Notice that $\Gamma^1$ almost coincides with the \CSTN $\Gamma_{\frac{1}{2}}$ described in Example~\ref{ex:CSTN-fractional};
				so, it is really needed that $0<\delta_1<1$.
				Then, by construction, $\sigma_1$ leads to the schedule depicted in Figure~\ref{fig:basecase}.
			This shows that $\sigma_1$ is viable and dynamic for $\Gamma^1$.
		%	(Notice that $\Gamma_1$ is equivalent to $\Gamma_{\not\N}$ from Section~\ref{sect:integerity})
\begin{figure}[!htb]
	\centering
	\begin{tikzpicture}[scale=0.8, level distance=50pt, sibling distance=1pt]
	\Tree [. \framebox{$[\sigma_1(s)]_{X_1}=0$}
		 \edge node[left,xshift=-1ex]{$s(X_1)=\top$};
		 [. \framebox{$[\sigma_1(s)]_{Y_1}=\delta_1$}
			\edge node[left, xshift=-1ex]{$s(Y_1)=\top$};
			[. \framebox{$[\sigma_1(s)]_{Z_1}=1$}
			]
			\edge node[right, xshift=1ex]{$s(Y_1)=\bot$};
			[. \framebox{$[\sigma_1(s)]_{Z_1}=\delta_1+2$}
			]
		 ]
		 \edge node[right,xshift=1ex]{$s(X_1)=\bot$};
		 [. \framebox{$[\sigma_1(s)]_{Y_1}=2$}
			\edge node[xshift=1.3ex]{$s(Y_1)=\top$ or $s(Y_1)=\bot$};
			[. \framebox{$[\sigma_1(s)]_{Z_1}=4$}
			]
		 ]
	]
	\end{tikzpicture}
	\caption{A viable and dynamic execution strategy for the base case $n=1$.}
	\label{fig:basecase}
\end{figure}
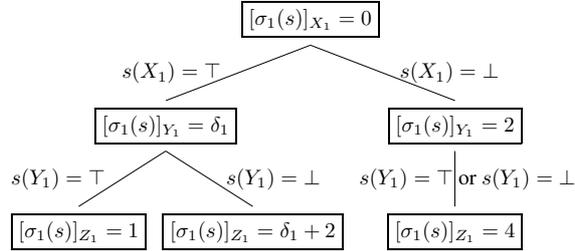
\item \textit{Inductive step}.
	Let us assume that $\sigma_{n-1}$ is viable and dynamic for $\Gamma^{n-1}$.
	Then, by construction, $[\sigma_n(s)]_v=[\sigma_{n-1}(s)]_v$ for every $s\in\Sigma_P$
	and $v\in V_{n-1}$. Hence, by induction hypothesis, $\sigma_n$ is viable and dynamic on $V_{n-1}$.
	Moreover, by construction, $\sigma_n$ leads to the schedule depicted in Figure~\ref{fig:inductivesteptop}
	and Figure~\ref{fig:inductivestepbot}. This shows that $\sigma_n$ is viable and dynamic even on $V_n\setminus V_{n-1}$.
	Thus, $\sigma_n$ is viable and dynamic for $\Gamma^n$, \ie $\Gamma^n$ is dynamically-consistent.
\begin{figure}[!htb]
		\centering
		\begin{tikzpicture}[scale=0.75, level distance=50pt,sibling distance=1pt]
		\Tree [.\framebox{$[\sigma_{n}(s)]_{Z_{n-1}}=[\sigma_{n-1}(s)]_{Z_{n-1}}$}
			\edge node[left, xshift=-1ex]{$s(Z_{n-1})=\top$};
		[. \framebox{$[\sigma_n(s)]_{X_n}= [\sigma_{n-1}(s)]_{X_{n-1}}+5$}
		 \edge node[left,xshift=-1ex]{$s(X_1)=\top$};
		 [. \framebox{$[\sigma_n(s)]_{Y_n}=[\sigma_n(s)]_{X_n} + \delta_n$}
			\edge node[left, xshift=-1ex]{$s(Y_n)=\top$};
			[. \framebox{$[\sigma_1(s)]_{Z_n}=[\sigma_{n-1}(s)]_{Y_{n-1}}+5$}
			]
			\edge node[right, xshift=1ex]{$s(Y_n)=\bot$};
			[. \framebox{$[\sigma_1(s)]_{Z_n}=[\sigma_n(s)]_{Y_n}+2$}
			]
		 ]
		 \edge node[right,xshift=1ex]{$s(X_n)=\bot$};
		 [. \framebox{$[\sigma_n(s)]_{Y_n}=[\sigma_n(s)]_{X_n}+2$}
			\edge node[xshift=1.3ex]{$s(Y_n)=\top$ or $s(Y_n)=\bot$};
			[. \framebox{$[\sigma_n(s)]_{Z_n}=[\sigma_n(s)]_{Y_n}+2$}
			]
		 ]
		]
		]
		\end{tikzpicture}
		\caption{A viable and dynamic execution strategy for the inductive step $n-1\leadsto n$ when $s(Z_{n-1})=\top$.}
		\label{fig:inductivesteptop}
	\end{figure}
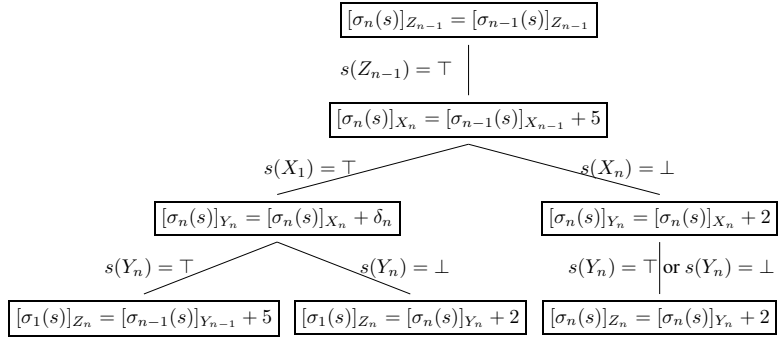
	\begin{figure}[!htb]
		\hspace*{-.45in}
		\subfloat[An execution strategy for the inductive step $n-1\leadsto n$ when $s(Z_{n-1})=\bot$]{
		\begin{tikzpicture}[scale=0.725,level distance=50pt,sibling distance=1pt]
		\Tree [. \framebox{$[\sigma_{n-1}(s)]_{Z_{n-1}}=[\sigma_{n}(s)]_{Z_{n-1}}$}
			\edge node[right, xshift=1ex]{$s(Z_{n-1})=\bot$};
			[. \framebox{$[\sigma_n(s)]_{X_n}= [\sigma_{n-1}(s)]_{Y_{n-1}}+5$}
			 \edge node[left,xshift=-1ex]{$s(X_n)=\top$};
			 [. \framebox{$[\sigma_n(s)]_{Y_n}=[\sigma_n(s)]_{X_n} + \delta_n$}
				\edge node[left, xshift=-1ex]{$s(Y_n)=\top$};
				[. \framebox{$[\sigma_n(s)]_{Z_n}=[\sigma_{n-1}(s)]_{Z_{n-1}}+5$}
				]
				\edge node[right, xshift=1ex]{$s(Y_n)=\bot$};
				[. \framebox{$[\sigma_n(s)]_{Z_n}=[\sigma_n(s)]_{Y_n}+2$}
				]
			 ]
			 \edge node[right,xshift=1ex]{$s(X_n)=\bot$};
		 	[. \framebox{$[\sigma_n(s)]_{Y_n}=[\sigma_n(s)]_{X_n}+2$}
				\edge node[xshift=1.3ex]{$s(Y_n)=\top$ or $s(Y_n)=\bot$};
				[. \framebox{$[\sigma_n(s)]_{Z_n}=[\sigma_n(s)]_{Y_n}+2$} ]
		 	]
			]
		]
		\end{tikzpicture}
		}
		\subfloat[An excerpt of $\Gamma^{n}$ relevant to the inductive step $n-1\leadsto n$.]{
			\quad
		\begin{tikzpicture}[arrows=->,scale=0.7, node distance=1.5 and 1.5]
			\node[node, label={above,yshift=.5ex:$Y_{n-1}?$}] (Y1) {$Y_{n-1}$};
			\node[node, above left = of Y1, xshift=0ex, label={above:\small $X_{n-1}?$}] (X1) {$X_{n-1}$};
			\node[node, above right = of Y1, xshift=0ex, label={above:\small $Z_{n-1}?$}] (Z1) {$Z_{n-1}$};
			\node[node, below = of Y1, yshift=-10ex, label={above,yshift=.5ex:\small $Y_n?$}] (Y2) {$Y_n$};
			\node[node, above left = of Y2, xshift=0ex, label={above:\small $X_n?$}] (X2) {$X_n$};
			\node[node, above right = of Y2, xshift=0ex, label={above:\small $Z_n?$}] (Z2) {$Z_n$};
				%arcs
			\draw[] (X1) to [] node[xshift=0ex, yshift=0ex,above] {\footnotesize $1, X_{n-1} Y_{n-1}$} (Z1);
			\draw[] (X1) to [] node[xshift=-3.6ex, yshift=0ex,below] {\footnotesize $[2,2], \neg X_{n-1}$} (Y1);
			\draw[] (Y1) to [] node[xshift=4.3ex, yshift=0ex,below] {\footnotesize $[2,2], \neg Y_{n-1}$} (Z1);
			\draw[] (X1) to [bend right=50] node[xshift=-4.75ex, yshift=-1ex,above] {\footnotesize $[5,5], Z_{n-1}$} (X2);
			\draw[] (Z1) to [bend left=50] node[xshift=-.75ex, yshift=-1ex,above] {\footnotesize $[5,5], \neg Z_{n-1} X_n Y_n$} (Z2);
			\draw[] (Y1) to [] node[xshift=-3.75ex, yshift=0ex,above] {\footnotesize $[5,5], \neg Z_{n-1}$} (X2);
			\draw[] (Y1) to [] node[xshift=6.25ex, yshift=0ex,above]  {\footnotesize $[5,5], Z_{n-1} X_n Y_n$} (Z2);
			\draw[] (X2) to [] node[xshift=-3ex, yshift=0ex,below] {\footnotesize $[2,2], \neg X_n$} (Y2);
			\draw[] (Y2) to [] node[xshift=3.5ex, yshift=0ex,below] {\footnotesize $[2,2], \neg Y_n$} (Z2);
		\end{tikzpicture}
			}
		\caption{The inductive step $n-1\leadsto n$ when $s(Z_{n-1})=\bot$.}
		\label{fig:inductivestepbot}
	\end{figure}
\end{itemize}

We claim that $\hat{\epsilon}(\Gamma^n) < 2^{-n+1}=2^{-|P^n|/3+1}$ for every $n\geq 1$.
Consider the following scenario $\hat{s}$ for $1\leq i\leq n$:
\[
\hat{s}(X_i)\triangleq \hat{s}(Y_i)\triangleq \top; \;\;\;
\hat{s}(Z_i)\triangleq \left\{
\begin{array}{ll}
\top, & \text{ if } \delta_i \leq \Delta_i/2 \\
\bot, & \text{ if } \delta_i > \Delta_i/2 \\
\end{array} \right.
\]
We shall assume that $\sigma$ is an execution strategy for $\Gamma^n$ and study necessary conditions to ensure that $\sigma$ is viable and dynamic,
provided that the observations follow the scenario $\hat{s}$. First, $\sigma$ must schedule $X_1$ at time $[\sigma(\hat{s})]_{X_1}=0$.
Then, since $\hat{s}(X_1)=\top$, we must have $0<[\sigma(\hat{s})]_{Y_1}<1$, because of the constraint $(Z_1-X_1\leq 1, X_1\wedge Y_1)$.
Stated otherwise, it is necessary that:
\[ 0 < [\sigma(\hat{s})]_{Y_1} - [\sigma(\hat{s})]_{X_1} < \Delta_1. \]
After that, since $\hat{s}(Y_1)=\top$, then $\sigma$ must schedule $Z_1$ at time $[\sigma(\hat{s})]_{Z_1}=1=\Delta_1$.
A moment's reflection reveals that almost identical necessary conditions now recur for $X_2, Y_2, Z_2$,
with the crucial variation that it will be necessary to require: $0 < [\sigma(\hat{s})]_{Y_2} < \Delta_2$.
Indeed, by proceeding inductively, it will be necessary that for every $1\leq i\leq n$ and every $n\in\N_+$:
\[0 < [\sigma(\hat{s})]_{Y_i} - [\sigma(\hat{s})]_{X_i} < \Delta_i.\]
As already observed in ($4$), $0<\Delta_n\leq 2^{-n+1}$.
Thus, any viable and dynamic execution strategy $\sigma$ for $\Gamma^n$ must satisfy:
\[\displaystyle0 < [\sigma(\hat{s})]_{Y_n} - [\sigma(\hat{s})]_{X_n} < \frac{1}{2^{n-1}}=\frac{1}{2^{|P^n|/3-1}}.\]
Therefore, once the Planner has observed the outcome $\hat{s}(X_n)=\top$ from the observation event $X_n$,
then he must react by scheduling $Y_n$ within time $2^{-n+1}=2^{-|P^n|/3+1}$ in the future \wrt $[\sigma(\hat{s})]_{X_n}$.

Therefore, $\hat{\epsilon}(\Gamma^n) < 2^{-n+1}=2^{-|P^n|/3+1}$ holds for every $n\geq 1$.
\end{proof}

\section{Related Works}\label{sect:relatedworks}
This section discusses of some alternative approaches offered by the current literature.
Recall that the article of \cite{TVP2003} has already been discussed in the introduction.

The work of \cite{Ci14} provided the first sound-and-complete
algorithm for checking the dynamic controllability of \CSTN{s} with Uncertainty (CSTNU),
and thus it can be employed for checking the dynamic consistency of \CSTN{s} as a special case.
The algorithm reduces the DC-Checking of CSTNUs to the problem of solving Timed Game Automata (TGA).
Nevertheless, no worst-case upper bound on the time complexity of the procedure was provided in~\cite{Ci14}.
Still, one may observe that solving TGAs is a problem of much higher complexity than solving \MPG{s}.
Compare the following known facts: solving 1-player
TGAs is $\PSPACE$-complete and solving 2-player TGAs is $\EXP$-complete;
on the contrary, the problem of determining \MPG{s} lies in
$\NP\cap\coNP$ and it is currently an open problem to prove whether it is in $\P$.
Indeed, the algorithm offered in~\cite{Ci14} has not been proven to be singly-exponentially time bounded,
to the best of our knowledge it is still open whether singly-exponential time TGA-based algorithms for \DCC do exist.

Next, a sound algorithm for checking the dynamic controllability of CSTNUs was given by Combi, Hunsberger, Posenato~in~\cite{CHP13}.
However, it was not shown to be complete. To the best of our knowledge,
it is currently open whether or not it can be extended in order to prove completeness \wrt the CSTNU model.

Regarding the particular CSTN model,~\cite{HPC15} presented,
at the same conference in which the preliminary version of this work appeared,
a sound-and-complete DC-checking algorithm for CSTNs.
It is based on the propagation of temporal constraints labeled by propositions.
However, to the best of our knowledge, the worst-case complexity of the algorithm is currently unsettled.
Also notice that the algorithm in \cite{HPC15} works on CSTN{s} only, regardless of the CHyTN model.
Indeed, we believe that our approach (based on tractable games plus reaction-time $\hat{\epsilon}$) and the approach
of \cite{HPC15} (based on the propagation of labeled temporal constraints) can benefit from each other;
for instance, recently~\cite{HP16} presented an alternative, equivalent semantics for $\epsilon$-dynamic consistency,
as well as a sound-and-complete $\epsilon$-DC-checking algorithm based on the propagation of labeled constraints.

Finally, in~\cite{CCR16}, it is introduced and studied $\pi$-DC,
a sound notion of dynamic consistency with an instantaneous reaction time,
\ie one in which the Planner is allowed to react to any observation at
the same instant of time in which the observation is made.
It turns out that $\pi$-DC is not equivalent to $\epsilon$-DC with $\epsilon=0$,
and that the latter is actually inadequate for modeling an instantaneous reaction-time.
Still, a simple reduction from $\pi$-DC-Checking to DC-Checking is identified;
combined with Theorem~\ref{thm:mainresult}, this provides a $\pi$-DC-Checking procedure whose time complexity
remains (pseudo) singly-exponential in the number of propositional variables.

\section{Conclusion}\label{sect:conclusions}
In this work we introduced the \emph{Conditional Hyper Temporal Network (CHyTN)} model,
a natural extension and generalization of both the \CSTN and the \HTN model which is obtained by blending them together.
We proved that deciding whether a given \CSTN or CHyTN is dynamically-consistent is \coNP-hard,
and that deciding whether a given CHyTN is dynamically-consistent is \PSPACE-hard,
provided that the input instances are allowed to include both multi-head and multi-tail hyperarcs.
In light of this, we focused on CHyTNs that allow only multi-head hyperarcs,
and offered the first deterministic (pseudo) singly-exponential time algorithm
for the problem of checking the dynamic consistency of multi-head CHyTNs,
also producing a dynamic execution strategy whenever the input CHyTN is dynamically-consistent.
As a byproduct, this provides the first sound-and-complete
(pseudo) singly-exponential time algorithm for checking the dynamic consistency of \CSTN{s}.
The algorithm is based on a novel connection between CHyTN{s} and Mean Payoff Games.
The presentation of such connection was mediated by the \HTN model.
The algorithm actually manages a few more general variants of the problem; 
\eg those where labels are not required to be conjunctions.
To summarize, at the heart of the algorithm a reduction to \MPG{s} is
mediated by the \HTN model. The CHyTN is dynamically-consistent if and only if the corresponding \MPG is
everywhere won, and a dynamic execution strategy can be conveniently read out by an everywhere winning positional strategy.
The size of this \MPG is at most polynomial in the number of the possible scenarios;
as such, the term at the corresponding time complexity exponent is linear,
at worst, in the number of the observation events. The same holds for the running time of the resulting algorithm.
In order to analyze the algorithm, we introduced a refined notion of dynamic consistency, named $\epsilon$-dynamic consistency,
also presenting a sharp lower bounding analysis on the critical
value of the reaction time $\hat{\varepsilon}$ where a CHyTN transits from being,
to not being, dynamically-consistent.

In future works we would like to settle the exact computational complexity of \DCC/\HyDCC,
as well as to extend our approach in order to check the dynamic controllability of \CSTN with Uncertainty~\cite{HPC12}.
An extensive experimental evaluation taking good account of optimizations and heuristics is also planned.

\paragraph*{Acknowledgments}
This work was supported by the \emph{Department of Computer Science, University of Verona, Verona, Italy} 
under \emph{Ph.D.} grant ``Computational Mathematics and Biology`` on a co-tutelle 
agreement with \emph{Laboratoire d'Informatique Gaspard-Monge (LIGM), Universit{\'e} Paris-Est, Marne-la-Vall{\'e}e, Paris, France}.

%\section*{References}
\bibliographystyle{plain}
\bibliography{biblio}
\end{document}